\documentclass[journal,final,twocolumn]{IEEEtran} % For LaTeX2e
\usepackage{amsmath,amssymb,amsfonts}
\usepackage{algorithmic}
\usepackage{graphicx}
\usepackage{textcomp}
\usepackage{subcaption}
\usepackage{url}
\usepackage{xcolor}
%\usepackage[sort&compress]{natbib}
%\setcitestyle{authoryear,round,citesep={;},aysep={,},yysep={;}}
\usepackage{overpic}

\usepackage{bbm}
\usepackage{amssymb}
\usepackage{amsmath}
\usepackage{amsthm}
\usepackage{aliascnt}
\usepackage{mathtools}

\usepackage{xr-hyper}
\definecolor{linkcolor}{HTML}{005075}
\definecolor{ForestGreen}{HTML}{009B55}
\usepackage[colorlinks = true, linkcolor = linkcolor, urlcolor=linkcolor, citecolor = linkcolor, filecolor=linkcolor]{hyperref}

% amsthm stuff
\theoremstyle{plain}
\newtheorem{theorem}{Theorem}[section]
% \newtheorem{prop}{Proposition}[section]
% \newtheorem{coro}{Corollary}
% \newtheorem{lem}{Lemma}

 % name for \autoref
 % name for \autoref
 % name for \autoref

% prop
\newaliascnt{proposition}{theorem}% alias counter "<newTh>"
\newtheorem{proposition}[proposition]{Proposition}
\aliascntresetthe{proposition}
 % name for \autoref

% coro
\newaliascnt{corollary}{theorem}% alias counter "<newTh>"
\newtheorem{corollary}[corollary]{Corollary}
\aliascntresetthe{corollary}
 % name for \autoref

% lem
\newaliascnt{lemma}{theorem}% alias counter "<newTh>"
\newtheorem{lemma}[lemma]{Lemma}
\aliascntresetthe{lemma}
 % name for \autoref

\theoremstyle{definition}
\newtheorem{definition}{Definition}[section]

\theoremstyle{remark}

\newtheorem*{rmk*}{Remark}
\newtheorem*{fact*}{Fact}

\newenvironment{remark}
{\pushQED{\qed}\rmk}
{\popQED\endrmk}

 % name for \autoref
 % name for \autoref
 % name for \autoref

% \newcommand{\revision}[1]{{\color{blue}{#1}}}
% \newcommand{\revisionII}[1]{{\color{orange}{#1}}}

\newcommand{\ie}{\emph{i.\@e.\@,~}}
\newcommand{\eg}{\emph{e.\@g.\@,~}}
\newcommand{\etc}{\emph{\textit{etc.\@}}}

\newcommand{\reals}{\ensuremath{\mathbb{R}}}
\newcommand{\nats}{\ensuremath{\mathbb{N}}}

\newcommand{\complex}{\ensuremath{\mathbb{C}}}
\newcommand{\sph}{\ensuremath{\mathbb{S}}}
\newcommand{\sphn}[1][n]{\ensuremath{\mathbb{S}^{{#1}-1}}}
\newcommand{\ip}[1]{\ensuremath{\langle #1\rangle}}

\newcommand{\iid}{\ensuremath{\overset{\text{iid}}{\sim}}}
\renewcommand{\i}{\mathrm{i}}
\newcommand{\DistBer}{\mathrm{Ber}}
\newcommand{\DistBinom}{\mathrm{Binom}}

\newcommand{\bmat}[2]{\ensuremath{\left[
      \begin{array}{#1}
        #2
      \end{array}
    \right]}}

\DeclareMathOperator*{\pr}{\mathbb P}

\DeclareMathOperator{\diag}{\mathrm{diag}}

\DeclareMathOperator{\range}{\mathcal{R}}
\DeclareMathOperator{\rre}{rre}
\DeclareMathOperator{\sigmoid}{\mathrm{s}}

\DeclareMathOperator{\grassmannian}{\Gamma}

\DeclareMathOperator{\Span}{\mathrm{span}}

\DeclareMathOperator{\proj}{\Pi}

\DeclareMathOperator{\relu}{ReLU}

\DeclareMathOperator{\E}{\mathbb{E}}

\renewcommand{\Re}{\mathop{\mathrm{Re}}}
\renewcommand{\Im}{\mathop{\mathrm{Im}}}

\newcommand{\1}{\mathbbm{1}}

\DeclarePairedDelimiterX{\kldivx}[2]{(}{)}{%
  #1\;\delimsize\|\;#2%
}
\newcommand{\kldiv}{\operatorname{D_{\mathrm{KL}}}\kldivx}

\theoremstyle{remark}

\newenvironment{myexample}
{\pushQED{\qed}\myexmp}
{\popQED\endmyexmp}

% FOR CROSSREFS (xr-hyper package)
\makeatletter
\newcommand*{\addFileDependency}[1]{% argument=file name and extension
  \typeout{(#1)}
  \@addtofilelist{#1}
  \IfFileExists{#1}{}{\typeout{No file #1.}}
}
\makeatother

% PEERREVIEW
%\usepackage{setspace}

\theoremstyle{plain}
\newtheorem*{sketch}{Theorem Sketch}
\newtheorem*{problem}{Open problem}

\title{A coherence parameter characterizing generative compressed sensing with Fourier measurements}

\author{Aaron Berk, Simone Brugiapaglia, Babhru Joshi, Yaniv Plan, Matthew Scott and \"Ozg\"ur Yilmaz
\thanks{This manuscript was submitted 19 July, 2022. A.~Berk is partially supported by Institut de valorisation des don\'ees (IVADO) and Centre de recherches en math\'ematiques (CRM) Applied Math Lab. S.~Brugiapaglia is partially supported by NSERC grant RGPIN-2020-06766 and the Faculty of Arts and Science of Concordia University. B.~Joshi is supported in part by the Pacific Institute for the Mathematical Sciences (PIMS). Y.~Plan is partially supported by an NSERC Discovery Grant (GR009284), an NSERC Discovery Accelerator Supplement (GR007657), and a Tier II Canada Research Chair in Data Science (GR009243). \"O.~Yilmaz is supported in part by an NSERC Discovery Grant (22R82411), and a UBC Data Science Institute grant.}
\thanks{A.~Berk is with McGill University, Montr\'eal, QC, Canada (aaron.berk@mcgill.ca)}
\thanks{S.~Brugiapaglia is with Concordia University, Montr\'eal, QC, Canada\newline (simone.brugiapaglia@concordia.ca)}
\thanks{B.~Joshi, Y.~Plan, M.~Scott \& \"O.~Yilmaz are with the University of British Columbia, Vancouver, BC, Canada ($\{$b.joshi, yaniv, matthewscott, oyilmaz$\}$@math.ubc.ca)}}

%\defcitealias{code-repo}{code repository}

% when ready, uncomment this line and remove the \include at the bottom of this file.
%\myexternaldocument{supp}

\begin{document}

\maketitle

% PEERREVIEW
% \doublespacing

\begin{abstract}
  In~\cite{bora2017compressed}, a mathematical framework was developed for
  compressed sensing guarantees in the setting where the measurement matrix is
  Gaussian and the signal structure is the range of a generative neural network
  (GNN). The problem of compressed sensing with GNNs has since been extensively
  analyzed when the measurement matrix and/or network weights follow a
  subgaussian distribution. We move beyond the subgaussian assumption, to
  measurement matrices that are derived by sampling uniformly at random rows of
  a unitary matrix (including subsampled Fourier measurements as a special
  case). Specifically, we prove the first known restricted isometry guarantee
  for generative compressed sensing (GCS) with \emph{subsampled isometries} and
  provide recovery bounds, addressing an open problem
  of~\cite[p.~10]{scarlett2022theoretical}. Recovery efficacy is characterized
  by the \emph{coherence}, a new parameter, which measures the interplay between
  the range of the network and the measurement matrix. Our approach relies on
  subspace counting arguments and ideas central to high-dimensional
  probability. Furthermore, we propose a regularization strategy for training
  GNNs to have favourable coherence with the measurement operator. We provide
  compelling numerical simulations that support this regularized training
  strategy: our strategy yields low coherence networks that require fewer
  measurements for signal recovery. This, together with our theoretical results,
  supports coherence as a natural quantity for characterizing GCS with
  subsampled isometries.
\end{abstract}

\begin{IEEEkeywords}
  Generative neural network, subsampled isometry, compressed sensing, coherence,
  Fourier measurements
\end{IEEEkeywords}

\section{Introduction}

The solution of underdetermined linear inverse problems has many important
applications including geophysics~\cite{kumar2015source, herrmann2012fighting}
and medical imaging~\cite{adcock2021compressive, lustig2008compressed}. In
particular, compressed sensing permits accurate and stable recovery of signals
that are well represented by one of a certain set of structural proxies (\eg
sparsity)~\cite{adcock2021compressive, foucart2017mathematical}. Moreover, this
recovery is effected using an order-optimal number of random
measurements~\cite{foucart2017mathematical}. In applications like medical
imaging~\cite{adcock2021compressive}, the measurement matrices under
consideration are derived from a bounded orthonormal system (a unitary matrix
with bounded entries), which complicates the theoretical analysis. Furthermore,
for such applications one desires a highly effective representation for encoding
the images. Developing a theoretical analysis that properly accounts for
realistic measurement paradigms and complexly designed image representations is
nontrivial in general~\cite{bora2017compressed, foucart2017mathematical,
  jalal2021robust}. For example, there has been much work validating that
\emph{generative neural networks} (GNNs) are highly effective at representing
natural signals~\cite{kingma2013auto, goodfellow2014generative,
  radford2015unsupervised}. In this vein, recent work has shown promising
empirical results for compressed sensing with realistic measurement matrices
when the structural proxy is a {GNN}. Other recent work has established recovery
guarantees for compressed sensing when the structural proxy is a {GNN} and the
measurement matrix is subgaussian~\cite{bora2017compressed}.
(See~\autoref{sec:related-work} for a fuller depiction of related aspects of
this problem.) However, an open problem is the
following~\cite[p.~10]{scarlett2022theoretical}:

\begin{problem}[subIso GCS]
  \label{problem:subIso-GCS}
  A theoretical analysis of compressed sensing when the measurement matrix is
  structured (\eg a randomly subsampled unitary matrix) and the signal model
  proxy is a GNN.
\end{problem}

Broadly, we approach a solution to \nameref{problem:subIso-GCS} as follows. For
a matrix $A \in \complex^{m \times n}$, a particular GNN \emph{architecture}
$G : \reals^{k} \to \reals^{n}$ and an unknown signal $x_{0} \in \range(G)$, the
range of $G$, we determine the conditions (on $A, G, x_{0}$, \etc) under which
it is possible to approximately recover $x_{0}$ from noisy linear measurements
$b = Ax_{0} + \eta$ by (approximately) solving an optimization problem of the
form
\begin{align}
  \label{eq:gnn-opt}
  \min_{z \in \reals^{k}} \|b - A G(z) \|_{2}. 
\end{align}
Above, $\eta \in \complex^{m}$ is some unknown corruption. Specifically, we are
interested in establishing sample complexity bounds (lower bounds on $m$) for
realistic measurement matrices $A$ --- where $A$ is an underdetermined matrix
randomly subsampled from a unitary matrix. Namely, the rows of $A$ have been
sampled uniformly at random without replacement from a unitary matrix
$U \in \complex^{n \times n}$. We next present a mathematical description of the
subsampling of the rows, described similarly
to~\cite[Section~4]{dirksen2015tail}.

\begin{definition}[subsampled isometry]
  \label{def:subsampled-isometry}
  Let $2 \leq m \leq n < \infty$ be integers and let
  $U \in \complex^{n \times n}$ be a unitary matrix. Let
  $\theta := (\theta_{i})_{i \in [n]}$ be an iid Bernoulli random vector:
  $\theta_{i} \iid \DistBer(m/n)$. Define the set
  $\mathcal{J} := \{ j : \theta_{j} = 1\}$ and enumerate the elements of
  $\mathcal{J}$ as $j_{1}, \ldots, j_{\tilde m}$ where
  $\tilde m := |\mathcal{J}|$ is a binomial random variable with
  $\E \tilde m = m$. Let $A \in \complex^{\tilde m \times n}$ be the matrix
  whose $i$th row is $\frac{\sqrt n}{\sqrt m}U_{j_{i}}, i \in [\tilde m]$. We
  call $A$ an $(m, U)$-subsampled isometry. When there is no risk of confusion
  we simply refer to $A$ as a subsampled isometry and implicitly acknowledge the
  existence of an $(m, \theta, U)$ giving rise to $A$.
\end{definition}

With $A$ so defined, $A$ is \emph{isotropic}:
$\E A^* A = \sum_{i =1}^{n} U_{i}^{*} U_{i} = I_{n}$ where $I_{n}$ is the
$n\times n$ identity matrix.

\begin{remark}
  \label{rmk:dft}
  An important example of a matrix isometry is the discrete orthogonal system
  given by the (discrete) Fourier basis. For example, ``in applications such as
  medical imaging, one is confined to using subsampled Fourier measurements due
  to the inherent design of the
  hardware''~\cite[p.~10]{scarlett2022theoretical}. Let
  $F = (F_{ij})_{i, j \in [n]}$ with
  $F_{ij} := \frac{1}{\sqrt n} \exp(2\pi \i (i-1)(j - 1)/n)$ for each
  $i, j \in [n]$. Here, we have used $\i$ to denote the complex number
  satisfying $\i^{2} = -1$. The matrix $F$ is known as the discrete Fourier
  transform (DFT) matrix, and has important roles in signal processing and
  numerical computation~\cite{adcock2021compressive, scarlett2022theoretical,
    jalal2021robust}. Thus, all of our results apply, in particular, when the
  measurement matrix is a subsampled DFT.
\end{remark}

Lastly, we introduce the kind of GNN to which we restrict our attention in this
work. Namely, we study $\relu$-activated expansive neural networks, where ReLU
is the so-called rectified linear unit defined as $\sigma(x) := \max(x, 0)$,
acting element-wise on the entries of $x$.

\begin{definition}[$(k, d, n)$-generative network]
  \label{def:generative-network}
  Fix the integers $2 \leq k := k_{0} \leq k_{1}, \ldots, k_{d}$ where
  $k_d := n < \infty$, and suppose for $i \in [d]$ that
  $W^{(i)} \in \reals^{k_{i}\times k_{i-1}}$. A $(k,d,n)$-generative network is
  a function $G : \reals^{k} \to \reals^{n}$ of the form
  \begin{align*}
    G(z) := W^{(d)} \sigma \left( \cdots W^{(2)} \sigma \left( W^{(1)} z \right) \right). 
  \end{align*}
\end{definition}

\begin{remark}
  \label{rmk:networks-with-biases}
  In practice, ReLU generative networks often use \emph{biases}, which are
  learned parameters in addition to the weight matrices. Such networks have the
  form
  \begin{align*}
    G_{\text{bias}}(z) %
    := W^{(d)} \sigma(\cdots \sigma(W^{(1)} z + b^{(1)})\cdots) + b^{(d)}, 
  \end{align*}
  $b^{(i)} \in \reals^{k_{i}}, i \in [d]$. Let
  $\mathcal{H} := \{ z \in \reals^{k+1} : z_{k+1} = 1\}$ and define the
  \textit{augmented matrices}
  $\tilde W^{(i)} := \bmat{cc}{W^{(i)} & b^{(i)}\\ \mathbf{0} & 1}$. In the last
  layer, remove the last row of the augmented matrix. Let
  $\tilde G : \reals^{k+1} \to \reals^{n}$ be the $(k+1,d,n)$-generative network
  having weight matrices $\tilde W^{(i)}$. Since
  $G_{\text{bias}} = \left.\tilde G\right|_{\mathcal{H}}$, we have
  $\range(G_\text{bias}) \subset \range(\tilde{G})$ (\ie any network with biases
  has range contained in a similar network without biases that has code
  dimension augmented by 1). Our theory applies directly to the network without
  biases, $\tilde{G}$, and due to the containment, all results given for
  $\tilde{G}$ extend to the biased network $G_\text{bias}$ (even so
  for~\autoref{thm:typical-coherence-gnn},
  see~\autoref{rmk:typical-coherence-random-weights-with-bias}).
\end{remark}

With these ingredients, we provide a suggestive ``cartoon'' of the main
theoretical contribution of this work, which itself can be found
in~\autoref{thm:subIso-GCS}.

\begin{sketch}[Cartoon]
  \label{sketch:cartoon}
  Let $G$ be a $(k, d, n)$-generative network, and $A$ a subsampled
  isometry. Suppose $\range(G)-\range(G)$ is ``incoherent'' with respect to the
  rows of $A$, quantified by a parameter $\alpha > 0$. If the number of
  measurements $m$ satisfies $m \gtrsim kdn \alpha^2$ (up to log
  factors), then, with high probability on $A$, it is possible to
  approximately recover an unknown signal $x_0 \in \range(G)$ from noisy
  underdetermined linear measurements $b = Ax_0 + \eta$ with nearly order-optimal
  error.
\end{sketch}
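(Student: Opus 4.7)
The strategy is to first establish a restricted isometry property (RIP) for $A$ on the signal difference set $\range(G) - \range(G)$: with high probability on $A$, every $v \in \range(G) - \range(G)$ satisfies
\begin{equation*}
(1-\delta)\|v\|_2^2 \;\le\; \|Av\|_2^2 \;\le\; (1+\delta)\|v\|_2^2,
\end{equation*}
and then to deduce the recovery bound by the usual argument.

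The geometric step exploits the piecewise-linearity of $\relu$ networks. A $(k,d,n)$-generative network $G$ partitions the input $\reals^k$ into polyhedral regions determined by the activation patterns of its $\relu$ units; a hyperplane-arrangement/Zaslavsky-type count bounds the number of regions by $N \le (cn/k)^{kd}$. Consequently $\range(G)$ is contained in a union of $N$ subspaces of dimension at most $k$, and $\range(G) - \range(G)$ in a union of at most $N^2$ subspaces of dimension at most $2k$.

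For the probabilistic step, I fix one such $2k$-dimensional subspace $V$ and take an $\epsilon$-net $\mathcal{N}_V \subset V \cap \sphn$ of cardinality at most $(3/\epsilon)^{2k}$. For each fixed $v \in V \cap \sphn$, by \autoref{def:subsampled-isometry},
\begin{equation*}
\|Av\|_2^2 = \frac{n}{m}\sum_{j=1}^n \theta_j \,|\langle U_j, v\rangle|^2, \qquad \E \|Av\|_2^2 = 1,
\end{equation*}
so Bernstein's inequality applied to the independent Bernoulli summands yields concentration controlled by the deterministic bound $\max_{j \in [n]} n\,|\langle U_j, v\rangle|^2$. The hypothesis that $\range(G) - \range(G)$ is ``$\alpha$-incoherent'' with respect to the rows of $A$ is designed precisely to bound this maximum uniformly by $\alpha^2$ over all unit $v \in \range(G) - \range(G)$. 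Union-bounding over $\mathcal{N}_V$ and over the $N^2$ subspaces contributes a logarithmic factor of order $\log(N^2|\mathcal{N}_V|) = O(kd\log n)$; balancing this against the Bernstein variance produces the sample complexity $m \gtrsim kdn\alpha^2$ up to log factors. A standard net-to-set extension then promotes the bound from $\mathcal{N}_V$ to all of $V$, and thus to the entire difference set.

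With RIP in hand the recovery bound is routine. If $\hat{z}$ approximately minimizes $\|b - AG(z)\|_2$ with additive slack $\tau$ and $\hat{x} := G(\hat{z})$, then $\|A\hat{x} - b\|_2 \le \|Ax_0 - b\|_2 + \tau = \|\eta\|_2 + \tau$. Since $\hat{x} - x_0 \in \range(G) - \range(G)$, RIP gives
\begin{equation*}
\|\hat{x} - x_0\|_2 \;\le\; \frac{\|A(\hat{x} - x_0)\|_2}{\sqrt{1-\delta}} \;\le\; \frac{2\|\eta\|_2 + \tau}{\sqrt{1-\delta}}.
\end{equation*}
The main obstacle is the RIP step, and specifically identifying the ``right'' definition of the coherence parameter $\alpha$: it must simultaneously upper-bound the Bernstein variance uniformly over the (nonlinear) range of $G$, survive the union bound over exponentially many linear pieces without blowing up, and remain small for practically trained networks. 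Reconciling these three demands is what drives the definition of $\alpha$ and the final quantitative form of the theorem.
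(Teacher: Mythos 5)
Your overall architecture is the same as the paper's: count the linear pieces of the network, cover the difference set by a polynomial-in-pieces number of low-dimensional subspaces, prove concentration of $\|Av\|_2$ on each subspace with Bernstein-type bounds whose parameters are controlled by the coherence, union bound, and then run the standard RIP-to-recovery argument (the paper, in \autoref{thm:subIso-GCS}, additionally handles $x_0\notin\range(G)$ via the projection $\proj_{\range(G)}x_0$, but for the cartoon your version suffices). Two of your choices differ only cosmetically from the paper: you use an $\epsilon$-net plus scalar Bernstein per subspace, whereas \autoref{lem:subgaussian-control-on-subspace} applies the matrix Bernstein inequality (\autoref{lem:matrix-bernstein}) to $\frac{n}{m}\sum_i(\theta_i-\frac{m}{n})\tilde U_i\tilde U_i^*$ restricted to the subspace, trading your $(3/\epsilon)^{2k}$ net for a dimensional prefactor $2k$; and you cover $\range(G)-\range(G)$ by the $N^2$ pairwise spans, whereas the paper writes $G(x)-G(y)$ as a doubled network $\bar G:\reals^{2k}\to\reals^n$ (\autoref{rmk:extend-to-range-difference}). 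Both give the same sample complexity up to logarithms.

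There is, however, one genuine gap, and it sits exactly where you say the difficulty lies. Your Bernstein step needs $\max_{j\in[n]}|\ip{U_j,v}|\leq\alpha$ at every net point $v\in\mathcal{N}_V\subset V\cap\sphn$, where $V$ is the span of a linear piece (or of a pair of pieces). But the pieces of $\range(G)-\range(G)$ are polyhedral \emph{cones}, not subspaces, so $V\cap\sphn$ contains unit vectors that are not differences of network outputs; the hypothesis ``$\range(G)-\range(G)$ is $\alpha$-coherent'' therefore does not bound the Bernstein parameters at your net points. (Taking the net inside the cone does not escape this either: the net-to-set extension for the quadratic form involves differences of points of the cone, which sweep out its full span.) This is precisely why the paper introduces the piecewise-linear expansion $\Delta$ in \autoref{def:minimal-subspace-covering} and assumes $\alpha$-coherence of $\Delta(\mathcal{G})$; since a convex cone $\mathcal{K}$ satisfies $\mathcal{K}-\mathcal{K}=\Span\mathcal{K}$, it suffices to assume coherence of the second difference $\mathcal{G}-\mathcal{G}$ (\autoref{rmk:theorem-ii-1-weakening}), but not merely of $\mathcal{G}=\range(G)-\range(G)$ itself. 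The fix is cheap --- state the incoherence hypothesis on the spans of the pieces (equivalently on $\mathcal{G}-\mathcal{G}$) and your argument goes through verbatim --- but as written the coherence bound is invoked at points where it is not assumed to hold.
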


The coherence parameter $\alpha$ is defined below in~\autoref{def:coherence}
using the measurement norm introduced in~\autoref{def:measurement-norm}. The
quantification of $\alpha$ is discussed thereafter, and fully elaborated
in~\autoref{sec:typical-coherence}. The notion of ``incoherence'' in the cartoon
above is specified in~\autoref{coro:G-G-RIP}. Coherence is related to the
concept of incoherent bases~\cite[p.~373]{foucart2017mathematical}, while the
measurement norm is closely related to the so-called $X$-norm
in~\cite{rudelson2008sparse}. Effectively, coherence characterizes the alignment
between the components comprising $\range(G)$ and the row vectors $A_{i}$ of the
subsampled isometry $A$.

\begin{definition}[Measurement norm]
  \label{def:measurement-norm}
  Let $U \in \complex^{n \times n}$ be a unitary matrix. Define the norm
  $\|\cdot\|_{U} : \complex^{n} \to [0, \infty)$ by
  \begin{align*}
    \|x\|_{U} := \|Ux\|_{\infty} = \max_{i \in [n]} \left| \ip{U_{i}, x} \right|. 
  \end{align*}
\end{definition}

\begin{definition}[Coherence]
  \label{def:coherence}
  Let $T \subseteq \reals^{n}$ be a set and
  $U \in \complex^{n\times n}$ a unitary matrix. For $\alpha > 0$, say
  that $T$ is $\alpha$-coherent with respect to $\|\cdot\|_U$ if
  \begin{align*}
    \sup_{x \in T\cap \sphn} \|x\|_U 
    %= \max_{i \in [n]} \sup_{x \in T \cap \sphn} \left|\ip{U_i, x}\right| 
    \leq \alpha. 
  \end{align*}
  We refer to the quantity on the left-hand side as the coherence.
\end{definition}

The idea is that the structural proxy/prior under consideration should be
\emph{incoherent} with respect to the measurement process. Thus, we desire that
chords in $\range(G)$\footnote{We will need to expand this set slightly
  via~\autoref{def:minimal-subspace-covering}.} be not too closely aligned (in
the sense controlled by $\alpha$) with the rows of $U$, with which the
subsampled isometry $A$ is associated. This follows a paradigm from classic
compressed sensing: \textit{democracy of the measurement process}, \ie no single
measurement should be essential for signal recovery, rather the measurements
should be used jointly. This is natural, since we are randomly sampling, and
any potential measurement may not be sampled. The standard definition of
coherence in compressed sensing, and its nonrandom origins, involves a bound on
inner products between columns of the sensing matrix~\cite[Chapter
5]{foucart2017mathematical}; it applies to deterministic measurement matrices,
and is somewhat different than our definition. There is also a definition of
\textit{incoherence} in~\cite[Chapter~12]{foucart2017mathematical} (see
also~\cite{donoho2003optimally}) or incoherence
property~\cite{candes2011probabilistic}. The latter two are defined in the
setup of random sampling and are somewhat analogous to the parameter we defined.
 
Though it is likely difficult to measure coherence precisely in practice, we
propose a computationally efficient heuristic that upper bounds the coherence.
For perspective, we show that if $G$ has Gaussian weights, one may take
$\alpha^2 \sim kd/n$ in~\nameref{sketch:cartoon} (see~\autoref{thm:subIso-GCS}
and~\autoref{thm:typical-coherence-gnn}), thereby giving a sample complexity,
$m$, proportional to $(kd)^2$ up to log factors. We leave improving the
quadratic dependence as an open question, discussed further
in~\autoref{sec:conclusion}.

We briefly itemize the main contributions of this paper:
\begin{itemize}
\item we introduce the \emph{coherence} for characterizing recovery efficacy via
  the alignment of the network's range with the measurement matrix
  (see~\autoref{def:coherence});
\item we establish a restricted isometry property for $(k,d,n)$-generative
  networks with subsampled isometries (see~\autoref{thm:G-RIP}
  and~\autoref{coro:G-G-RIP});
\item we prove sample complexity and recovery bounds in this setting
  (see~\autoref{thm:subIso-GCS});
\item we propose a regularization strategy for training GNNs with low coherence
  (see~\autoref{sec:numerics-setup}) and demonstrate improved sample complexity
  for recovery (see~\autoref{sec:numerics-results});
\item together with our theory, we provide compelling numerical simulations that
  support coherence as a natural quantity of interest linked to favourable deep
  generative recovery (see~\autoref{sec:numerics-results}).
\end{itemize}

\subsection{Related work}
\label{sec:related-work}

Theoretically, \cite{bora2017compressed} have analyzed compressed sensing
problems in the so-called generative prior framework, focusing on Gaussian or
subgaussian measurement matrices. This led to much follow-up work in the
generative prior framework, albeit none in the subsampled Fourier setting to our
knowledge. For example, \cite{berk2021deep} extends the analysis to the setting
of demixing with subgaussian matrices, while~\cite{liu2022non} analyzes the
semi-parametric single-index model with generative prior under Gaussian
measurements. Finally, exact recovery of the underlying latent code for GNNs
(\ie seeking $z \in \reals^{k}$ such that $x = G(u)$) has been analyzed;
however, these analyses rely on the GNN having a suitable structure with weight
matrices that possess a suitable
randomness~\cite{hand2019global,joshi2021plugin,joshi2019global,
  hand2018phase}. For a review of these and related problems,
see~\cite{scarlett2022theoretical}.

Promising empirical results of~\cite{jalal2021robust} suggest remarkable
efficacy of generative compressed sensing (GCS) in realistic measurement
paradigms. Furthermore, the authors provide a framework with theoretical
guarantees for using Langevin dynamics to sample from a generative
prior. Several recent works have developed sophisticated generative adversarial
networks (GANs) (which are effectively a type of GNN) for compressed sensing in
medical imaging~\cite{deora2020structure, mardani2018deep}. Other work has
empirically explored multi-scale (non-Gaussian) sampling strategies for image
compressed sensing using GANs~\cite{li2022fast}. Separately,
see~\cite{wentz2022genmod} for the use of GCS in uncertainty quantification of
high-dimensional partial differential equations with random inputs. Recently
popular is the use of untrained GNNs for signal recovery~\cite{ulyanov2018deep,
  heckel2019deep}. For instance, \cite{darestani2021accelerated} executed a
promising empirical investigation of medical image compressed sensing using
untrained GNNs.

Compressed sensing with subsampled isometries is well studied for sparse signal
recovery. The original works developing such recovery guarantees
are~\cite{candes2006robust, donoho2006compressed}, with improvements appearing
in~\cite{rudelson2008sparse,
  rauhut2010compressive}. See~\cite{foucart2017mathematical} for a thorough
presentation of this material including relevant
background. See~\cite[Sec.~4]{dirksen2015tail} for a clear presentation of this
material via an extension of generic chaining. In this setting, the best-known
number of log factors in the sample complexity bound sufficient to achieve the
restricted isometry property is due to~\cite{bourgain2014improved} with
subsequent extensions and improvements in~\cite{haviv2017restricted,
  chkifa2018polynomial, brugiapaglia2021sparse}. \cite{naderiPlanSparsityFree}
address compressed sensing with subsampled isometries when the structural proxy
is a neural network with random weights.

Using a notion of coherence to analyze the solution of convex linear inverse
problems was proposed in~\cite{candes2006robust,
  candes2007sparsity}. \cite{cape2019two} relate this notion to the matrix norm
$\|\cdot\|_{2\to\infty}$ (defined in~\autoref{sec:notation}) in order to analyze
covariance estimation and singular subspace recovery. Additionally,
see~\cite{donoho2003optimally} or~\cite[p.~373]{foucart2017mathematical} for a
discussion of \emph{incoherent bases}, and~\cite[p.~1034]{rudelson2008sparse}
for the analogue of our measurement norm in the sparsity case.

The present work relies on important ideas from high-dimensional probability,
such as controlling the expected supremum of a random process on a geometric
set. These ideas are well treated in~\cite{liaw2017simple, jeong2020sub};
see~\cite{vershynin2018high} for a thorough treatment of high-dimensional
probability. This work also relies on counting linear regions comprising the
range of a $\relu$-activated {GNN}. In this respect, we rely on a result that
appears in~\cite{naderi2021beyond}. Tighter but less analytically tractable
bounds appear in~\cite{serra2018bounding}, while a computational exploration of
region counting has been performed in~\cite{novak2018sensitivity}.

\subsection{Notation}
\label{sec:notation}

For an integer $n \geq 1$ denote $[n] := \{1, \ldots, n\}$. For
$x \in \complex^{n}$, denote the $\ell_{p}$ norm for $1 \leq p < \infty$ by
$\|x\|_{p} := \left( \sum_{i=1}^{n} |x_{i}|^{p} \right)^{1/p}$ and for
$p = \infty$ by $\|x\|_{\infty} := \max_{i \in [n]} |x_{i}|$. Here, if
$x \in \complex$ then $|x| = \sqrt{\Re(x)^{2} + \Im(x)^{2}}$ and the conjugate
is given by $\bar x := \Re(x) - \i\Im(x)$. If $X \in \complex^{m \times n}$ is a
matrix then the conjugate transpose is denoted
$X^{*} = (\bar X_{ji})_{j \in [n], i \in [m]}$. The $\ell_{p}$ norm for real
numbers, $1 \leq p \leq \infty$ is defined in the standard, analogous
way. Denote the real and complex sphere each by
$\sphn := \{ x : \|x\|_{2} = 1\}$, disambiguating only where unclear from
context. The operator norm of a matrix $X \in \complex^{n \times n}$, induced by
the Euclidean norm, is denoted $\|X \| := \sup_{\|z\|_{2} = 1}
\|Xz\|_{2}$. Unless otherwise noted, $X_{i}$ denotes the $i$th row of the matrix
$X$, viewed as a column vector. The Frobenius norm of $X$ is denoted $\|X\|_F$
and satisfies $\|X\|_F^2 = \sum_{i=1}^m \|X_i\|_2^2$. The matrix norm
$\|\cdot\|_{p \to q}$ for $1 \leq p, q \leq \infty$ is
$\|X\|_{p \to q} := \sup_{z \neq 0} \frac{\|Xz\|_q}{\|z\|_p}$. We use
$\Pi_\mathcal{L}$ to denote the standard $\ell_2$ projection operator onto the
set $\mathcal{L}$, which selects a single point lexicographically, if necessary,
to ensure uniqueness. $\DistBer(p)$ denotes the Bernoulli distribution with
parameter $p$; $\DistBinom(n, p)$ the binomial distribution for $n$ items with
rate $p$.

Throughout this work, $C > 0$ represents an absolute constant having no
dependence on any parameters, whose value may change from one appearance to the
next. Constants with dependence on a parameter will be denoted with an
appropriate subscript --- \eg $C_{\delta}$ is an absolute constant depending
only on a parameter $\delta$. Likewise, for two quantities $a, b$, if
$a \lesssim b$ then $a \leq Cb$; analogously for $a \gtrsim b$. Finally, given
two sets $A, B \subseteq \reals^{n}$, $A\pm B$ denotes the Minkowski
sum/difference: $A \pm B := \{ a \pm b : a \in A, b \in B\}$. Similarly, for
$a \in \reals^{n}$, $a - B := \{ a - b : b \in B\}$ and
$a B := \{ ab : b \in B\}$. The range of a function $f : \reals^n \to \reals^m$
is denoted $\range(f) := \{f(x) : x \in \reals^n\}$ (\eg if $X$ is a matrix then
$\range(X)$ denotes the column space of $X$). As above,
$\sigma(x) := \max(x,0)$, which may act element-wise on a vector.

\section{Main results}
\label{sec:main-results}

Proofs of results in this section are deferred to~\autoref{sec:proofs-results}.

Observe, if $G$ is a $(k,d,n)$-generative network, then $\range(G)$ and
$\mathcal{G} := \range(G) - \range(G)$ are unions of polyhedral cones
(see~\autoref{lem:upper-bound-number-of-cones}
and~\autoref{rmk:extend-to-range-difference}). Note that polyhedral cones
(\autoref{def:polyhedral-cone}) are convex. We introduce the following
definition to expand each cone into a full subspace.
\begin{definition}
  \label{def:minimal-subspace-covering}
  Let $\mathcal{C} \subseteq \reals^{n}$ be the union of $N$ convex cones:
  $\mathcal{C} = \bigcup\limits_{i=1}^N \mathcal{C}_i$. Define the piecewise
  linear expansion
  \begin{align*}
    \Delta(\mathcal{C}) := \bigcup\limits_{i=1}^N \Span(\mathcal{C}_i) = \bigcup_{i=1}^N
    (\mathcal{C}_i - \mathcal{C}_i),
  \end{align*}
\end{definition}
The second equality follows
from~\autoref{prop:cone-difference-is-subspace}. See~\autoref{rmk:Delta-properties}
for a list of properties of $\Delta$, including uniqueness. Note each cone
comprising $\range(G)$ has dimension at most $k$, hence $\Delta(\range(G))$ is a
union of linear subspaces each having dimension at most $k$.

We now present the main result of the paper, which establishes sample complexity
and recovery bounds for generative compressed sensing with subsampled
isometries. Below, $x^\perp := x_{0} - \proj_{\range(G)} x_{0}$.

\begin{theorem}[subsampled isometry GCS]
  \label{thm:subIso-GCS}
  Let $G : \reals^{k} \to \reals^{n}$ be a $(k, d, n)$-generative network with
  layer widths $k = k_0 \leq k_1, \ldots, k_d$ where $k_d := n$,
  $\varepsilon, \hat\varepsilon > 0$, $\mathcal{G} := \range(G) - \range(G)$ and
  $A \in \complex^{\tilde m \times n}$ a subsampled isometry associated with a
  unitary matrix $U \in \complex^{n \times n}$. If $\Delta(\mathcal{G})$ is
  $\alpha$-coherent with respect to $\|\cdot\|_{U}$, and
  \begin{align*}
    m \gtrsim \alpha^{2} n %
    \left( 2k \sum_{i = 1}^{d-1} \log \left( \frac{2ek_{i}}{k} \right) %
    + \log \frac{4k}{\varepsilon} \right),
  \end{align*}
  then, the following holds with probability at least $1-\varepsilon$ on the
  realization of $A$.

  For any $x_{0} \in \reals^{n}$, let $b := Ax_{0} + \eta$ where
  $\eta \in \complex^{\tilde m}$. Let $\hat x \in \reals^n$ satisfy
  $\|A\hat x - b\|_2 \leq \min_{x \in \range(G)} \|Ax - b\|_{2} +
  \hat{\varepsilon}$. Then,
  \begin{align*}
    \|\hat x - x_0\|_2 %
    \leq \|x^\perp\|_2 + 3\|Ax^\perp\|_2 + 3 \|\eta\|_2 + \frac{3}{2}\hat\varepsilon.
  \end{align*}
\end{theorem}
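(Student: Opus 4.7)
The plan is to reduce the statement to the restricted isometry property on $\mathcal{G}$ provided by~\autoref{coro:G-G-RIP} and then execute a standard optimization-error chain. First I would invoke~\autoref{coro:G-G-RIP} with RIP constant $\delta = 1/3$ (or whatever constant yields exactly the factor $3$ in the conclusion), verifying that the sample complexity assumed in the hypothesis absorbs the associated constants. On the resulting high-probability event (of probability at least $1-\varepsilon$ over $A$), every $v \in \mathcal{G}$ satisfies $\tfrac{2}{3}\|v\|_2 \leq \|Av\|_2 \leq \tfrac{4}{3}\|v\|_2$. All remaining steps are deterministic on this event.

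Next, set $x^\star := \proj_{\range(G)} x_0$, so that $x_0 = x^\star + x^\perp$ and $x^\star \in \range(G)$. Since $\hat x \in \range(G)$ as well, the difference $\hat x - x^\star$ lies in $\range(G) - \range(G) = \mathcal{G}$, so the lower RIP bound applies:
\begin{align*}
\|\hat x - x^\star\|_2 \;\leq\; \tfrac{3}{2}\,\|A(\hat x - x^\star)\|_2.
\end{align*}
The right-hand side is split by the triangle inequality as $\|A\hat x - b\|_2 + \|b - Ax^\star\|_2$. The near-optimality assumption on $\hat x$ yields $\|A\hat x - b\|_2 \leq \|Ax^\star - b\|_2 + \hat\varepsilon$, while $b - Ax^\star = A(x_0 - x^\star) + \eta = Ax^\perp + \eta$ gives $\|b - Ax^\star\|_2 \leq \|Ax^\perp\|_2 + \|\eta\|_2$. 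Combining these two estimates,
\begin{align*}
\|A(\hat x - x^\star)\|_2 \;\leq\; 2\|Ax^\perp\|_2 + 2\|\eta\|_2 + \hat\varepsilon.
\end{align*}

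Plugging this back into the RIP bound produces $\|\hat x - x^\star\|_2 \leq 3\|Ax^\perp\|_2 + 3\|\eta\|_2 + \tfrac{3}{2}\hat\varepsilon$, and a final triangle inequality $\|\hat x - x_0\|_2 \leq \|\hat x - x^\star\|_2 + \|x^\star - x_0\|_2 = \|\hat x - x^\star\|_2 + \|x^\perp\|_2$ yields the stated estimate. The genuine work is all concentrated in~\autoref{coro:G-G-RIP}: the subspace counting for $\Delta(\mathcal{G})$, the control of the expected supremum of the relevant empirical process under the measurement norm $\|\cdot\|_U$, and the translation of $\alpha$-coherence into a usable tail bound. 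The only nontrivial point in the present argument is bookkeeping the RIP distortion so that the constants match exactly; beyond that, the proof is essentially a three-line consequence of the RIP plus near-optimality of $\hat x$.
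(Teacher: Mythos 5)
Your proposal is correct and follows essentially the same route as the paper's proof: invoke \autoref{coro:G-G-RIP} with $\delta = 1/3$, use the lower restricted-isometry bound on $\hat x - \proj_{\range(G)}(x_0) \in \mathcal{G}$ together with the near-optimality of $\hat x$ and the decomposition $b - A\proj_{\range(G)}(x_0) = Ax^\perp + \eta$, then finish with a triangle inequality. The only cosmetic difference is that the paper combines an upper and a lower bound on $\|A\hat x - b\|_2$ rather than splitting $\|A(\hat x - \proj_{\range(G)}(x_0))\|_2$ directly, which is the same computation rearranged.
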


\begin{remark}
  \label{rmk:theorem-ii-1-weakening}
  Since $\range(G)$ is a union of polyhedral cones,
  $\mathcal{G} \subseteq \Delta(\mathcal{G}) \subseteq \mathcal{G} -
  \mathcal{G}$. Hence, it is sufficient to assume that
  $\mathcal{G} - \mathcal{G}$ be $\alpha$-coherent with respect to
  $\|\cdot\|_{U}$. This containment may aid practitioners to control $\alpha$
  since one may sample from $\mathcal{G} - \mathcal{G}$.
\end{remark}

\begin{remark}
  \label{rmk:modelling-error}
  The approximation error $\|x^\perp\|_2$ is controlled by the
  \emph{expressivity} of $G$, satisfying (by definition)
  \begin{align*}
    \|x^\perp\|_2 = \min_{u \in \reals^k} \|G(u) - x_0\|_2.
  \end{align*}
  The modelling error incurred via $\|Ax^\perp\|_{2}$ could be large compared to
  $\|x^\perp\|_{2}$:
  $\|Ax^\perp\|_2 \leq \frac{\sqrt{n}}{\sqrt{m}}\|x^\perp\|_2$ in
  general. However, if $G$ admits a good representation of the modelled data
  distribution, then one might expect this term still to be small. Certainly, if
  $x_{0} \in \range(G)$, the final expression in~\autoref{thm:subIso-GCS}
  reduces to
  \begin{align*}
    \|\hat x - x_0\|_2 %
    \leq 3 \|\eta\|_2 + \frac{3}{2}\hat\varepsilon.
  \end{align*}
  Otherwise, if $x^\perp$ is independent of $A$,
  $\E \|Ax^\perp\|_2 \leq \|x^\perp\|_2$ by Jensen's inequality. Thus, by
  Markov's inequality one has
  $\pr \left\{ \|Ax^\perp\|_2 \geq \kappa \|x^\perp\|_2\right\} \leq
  \kappa^{-1}$. Finally, a strategy for more precisely controlling
  $\|Ax^\perp\|_2$ is given in~\autoref{sec:control-approx-error} (see
  especially~\autoref{prop:cramer-chernoff-concentration}).
\end{remark}

Analogous to the restricted isometry property of compressed sensing or the
set-restricted eigenvalue condition of~\cite{bora2017compressed}, the proof
of~\autoref{thm:subIso-GCS} relies on a restricted isometry condition. This
condition guarantees that pairwise distances of points in $\range(G)$ are
approximately preserved under the action of $A$. We first state a result
controlling norms of points in $\range(G)$ under the action of $A$; control over
pairwise distances then follows easily.

\begin{theorem}[Gen-RIP]
  \label{thm:G-RIP}
  Let $A \in \complex^{\tilde m \times n}$ be a subsampled isometry associated
  with a unitary matrix $U \in \complex^{n\times n}$ and
  $\delta,\varepsilon > 0$. Suppose that $G : \reals^{k} \to \reals^{n}$ is a
  $(k, d, n)$-generative network with layer widths
  $k = k_0 \leq k_1, \ldots, k_d$ where $k_d := n$ and that $\Delta(\range(G))$
  is $\alpha$-coherent with respect to $\|\cdot\|_{U}$. If
  \begin{align*}
    m %
    \gtrsim \frac{\alpha^{2} n}{\delta^{2}} %
    \left( k \sum_{i = 1}^{d-1} \log \left( \frac{2ek_{i}}{k} \right) %
    + \log \frac{2k}{\varepsilon} \right),
  \end{align*}
  then with probability at least $1 - \varepsilon$ on the realization of $A$, it
  holds that
  \begin{align*}
    \sup_{x \in \range(G) \cap \sphn} \left| \|Ax\|_{2} - 1 \right| \leq \delta. 
  \end{align*}
\end{theorem}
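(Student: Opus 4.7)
The plan is to reduce the supremum over $\range(G)\cap\sphn$ to a supremum over a finite union of low-dimensional linear subspaces, apply a matrix Bernstein-type concentration inequality on each subspace, and close via a union bound. Since $\range(G)$ is a union of at most $N$ polyhedral cones each of dimension at most $k$, the inclusion $\range(G)\subseteq\Delta(\range(G))$ writes the latter as $\bigcup_{i=1}^N V_i$ with each $V_i\subseteq\reals^n$ a linear subspace of dimension $r_i\leq k$. Consequently,
\begin{align*}
\sup_{x \in \range(G)\cap\sphn}\bigl|\|Ax\|_2-1\bigr|
\leq \max_{i\in[N]}\sup_{x\in V_i\cap\sphn}\bigl|\|Ax\|_2-1\bigr|,
\end{align*}
and~\autoref{lem:upper-bound-number-of-cones} supplies the bound $\log N\lesssim k\sum_{i=1}^{d-1}\log(2ek_i/k)$.

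For a fixed subspace $V$ of dimension $r\leq k$, fix an orthonormal basis $Q\in\reals^{n\times r}$ with columns in $V$. Using the Bernoulli sampling model of~\autoref{def:subsampled-isometry}, I will write
\begin{align*}
Q^{*}A^{*}AQ - I_r \;=\; \sum_{j=1}^{n} X_j,\qquad
X_j := \tfrac{n}{m}\bigl(\theta_j-\tfrac{m}{n}\bigr)(U_jQ)^{*}(U_jQ),
\end{align*}
with $\theta_j\iid\DistBer(m/n)$; the $X_j$ are independent mean-zero Hermitian $r\times r$ matrices. Two inputs then feed matrix Bernstein: (i) unitarity of $U$ gives $\sum_{j=1}^n U_j^{*}U_j=I_n$, so $\sum_j(U_jQ)^{*}(U_jQ)=I_r$, controlling the variance; and (ii) $\alpha$-coherence of $\Delta(\range(G))$ forces $\|U_jQ\|_2\leq\alpha$ for each $j\in[n]$, since for any unit $z\in\reals^r$ the vector $Qz$ lies in $V\cap\sphn\subseteq\Delta(\range(G))\cap\sphn$, so $\|U_jQ\|_2=\sup_{\|z\|_2=1}|\ip{U_j,Qz}|\leq\alpha$. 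These yield $\|X_j\|\leq \tfrac{n}{m}\alpha^2$ and $\bigl\|\sum_j\E X_j^2\bigr\|\leq \tfrac{n}{m}\alpha^2$, so matrix Bernstein produces, for $0<\delta\leq 1$,
\begin{align*}
\pr\bigl\{\|Q^{*}A^{*}AQ-I_r\|\geq\delta\bigr\}
\;\leq\; 2k\exp\!\left(-c\,\frac{m\delta^2}{n\alpha^2}\right).
\end{align*}

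A union bound over the $N$ subspaces, combined with the bound on $\log N$, shows that the stated sample complexity drives the failure probability below $\varepsilon$. On the resulting good event, $\sup_{x\in V_i\cap\sphn}\bigl|\|Ax\|_2^2-1\bigr|=\|Q_i^{*}A^{*}AQ_i-I_{r_i}\|\leq\delta$ for every $i$, and the elementary inequality $|a-1|\leq|a^2-1|$ for $a\geq 0$ converts the squared statement to the desired linear one. I expect the main technical obstacle to be bookkeeping the logarithmic factors so that the $\log(2k/\varepsilon)$ term (from the rank prefactor in matrix Bernstein) and the $k\sum_{i=1}^{d-1}\log(2ek_i/k)$ term (absorbing $\log N$ via the subspace-counting lemma) emerge cleanly; a secondary subtlety is ensuring that the regime of $\delta$ is such that the variance term, rather than the bounded-term, dominates the Bernstein tail, which dictates the factor $\delta^{-2}$ rather than $\delta^{-1}$ in the sample complexity.
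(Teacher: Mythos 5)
Your proposal is correct and follows essentially the same route as the paper: you reduce to the union of at most $N$ subspaces comprising $\Delta(\range(G))$ with $\log N$ controlled by~\autoref{lem:upper-bound-number-of-cones}, then on each subspace apply matrix Bernstein to $\frac{n}{m}\sum_j(\theta_j-\frac{m}{n})$ times the projected rank-one matrices, using coherence for the almost-sure bound and unitarity for the variance --- which is precisely the content of~\autoref{lem:subgaussian-control-on-subspace} --- and finish with a union bound. Your elementary inequality $|a-1|\leq|a^2-1|$ for $a\geq 0$ plays the same role as the paper's implication $|1-z|>\delta \Rightarrow |z^2-1|>\max(\delta,\delta^2)$, and your worry about the Bernstein regime is resolved exactly as in the paper, since for $\delta\leq 1$ both branches of the $\min$ are at least of order $m\delta^2/(n\alpha^2)$.
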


\begin{remark}
  \label{rmk:sample-complexity-ambient-dim}
  In~\autoref{sec:typical-coherence} we show that $\alpha$ can have dependence
  on $n$ proportional to $n^{-1/2}$, ignoring log factors
  (see~\autoref{prop:dist} and~\autoref{thm:typical-coherence-gnn}). Therefore,
  the sample complexity can be independent of the ambient dimension $n$ (again
  ignoring log factors).
\end{remark}

\begin{remark}
  Analogous to~\autoref{rmk:theorem-ii-1-weakening}, in~\autoref{thm:G-RIP} it
  is sufficient to assume $\alpha$-coherence of $\range(G) - \range(G)$, since
  $\range(G) \subseteq \Delta(\range(G)) \subseteq \range(G) - \range(G)$.
\end{remark}

We now state the result that provides the notion of restricted isometry needed
for~\autoref{thm:subIso-GCS}. This result, which controls pairwise differences
of elements in $\range(G)$, is an immediate consequence of~\autoref{thm:G-RIP}
using the observation in~\autoref{rmk:extend-to-range-difference}.

\begin{corollary}[Restricted isometry on the difference set]
  \label{coro:G-G-RIP}
  Let $G : \reals^{k}\to \reals^{n}$ be a $(k, d, n)$-generative network with
  layer widths $k = k_0 \leq k_1, \ldots, k_d$ where $k_d := n$,
  $\mathcal{G} := \range(G) - \range(G)$, $\delta,\varepsilon > 0$, and suppose
  $A \in \complex^{\tilde m \times n}$ is a subsampled isometry associated with
  a unitary matrix $U \in \complex^{n \times n}$. Assume that
  $\Delta(\mathcal{G})$ is $\alpha$-coherent with respect to $\|\cdot\|_{U}$. If
  \begin{align*}
    m %
    \gtrsim \frac{\alpha^{2} n}{\delta^{2}} %
    \left( 2k \sum_{i = 1}^{d-1} \log \left( \frac{2ek_{i}}{k} \right) %
    + \log \frac{4k}{\varepsilon} \right),
  \end{align*}
  then with probability at least $1 - \varepsilon$ on the realization of $A$, it holds that
  \begin{align*}
    \sup_{x \in \mathcal{G}\cap \sphn} \left| \|Ax\|_{2} - 1 \right| \leq \delta. 
  \end{align*}
\end{corollary}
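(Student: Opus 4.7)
The plan is to reduce the corollary to Theorem~\ref{thm:G-RIP} by constructing an auxiliary $(2k, d, n)$-generative network whose range equals $\mathcal{G}$, and transferring the coherence hypothesis to it. This is presumably the observation referred to in Remark~\ref{rmk:extend-to-range-difference}: the difference set $\mathcal{G}$ itself admits a description as the range of a generative network with doubled code dimension.

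Concretely, I would define $\tilde G : \reals^{2k} \to \reals^n$ by taking the block-diagonal weights
\begin{align*}
  \tilde W^{(i)} := \bmat{cc}{W^{(i)} & 0 \\ 0 & W^{(i)}}, \quad i = 1, \ldots, d-1,
\end{align*}
and the ``subtracting'' final layer $\tilde W^{(d)} := \bmat{cc}{W^{(d)} & -W^{(d)}}$. A direct computation shows that for any $z^+, z^- \in \reals^k$ one has $\tilde G(z^+, z^-) = G(z^+) - G(z^-)$, hence $\range(\tilde G) = \mathcal{G}$. The resulting layer widths are $(\tilde k_0, \tilde k_1, \ldots, \tilde k_d) = (2k, 2k_1, \ldots, 2k_{d-1}, n)$, and the expansivity condition $\tilde k_0 \leq \tilde k_i$ follows from the corresponding inequality for $G$ (together with the mild compressed-sensing assumption $2k \leq n$).

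Next I would observe that $\Delta(\range(\tilde G)) = \Delta(\mathcal{G})$ directly from Definition~\ref{def:minimal-subspace-covering}, so the $\alpha$-coherence of $\Delta(\mathcal{G})$ with respect to $\|\cdot\|_U$ is exactly the hypothesis required to apply Theorem~\ref{thm:G-RIP} to $\tilde G$. Applying that theorem with accuracy $\delta$ and failure probability $\varepsilon$ yields the sample-complexity bound
\begin{align*}
  m \gtrsim \frac{\alpha^2 n}{\delta^2}\left(\tilde k_0 \sum_{i=1}^{d-1}\log\frac{2e\tilde k_i}{\tilde k_0} + \log\frac{2\tilde k_0}{\varepsilon}\right)
    = \frac{\alpha^2 n}{\delta^2}\left(2k \sum_{i=1}^{d-1}\log\frac{2ek_i}{k} + \log\frac{4k}{\varepsilon}\right),
\end{align*}
where the factors of $2$ inside the logs cancel and $\log(2\cdot 2k/\varepsilon) = \log(4k/\varepsilon)$. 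This matches the stated bound exactly, and the theorem's conclusion reads
\begin{align*}
  \sup_{x \in \range(\tilde G) \cap \sphn}\bigl|\|Ax\|_2 - 1\bigr| \leq \delta,
\end{align*}
which is the desired statement since $\range(\tilde G) = \mathcal{G}$.

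There is no substantive mathematical obstacle; the proof is essentially bookkeeping. The one point needing care is verifying the simplification of the log factors and confirming that the lifted network $\tilde G$ genuinely fits the $(2k, d, n)$-generative network template (notably that the expansivity condition carries over). If one prefers to avoid the mild $2k \leq n$ assumption, an alternative route is to repeat the covering argument underlying Theorem~\ref{thm:G-RIP} directly on the union of polyhedral cones comprising $\mathcal{G}$, noting that each difference $C_i - C_j$ of cones from $\range(G)$ is again a polyhedral cone of dimension at most $2k$, with a total cone count bounded by the square of that for $\range(G)$; this yields the same bound through the same combinatorics.
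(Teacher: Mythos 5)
Your proposal is correct and follows essentially the same route as the paper: the doubled block-diagonal network with final layer $\bmat{rr}{W^{(d)} & -W^{(d)}}$ is exactly the construction in \autoref{rmk:extend-to-range-difference}, and the corollary is then obtained, as in the paper, by applying \autoref{thm:G-RIP} to this $(2k,d,n)$-network, with the factors of $2$ cancelling inside the logarithms and $\log(2\cdot 2k/\varepsilon)=\log(4k/\varepsilon)$ giving the stated bound.
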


\begin{remark}
  In fact, the proof of~\autoref{thm:G-RIP} yields the stronger restricted
  isometry bound
  \begin{align*}
    \sup_{x \in \Delta(\range(G)) \cap\sphn} \left| \|Ax\|_2 - 1\right| \leq \delta.
  \end{align*}
  Consequently, the restricted isometry bound in~\autoref{coro:G-G-RIP} can be strengthened to 
  \begin{align*}
    \sup_{x \in \Delta(\mathcal{G})\cap\sphn} \left|\|Ax\|_2 - 1\right| \leq \delta.
  \end{align*}
  \vskip-10pt
\end{remark}

The proofs of~\autoref{thm:G-RIP} and~\autoref{thm:subIso-GCS} are deferred
to~\autoref{sec:proofs-results}. The result~\autoref{thm:G-RIP} follows directly
from~\autoref{lem:subgaussian-control-on-subspace}
and~\autoref{lem:upper-bound-number-of-cones}, the former of which is presented
next. It characterizes restricted isometry of a subspace incoherent with
$\|\cdot\|_{U}$. Its proof is deferred to~\autoref{sec:proofs-results}.

\begin{lemma}[RIP for incoherent subspace]
  \label{lem:subgaussian-control-on-subspace}
  Let $A \in \complex^{\tilde m \times n}$ be a subsampled isometry associated
  with a unitary matrix $U \in \complex^{n\times n}$. Suppose that
  $\mathcal{L} \subseteq \reals^{n}$ is a $k$-dimensional subspace that is
  $\alpha$-coherent with respect to $\|\cdot\|_{U}$. Then, for any
  $0 \leq \delta \leq 1$,
  \begin{align*}
    \pr \left\{ \sup_{x \in \mathcal{L} \cap \sphn} \left| \|Ax\|_{2} - 1 \right| \geq \delta \right\} %
    \leq 2k \exp \left( - \frac{C\delta^{2}m}{\alpha^{2}n} \right). % 
  \end{align*}
\end{lemma}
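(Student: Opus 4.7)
Fix $V \in \reals^{n \times k}$ whose columns form an orthonormal basis of $\mathcal{L}$, so that $\mathcal{L} \cap \sphn = \{ V u : u \in \sphn[k] \}$. The first reduction is to rewrite the target supremum as an operator-norm deviation of a $k \times k$ Hermitian PSD matrix:
\begin{equation*}
  \sup_{x \in \mathcal{L}\cap\sphn} \bigl| \|Ax\|_{2}^{2} - 1 \bigr|
  = \sup_{u \in \sphn[k]} \bigl| u^{*} V^{*} A^{*} A V u - 1 \bigr|
  \leq \bigl\| V^{*} A^{*} A V - I_{k} \bigr\|.
\end{equation*}
The claim will then follow from a matrix concentration estimate for $M := V^{*} A^{*} A V$, and in particular from writing $M$ as a sum of independent, bounded, rank-one PSD matrices and applying a standard matrix Chernoff inequality.

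Using the Bernoulli construction in~\autoref{def:subsampled-isometry} and letting $r_{j}$ denote the $j$-th row of $U$, one has the decomposition
\begin{equation*}
  M = \sum_{j=1}^{n} X_{j}, \qquad
  X_{j} := \frac{n}{m}\,\theta_{j}\, w_{j} w_{j}^{*},
  \qquad w_{j} := (r_{j} V)^{*} \in \complex^{k}.
\end{equation*}
The unitarity identity $\sum_{j} r_{j}^{*} r_{j} = U^{*} U = I_{n}$, combined with $V^{*} V = I_{k}$, yields $\E M = I_{k}$. The $\alpha$-coherence hypothesis is precisely what is needed to bound each summand in operator norm: for every real unit $u \in \sphn[k]$, the vector $V u$ lies in $\mathcal{L}\cap\sphn$, so
\begin{equation*}
  |r_{j} V u| = |\langle U_{j}, V u\rangle| \leq \|V u\|_{U} \leq \alpha.
\end{equation*}
A short complexification argument (comparing real and complex unit-sphere suprema of a complex linear functional) then upgrades this to $\|w_{j}\|_{2} \lesssim \alpha$, hence $\|X_{j}\| \lesssim (n/m)\,\alpha^{2}$ almost surely.

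At this point a standard matrix Chernoff bound for a sum of independent, PSD, uniformly bounded $k\times k$ matrices with mean $I_{k}$ yields, for every $0\leq \delta \leq 1$,
\begin{equation*}
  \pr\bigl\{\|M - I_{k}\| \geq \delta\bigr\}
  \leq 2k\, \exp\!\left(-\frac{C\delta^{2} m}{\alpha^{2} n}\right),
\end{equation*}
for an absolute constant $C>0$. The final link is the elementary fact that for $0\leq \delta \leq 1$, $|t^{2}-1|\leq \delta$ forces $t \in [\sqrt{1-\delta},\sqrt{1+\delta}] \subseteq [1-\delta, 1+\delta]$, so the squared-norm deviation translates directly into $|\,\|Ax\|_{2} - 1\,|\leq \delta$ uniformly over $\mathcal{L} \cap \sphn$. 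I do not expect any serious technical obstacle: the only substantive step is converting the real-$x$ coherence hypothesis into a uniform bound on the complex column vectors $w_{j}$, after which matrix Chernoff delivers exactly the advertised tail.
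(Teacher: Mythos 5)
Your proposal is correct and follows essentially the same route as the paper: reduce the supremum to the operator-norm deviation of the $k\times k$ rank-one sum $\sum_{j}\frac{n}{m}\theta_{j}(V^{*}U_{j})(V^{*}U_{j})^{*}$ (the paper does this via projecting onto $\mathcal{L}$ and an isometric embedding into $\reals^{k}$), use coherence to bound each summand by roughly $\frac{n}{m}\alpha^{2}$ (your $\sqrt{2}$ complexification factor is harmlessly absorbed into $C$), apply an off-the-shelf matrix concentration bound, and convert the squared-norm deviation to $\left|\|Ax\|_{2}-1\right|$. The only cosmetic difference is that the paper centers the sum and invokes matrix Bernstein, whereas you keep the uncentered PSD sum and invoke matrix Chernoff; both yield the same $2k\exp\left(-C\delta^{2}m/(\alpha^{2}n)\right)$ tail.
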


\begin{remark}
  Convincing empirical results of~\cite{novak2018sensitivity} suggest the number
  of linear regions for empirically observed neural networks may typically be
  linear in the number of nodes, rather than exponential in the width. Such a
  reduction would be a boon for the sample complexity obtained
  in~\autoref{thm:G-RIP}, which depends on the number of linear regions
  comprising $\range(G)$ (using~\autoref{lem:upper-bound-number-of-cones};
  see~\autoref{sec:proofs-results}).
\end{remark}

\section{Typical Coherence}
\label{sec:typical-coherence}

Proofs for results in this section are deferred to
\autoref{sec:proofs-coher}. The first result of this section establishes a lower
bound on the coherence parameter. Together with~\autoref{coro:G-G-RIP} this
yields a quadratic ``bottleneck'' on the sample complexity in terms of the
parameter $k$.

\begin{proposition}\label{prop:dist}
  For a unitary matrix $U \in \complex^{n \times n}$, any $k$-dimensional
  subspace $T \subseteq \reals^n$ has coherence with respect to $\|\cdot \|_{U}$
  of at least $\sqrt{\frac{k}{n}}$. Furthermore, this lower bound is tight.
\end{proposition}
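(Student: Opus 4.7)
My plan is a pigeonhole argument on the rows of $U$ combined with an alignment step, followed by an explicit tightness construction. Let $\{v_1, \ldots, v_k\}$ be an orthonormal basis of $T \subseteq \reals^n$ and set $W := U[v_1 \cdots v_k] \in \complex^{n \times k}$. For any $x = \sum_j c_j v_j \in T \cap \sphn$ with $c \in \reals^k$ a unit vector, one has $(Ux)_i = W_i^T c$ where $W_i \in \complex^k$ is the $i$th row of $W$, so the coherence of $T$ equals $\sup_{c \in \reals^k,\, \|c\|_2 = 1} \|Wc\|_\infty$.

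First I would exploit unitarity of $U$ to derive the Frobenius identity $\|W\|_F^2 = k$: since $W^* W = [v_1 \cdots v_k]^T U^* U [v_1 \cdots v_k] = I_k$, we get $\|W\|_F^2 = \trace(W^* W) = k$. Expanding as a sum over rows, $\sum_{i=1}^n \|W_i\|_2^2 = k$, and by the averaging inequality there exists $i^\star \in [n]$ with $\|W_{i^\star}\|_2^2 \geq k/n$. Next, I would construct a unit $c^\star \in \reals^k$ aligned with $W_{i^\star}$ so that $|W_{i^\star}^T c^\star| \geq \sqrt{k/n}$, yielding $x^\star := \sum_j c^\star_j v_j \in T \cap \sphn$ with $\|x^\star\|_U \geq |(Ux^\star)_{i^\star}| \geq \sqrt{k/n}$, which establishes the lower bound. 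When $U$ is real, the choice $c^\star := W_{i^\star}/\|W_{i^\star}\|_2$ works directly; in the complex case, writing $W_{i^\star} = a + ib$ with $a, b \in \reals^k$, the optimization $\sup_{\|c\|_2 = 1} \left((a^T c)^2 + (b^T c)^2\right)$ reduces to the largest eigenvalue of the rank-$\leq 2$ PSD matrix $aa^T + bb^T$ with trace $\|W_{i^\star}\|_2^2$.

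For tightness, I would take $U = I_n$ and $T := \Span\{w_1, \ldots, w_k\}$, where each $w_j := \sqrt{k/n}\,\mathbbm{1}_{B_j}$ is the normalized indicator of the $j$th block in a partition of $[n]$ into $k$ blocks of size $n/k$ (assuming $k \mid n$ for simplicity, with obvious adjustments otherwise). For any unit $x = \sum_j c_j w_j \in T$, a direct computation gives $\|x\|_U = \|x\|_\infty = \sqrt{k/n}\, \max_j |c_j| \leq \sqrt{k/n}$, with equality at $c = e_1$, so the coherence of $T$ is exactly $\sqrt{k/n}$. The main obstacle is the alignment step in the genuinely complex case, where the natural complex direction $c \propto \overline{W_{i^\star}}/\|W_{i^\star}\|_2$ is not in $\reals^k$ and a more careful real-to-complex operator-norm argument is needed to match the $\sqrt{k/n}$ bound rather than a weaker constant.
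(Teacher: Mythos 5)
Your lower-bound skeleton is essentially the paper's: the identity $W^*W=I_k$, the Frobenius computation $\|W\|_F^2=k$, and the averaging step producing a row with $\|W_{i^\star}\|_2^2\geq k/n$ is exactly \autoref{lem:rows_big_max}, and your block-indicator subspace is a valid (real) tightness example for $U=I$, in place of the paper's first $k$ DFT columns. The problem is precisely the step you flag yourself, and it is a genuine gap --- moreover one that cannot be closed in the form you want. Restricting to real coefficients $c$, the best you can guarantee is $\sup_{\|c\|_2=1}|W_{i^\star}^\top c|^2=\lambda_{\max}(aa^\top+bb^\top)\geq\tfrac12\bigl(\|a\|_2^2+\|b\|_2^2\bigr)=\tfrac12\|W_{i^\star}\|_2^2$, i.e.\ only $\sqrt{k/(2n)}$, and the factor $\sqrt2$ is actually attained: take $n=k=2$, $U=\tfrac{1}{\sqrt2}\bigl(\begin{smallmatrix}1 & \i\\ \i & 1\end{smallmatrix}\bigr)$ and $T=\reals^2$; then every real unit vector $x$ satisfies $|(Ux)_1|=|(Ux)_2|=\tfrac{1}{\sqrt2}$, so the coherence of this real subspace is $\tfrac{1}{\sqrt2}<1=\sqrt{k/n}$. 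Hence no ``more careful real-to-complex operator-norm argument'' will recover $\sqrt{k/n}$ for a literally real $T$ paired with a genuinely complex unitary $U$; your worry is not a technicality but an obstruction.

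The paper's proof avoids this by working over $\complex$: its $\mathcal{T}$ consists of $k$-dimensional subspaces of $\complex^n$ and the supremum runs over the complex unit sphere, where $\sup_{\|\nu\|_2=1}|A_i\nu|=\|A_i\|_2$ holds with no loss (choose $\nu$ proportional to $\overline{A_i}$); the opening reduction to $U=I$ by rotational invariance likewise needs complex subspaces, since $UT$ is not a subspace of $\reals^n$ when $U$ is complex. So to complete your argument in the same spirit, prove the complexified statement: allow $c\in\complex^k$ (equivalently, replace $T$ by $T+\i T$ and take the supremum over its complex unit sphere). Then $c^\star=\overline{W_{i^\star}}/\|W_{i^\star}\|_2$ gives $|W_{i^\star}^\top c^\star|=\|W_{i^\star}\|_2\geq\sqrt{k/n}$ and the rest of your write-up goes through verbatim; if you insist on real $T$ and real test vectors, the correct general constant is $\sqrt{k/(2n)}$. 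One smaller point: your tightness construction certifies tightness only for $U=I$ (or real orthogonal $U$, by pulling the subspace back); exhibiting tightness for an arbitrary complex unitary $U$ again forces a complex subspace, which is exactly what the paper's DFT-column example combined with the rotation $T\mapsto U^*T$ provides.
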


Under mild assumptions, when the generative network has random weights one may
show that this is a typical coherence level between the network and the
measurement operator.

\begin{theorem}
  \label{thm:typical-coherence-gnn}
  Let $U \in \complex^{n\times n}$ be a unitary matrix and $G$ be a
  $(k, d, n)$-generative network with layer widths
  $k = k_0 \leq k_1, \ldots, k_d$ where $k_d := n$. Let the last weight matrix
  of $G$, $W^{(d)}$, be iid Gaussian: $W^{(d)}_{ij} \iid \mathcal{N}(0,1)$,
  $i \in [k_d], j \in [k_{d - 1}]$. Let all other weights be arbitrary and
  fixed. Then, for any $\gamma \geq 0$, it holds with probability at least
  $1-2\exp(- \gamma^2)$ that $\Delta(\range(G)-\range(G))$ is $\alpha$-coherent
  with respect to $\|\cdot\|_U$, where
  \begin{align*}
    \alpha %
    \lesssim \sqrt{\frac{k}{n}} + \sqrt{\frac{\log{n}}{n}} + \sqrt{\frac{k}{n}\sum\limits_{i=1}^{d-1} \log{\frac{2e k_i}{k}}} + \frac{\gamma}{\sqrt{n}}.
  \end{align*}
\end{theorem}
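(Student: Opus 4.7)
The plan is to reduce $\alpha$ to a maximum of projection norms and then control each term via classical Gaussian concentration. First, I would observe that
\begin{align*}
\alpha = \max_{i \in [n]} \max_{s \in [S]} \|P_{\mathcal{L}_s} U_i\|_2,
\end{align*}
where $\{\mathcal{L}_s\}_{s=1}^S$ are the subspaces whose union is $\Delta(\mathcal{G})$ and $P_\mathcal{L}$ denotes orthogonal projection onto $\mathcal{L}$. By~\autoref{lem:upper-bound-number-of-cones}, $\range(G)$ is a union of at most $N$ polyhedral cones with $\log N \lesssim k \sum_{i=1}^{d-1}\log(2ek_i/k)$; hence $S \leq N^2$ and each $\mathcal{L}_s$ has dimension $r_s \leq 2k$. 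The crucial structural observation is that the activation regions partitioning $\reals^{k}$ depend only on $W^{(1)}, \ldots, W^{(d-1)}$ and not on $W^{(d)}$, so each subspace in $\Delta(\mathcal{G})$ takes the form $\mathcal{L}_s = W^{(d)} T_s$ for a \emph{deterministic} subspace $T_s \subseteq \reals^{k_{d-1}}$ of dimension $r_s \leq 2k$.

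Next, I would fix such an $\mathcal{L}_s$ and a unit vector $y = U_i$, and pick $V_s \in \reals^{k_{d-1} \times r_s}$ with orthonormal columns spanning $T_s$. By rotational invariance of $W^{(d)}$, the matrix $Z_s := W^{(d)} V_s \in \reals^{n \times r_s}$ has iid $\mathcal{N}(0,1)$ entries, so
\begin{align*}
\|P_{\mathcal{L}_s} y\|_2 \leq \frac{\|Z_s^{\!\top} y\|_2}{\sigma_{\min}(Z_s)}.
\end{align*}
A second application of rotational invariance gives $Z_s^{\!\top} y \sim \mathcal{N}(0, I_{r_s})$, and Lipschitz concentration of the Gaussian norm yields $\|Z_s^{\!\top} y\|_2 \leq \sqrt{r_s} + t$ with probability at least $1 - 2e^{-t^2/2}$. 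A standard Davidson--Szarek / Gordon bound on the smallest singular value of an iid Gaussian matrix gives $\sigma_{\min}(Z_s) \geq \sqrt{n} - \sqrt{r_s} - t$ with the same tail.

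Union-bounding over all $i \in [n]$ and $s \in [S]$ and choosing $t^2 \asymp \log n + 2k\sum_{i=1}^{d-1}\log(2ek_i/k) + \gamma^2$, the failure probability is at most $2\exp(-\gamma^2)$. In the parameter regime of interest (where $n \gg r + t^2$; the opposite regime makes the claimed bound trivial) the resulting estimate simplifies to
\begin{align*}
\alpha \lesssim \frac{\sqrt{r} + t}{\sqrt{n} - \sqrt{r} - t} \lesssim \sqrt{\frac{r}{n}} + \frac{t}{\sqrt{n}},
\end{align*}
and expanding $t$ with $r \leq 2k$ recovers the four-term bound stated in the theorem.

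The main obstacle is the first step: verifying that every subspace appearing in $\Delta(\mathcal{G})$ genuinely takes the form $W^{(d)}T$ for a \emph{deterministic} $T$. This rests on the fact that the hyperplane arrangement partitioning $\reals^k$ into activation regions is fully determined by $W^{(1)}, \ldots, W^{(d-1)}$ and does not see $W^{(d)}$; once this is in hand, the rest is Gaussian concentration and a routine union bound, and the dependence between numerator and denominator (they both involve the same $Z_s$) is handled trivially by intersecting two high-probability events.
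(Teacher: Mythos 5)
Your proposal is correct and takes essentially the same route as the paper: factor out the Gaussian last layer so that every subspace of $\Delta(\range(G)-\range(G))$ is $W^{(d)}T_s$ for a deterministic at-most-$2k$-dimensional $T_s$, bound the coherence of each such subspace by a Gaussian-vector norm divided by a smallest singular value, and union bound over the $n$ rows of $U$ and the subspace count --- the paper merely packages this as a Grassmannian lemma (\autoref{lem:coherence_rand_subspace}) plus \autoref{lem:max_vars}, and counts subspaces via the augmented network of \autoref{rmk:extend-to-range-difference} rather than your $N^2$ pairwise-difference count, which yields the identical $\log$ factor. The only point to tidy is that for complex $U_i$ the vector $Z_s^{\top}U_i$ is not exactly $\mathcal{N}(0,I_{r_s})$ but a (possibly correlated) complex Gaussian with total variance $r_s$, for which the same Lipschitz norm concentration applies, so the bound is unaffected.
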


\begin{remark}
  \label{rmk:typical-coherence-random-weights}
  We briefly comment on the behaviour of the third term, which, we argue,
  dominates for the principal case of interest. Assume the layers have
  approximately constant size: \ie for two absolute constants $C_1, C_2 > 0$,
  \begin{align*}
    \forall \ell \in [d], \quad C_1 \leq \log \frac{e k_\ell}{k} \leq C_2.
  \end{align*}
  In this case, all terms in the sum in the third term will be of the same
  order, making this term have order $\mathcal{O}(\sqrt\frac{kd}{n})$. If we
  further make the reasonable assumption that $dk > \log(n)$, then the third
  term dominates all others, hence
  \begin{align*}
    \alpha = \mathcal{O} \left( \sqrt{\frac{kd}{n}} \right).
  \end{align*}
\end{remark}

\begin{remark}
  Using~\autoref{coro:G-G-RIP}
  and~\autoref{rmk:typical-coherence-random-weights}, one may take as the sample
  complexity for~\autoref{thm:subIso-GCS}, in the case of a $(k,d,n)$-generative
  network with Gaussian weights,
  \begin{align*}
    m \gtrsim  \frac{2k^2d}{\delta^2} \sum_{i=1}^{d-1} \log \left(\frac{2ek_i}{k}\right) + \frac{kd}{\delta^2}\log \frac{4k}{\varepsilon}.
  \end{align*}
  We note in passing that an argument specialized to random weights is given
  in~\cite{naderiPlanSparsityFree}, with an improved sample complexity. Our
  goal in this section is not to find the optimal sample complexity for random
  weights, but to show the average case behaviour of the parameter $\alpha$.
\end{remark}

\section{Numerics}
\label{sec:numerics}

In this section we explore the connection between coherence and recovery error
empirically, to suggest that coherence is indeed the salient quantity dictating
recovery error. In addition, we propose a regularization strategy to train low
coherence GNNs. This regularization strategy is new to our knowledge. The first
experiment illustrates a phase portrait that empirically shows dependence on a
coherence (proxy) and number of measurements for successful recovery. We also
show, for a fixed number of measurements, that the probability of recovery
failure increases with higher coherence (proxy). In the second experiment, we
use the novel regularization approach to show that fewer measurements are
required for signal recovery when a GNN is trained to have low coherence.

\subsection{Experimental methodology}
\label{sec:numerics-setup}

\subsubsection{Coherence heuristic and regularization}
\label{sec:coherence-heuristic}

Ideally, in these experiments, one would calculate the coherence of the network
exactly, via~\autoref{def:coherence}. However, computing coherence is likely
intractable in general. Instead, we use an upper bound on the coherence obtained
as follows. Let $G$ be a $(k, d, n)$-generative network and let $W = W^{(d)}$ be
its final weight matrix. Write the QR decomposition of $W$ as
\begin{align*}
    W &= QR, & Q &:= \bmat{cc}{Q_1 & Q_2}, & R &:= \bmat{c}{R_1\\ 0},
\end{align*}
where $Q \in \reals^{n\times n}$ is orthogonal,
$R \in \reals^{n \times \tilde k}$ has invertible submatrix
$R_1 \in \reals^{\tilde k \times \tilde k}$ and
$Q_1 \in \reals^{\tilde n \times \tilde k}$ is the submatrix multiplying with
$R_1$. Let $\mathcal{G} := \range(G) - \range(G)$,
$\mathcal{W} := \range(W) \cap \sphn$ and let $D \in \reals^{n\times n}$ be an
orthogonal matrix. Using that
$\Delta(\mathcal{G})\cap\sphn \subseteq\mathcal{W}$, we bound the coherence with
respect to $\|\cdot \|_D$ as
\begin{align}
    \sup_{x \in \Delta(\mathcal{G})\cap\sphn} \|Dx\|_\infty %
    & \leq \sup_{x \in \mathcal{W}} \|Dx\|_\infty %
    \nonumber
    \\
    & = \max_{i \in [n]} \sup_{z \in \reals^{\tilde k}} \left\{ \left|D_i^\top Q R z \right| : \|Rz\|_2 = 1 \right\} %
    \nonumber
    \\
    & = \max_{i \in [n]} \sup_{v \in \reals^{\tilde k}} \left\{ \left|D_i^\top Q_1 v \right| : \|v\|_2 = 1 \right\} %
    \nonumber
    \\
    \label{eq:coherence-bound}
    & = \max_{i \in [n]} \left\|Q_1^\top D_i \right\|_2 = \left\| D Q_1 \right\|_{2\to \infty}, 
\end{align}
where the penultimate line uses $z := R_1^{-1} v$. To re-phrase:
$\Delta(\mathcal{G})$ is always $\|DQ_1\|_{2\to\infty}$-coherent with respect to
$\|\cdot\|_D$. Our experiments and theory are consistent with the hypothesis
that this is an effective heuristic for coherence.

Motivated by~\eqref{eq:coherence-bound}, we propose a strategy --- novel, to our
knowledge --- to promote low coherence of the final layer $W$ with respect to a
fixed orthogonal matrix $D$. This is achieved by applying the following
regularization $\rho$ to the final weight matrix of the GNN during training:
\begin{align}
  \label{eq:regularizer}
  \rho(W) = \left\| D W \right\|_{2\to \infty} + \lambda\|W^{\top} W -I\|_F.
\end{align}  
Namely, the regularizer $\rho$, with a fixed regularization parameter
$\lambda \geq 0$, is added to the training loss function. Roughly, this
regularization promotes low coherence because $\|W^\top W - I\|_F$ is smallest
when $W$ is orthonormal, making $\left\| D W \right\|_{2\to \infty}$ the
coherence of $\range(W)$ with respect to $D$.

\subsubsection{Network architectures}
\label{sec:numerics-architectures}

In the experiments, we use three generative neural networks trained on the MNIST
dataset~\cite{deng2012mnist}, which consists of 60,000 $28\times 28$ images of
handwritten digits. The GNNs are fully connected networks with three layers and
parameters $k=20$, $k_1 = k_2 = 500$, $n=784$. Precisely, let the first one be
$G^{(1)} = \sigmoid(W^{(1,3)}\sigma(W^{(1,2)}\sigma(W^{(1,1)}z)))$, where
$\sigmoid(x) = (1 + \exp(-x))^{-1}$ is the sigmoid activation function. Let the
remaining two GNNs be
$G^{(i)}(z) = W^{(i,3)}\sigma(W^{(i,2)}\sigma(W^{(i,1)}z))$, $i = 2, 3$. We use
$G^{(1)}$, which has a more realistic architecture for real applications, as a
point of comparison with $G^{(i)}, i = 2,3$. Variational autoencoders
(VAEs)~\cite{kingma2013auto}, with the decoder network as $G^{(1)}$ and
$G^{(2)}$, were trained using the Adam optimizer~\cite{kingma2014adam} with a
learning rate of 0.001 and a mini-batch size of 64 using
Flux~\cite{innes2018flux}. We trained another VAE with decoder network
$G^{(3)}$, using the same hyperparameters but using the regularization strategy
described in~\autoref{sec:coherence-heuristic} to promote low coherence of the
final layer $W^{(3,3)}$ with respect to a fixed orthogonal matrix
$D$. Specifically, the expression $10^4 \rho(W^{(3,3)})$, with $\lambda$ set to
1, was added to the VAE loss function. In all cases the VAE loss function was
the usual one. See~\cite{code-repo} for specific implementation details
including the definition of the encoders, and refer to~\cite{kingma2013auto,
  kingma2019introduction} for further background on VAEs.

 \subsubsection{Measurement matrix}
 \label{sec:numerics-measurement-matrix}
 
 Throughout the experiments, the matrix $D$ was chosen to be the discrete cosine
 transform (DCT) matrix. For DCT implementation details, see for
 instance~\cite[\texttt{fftpack.dct}]{virtanen2020scipy}. The matrix $A$ is a
 slight variation of the subsampled isometry defined
 in~\autoref{def:subsampled-isometry}, modified to ensure that each realization
 of $A$ has $m$ rows. Namely, the random matrix $A$ is subsampled from $D$ by
 selecting the first $m$ elements of a uniform random permutation of $[n]$. Note
 $A$ is still re-normalized as in~\autoref{def:subsampled-isometry}.

\subsubsection{First experiment}
\label{sec:first-experiment-methodology}

For the first experiment, let $G_\beta$ be a $(k, 2, n)$-generative network with
inner layers $W^{(i)} = W^{(1,i)}$, $i = 1,2$ and last layer
$W^{(3)} = W_\beta \in \reals^{n \times \tilde k}$ defined by
\begin{align*}
  W_\beta :=  \beta  W^{(1,3)} + (1-\beta) W^{(3,3)}.
\end{align*}
Recall that $W^{(1,3)}$ and $W^{(3,3)}$ are the final layers of $G^{(1)}$ and
$G^{(3)}$, respectively. Here, $\beta \in [0, 1]$ is an interpolation
parameter. The coherence, which was computed via~\eqref{eq:coherence-bound}, of
$\range(W^{(1,3)})$ was $0.98$, while the coherence of $\range(W^{(3,3)})$ was
$0.82$. As a result, for large $\beta$, one should expect $W_\beta$ to have
large coherence with respect to $\|\cdot\|_D$. We randomly sample
$z_0 \in \mathbb{R}^k$, fix the number of measurements
$m \in \{40,60,\ldots, 440\}$, and set $b = AG_\beta(z^0)$. For each measurement
size $m$ and coherence upper bound, we perform $20$ independent trials. For each
trial, we approximately solve~\eqref{eq:gnn-opt} by running ADAM with a learning
rate of $0.1$ for $5000$ iterations, or until the norm of the gradient is less
than $10^{-7}$, and set $\hat{z}$ to be the output. See~\cite{code-repo} for
specific implementation details. We say the target signal $G_\beta(z_0)$ was
successfully recovered if the relative reconstruction error (rre) between
$G_\beta(z_0)$ and $G_\beta(\hat{z})$ is less than $10^{-5}$:
\begin{align*}
  \rre(x_0, \hat x) := \frac{\|x_0 - \hat x\|_2}{\|x_0\|_2}.
\end{align*}

\subsubsection{Second experiment}
\label{sec:second-experiment-methodology}

For the second experiment, we use each trained network $G^{(i)}$, $i =
1,2,3$. The coherence upper bounds of
$\Delta\left(\range(G^{(2)}) - \range(G^{(2)})\right)$ and
$\Delta\left(\range(G^{(3)}) - \range(G^{(3)})\right)$, computed
using~\eqref{eq:coherence-bound} are $0.96$ and $0.81$, respectively, which
empirically shows that the regularization~\eqref{eq:regularizer} promotes low
coherence during training. For the networks $G^{(i)}$, let
$E^{(i)}:\reals^n\rightarrow\reals^k$ be the corresponding encoder network from
their shared {VAE}. We randomly sample an image $x^\sharp$ from the test set of
the MNIST dataset and let $x_0^{(i)} = G^{(i)}(E^{(i)}(x^\sharp))$ --- \ie
$x_0^{(i)} \in \range(G^{(i)})$ most likely resembles the test set image
$x^\sharp$. Let $m \in \{10,15,20,25,50,100,200,250\}$ and set
$b^{(i)} = Ax_0^{(i)}$. For each measurement size $m$, we run $10$ independent
trials. On each trial, we generate a realization of $A$ and randomly sample a
test image $x^\sharp$ from the MNIST dataset. To estimate $x_0^{(i)}$ on each
trial, we approximately solve~\eqref{eq:gnn-opt} by running ADAM with a learning
rate of $0.1$ for $5000$ iterations, or until the Euclidean norm of the gradient
is less than $10^{-7}$. See~\cite{code-repo} for specific implementation
details.

\subsection{Numerical results}
\label{sec:numerics-results}

\subsubsection{Recovery phase transition}
\label{sec:first-experiment}

\begin{figure}[t]
  \centering
  \begin{subfigure}{\linewidth}
    \centering
    % PEERREVIEW CHANGE WIDTH .8 <-> .45
    \includegraphics[width=.9\linewidth]{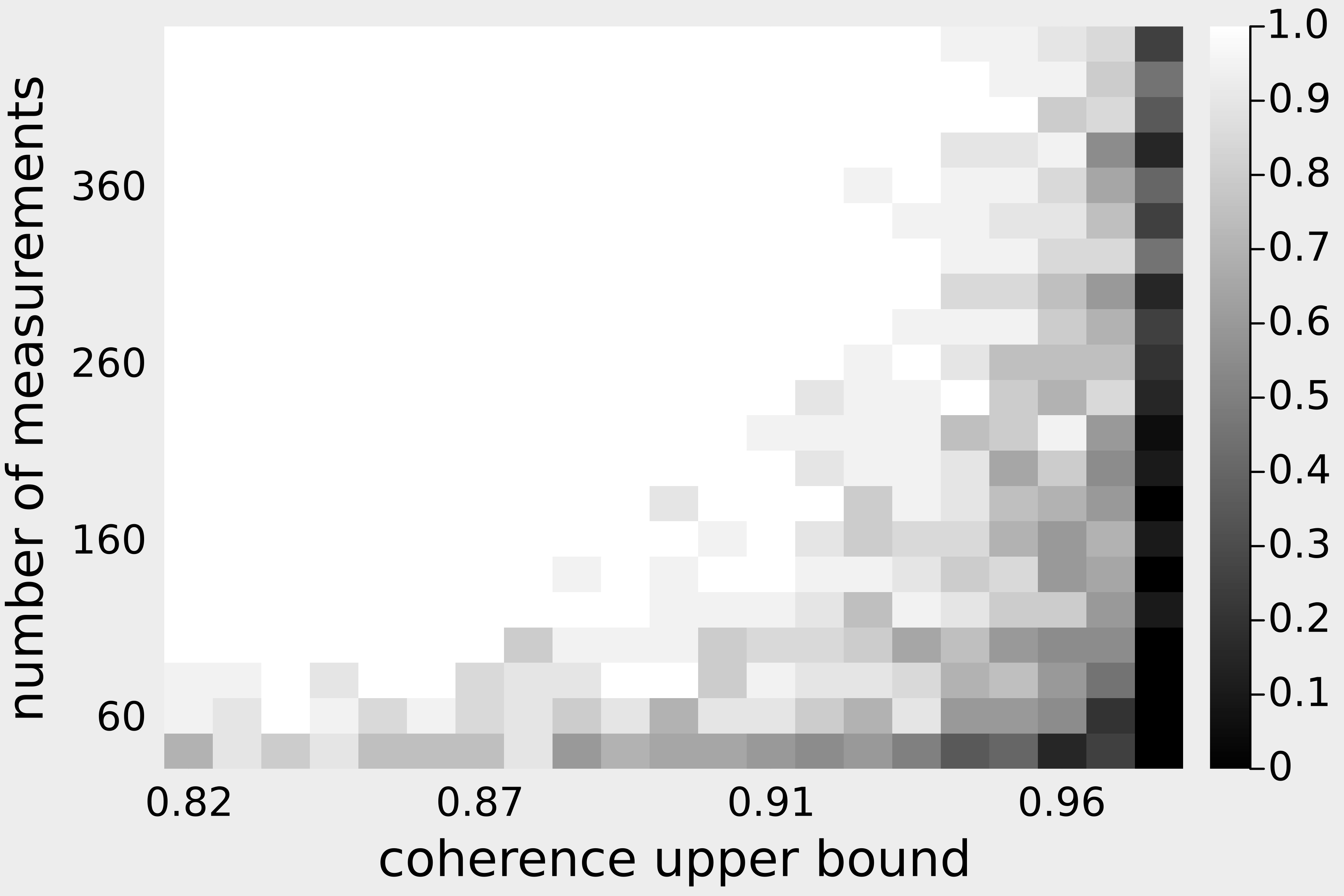} %Can change this to pdf
    \caption{Empirical recovery probability as a function of coherence and
      $m$. Each block corresponds to the average from 20 independent
      trials. White corresponds with 20 successful recoveries (rre
      $\leq 10^{-5}$); black with no successful
      recoveries.\label{fig:phaseplot}}
  \end{subfigure}
  
  \begin{subfigure}{\linewidth}
    \centering % PEERREVIEW CHANGE WIDTH .8 <-> .45
    \includegraphics[width=.9\linewidth]{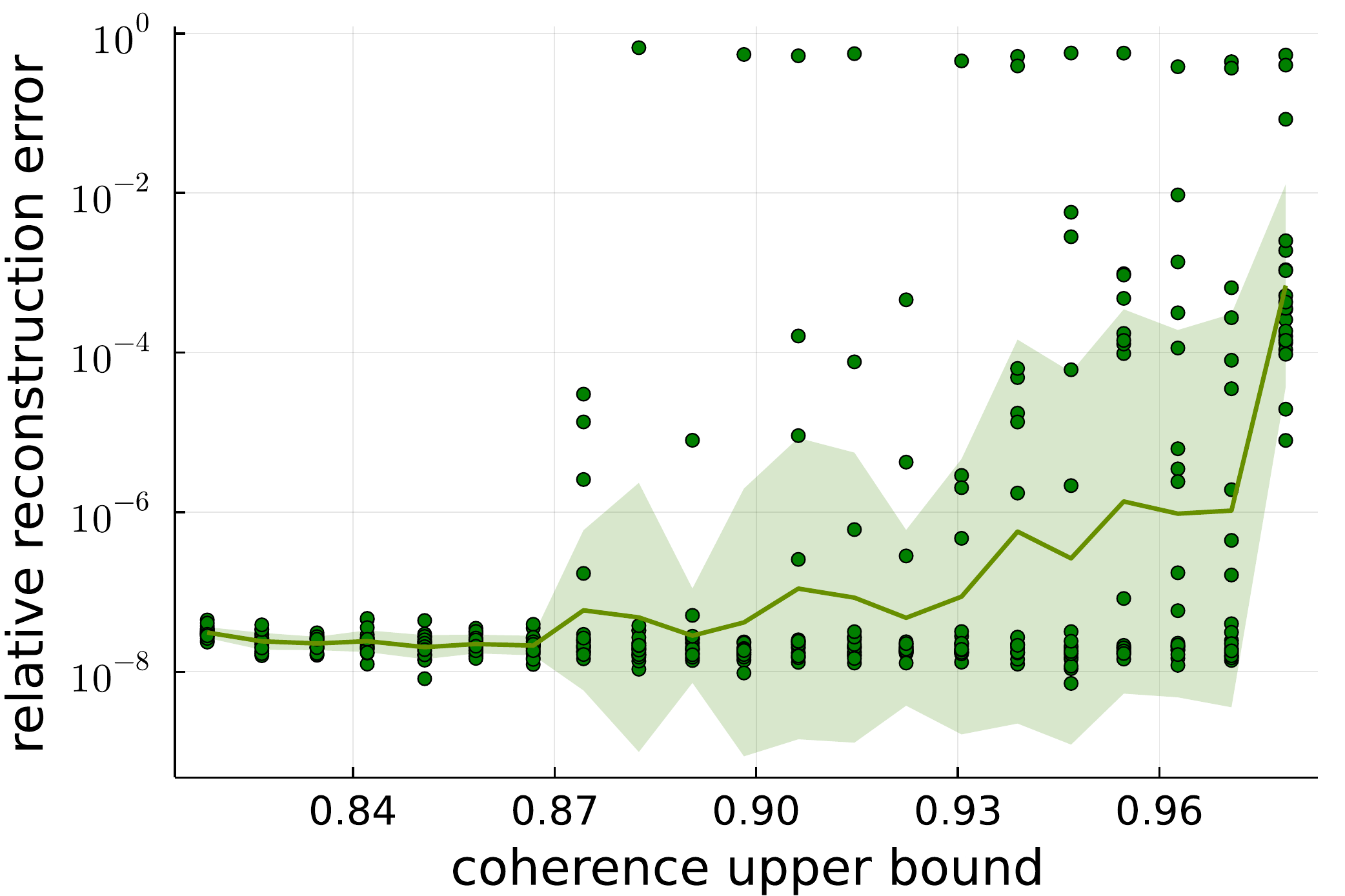}
    \caption{Empirical rre as a function of coherence for $m = 100$. Each dot
      corresponds to one of 20 trials at each coherence level. The solid line
      shows the empirical geometric mean rre \emph{vs.}\ coherence upper
      bound. The envelope shows $1$ geometric standard
      deviation.\label{fig:error-coherence}}
  \end{subfigure}
  \caption{Dependence of recovery on coherence and number of measurements $m$ for GNNs trained on MNIST.\label{fig:MNIST-recovery-dependence}}
\end{figure}

The results of the first experiment appear
in~\autoref{fig:MNIST-recovery-dependence}. Specifically,
\autoref{fig:phaseplot} plots the fraction of successful recoveries from 20
independent trials as a function of the coherence
heuristic~\eqref{eq:coherence-bound} and number of measurements. White squares
correspond to $100\%$ successful recovery (all errors were below $10^{-5}$),
while black squares correspond to no successful recoveries (all errors were
above $10^{-5}$). In~\autoref{fig:error-coherence}, we show a slice of the phase
plot for $m = 100$, plotting rre as a function of the coherence
heuristic~\eqref{eq:coherence-bound}. Each dot corresponds to one of 20 trials
at each coherence level. The plot is shown on a $\log$-$y$ scale. The solid line
plots the geometric mean of rre as a function of coherence, with an envelope
representing $1$ geometric standard deviation
(see~\cite[App.~A.1.3]{adcock2022sparse} for more information on this
visualization strategy). \autoref{fig:MNIST-recovery-dependence} indicates that
coherence may be effectively controlled via the
heuristic~\eqref{eq:coherence-bound}, and that coherence is a natural quantity
associated with recovery performance. These findings corroborate our theoretical
results.
 
\subsubsection{Incoherent networks require fewer measurements}
\label{sec:second-experiment}

In the second experiment, we provide compelling numerical simulations that
support our regularization strategy for lowering coherence of the trained
network, resulting in stable recovery of the target signal with much fewer
measurements. The results of the second experiment are shown
in~\autoref{fig:MNIST_recovery}
and~\autoref{fig:MNIST_random_comp}. In~\autoref{fig:MNIST_recovery}, we show
the recovered image for three images from the MNIST test set.
\begin{figure}[t]
  \centering
  % PEERREVIEW CHANGE WIDTH .45 <-> .8
  \includegraphics[width=.9\linewidth]{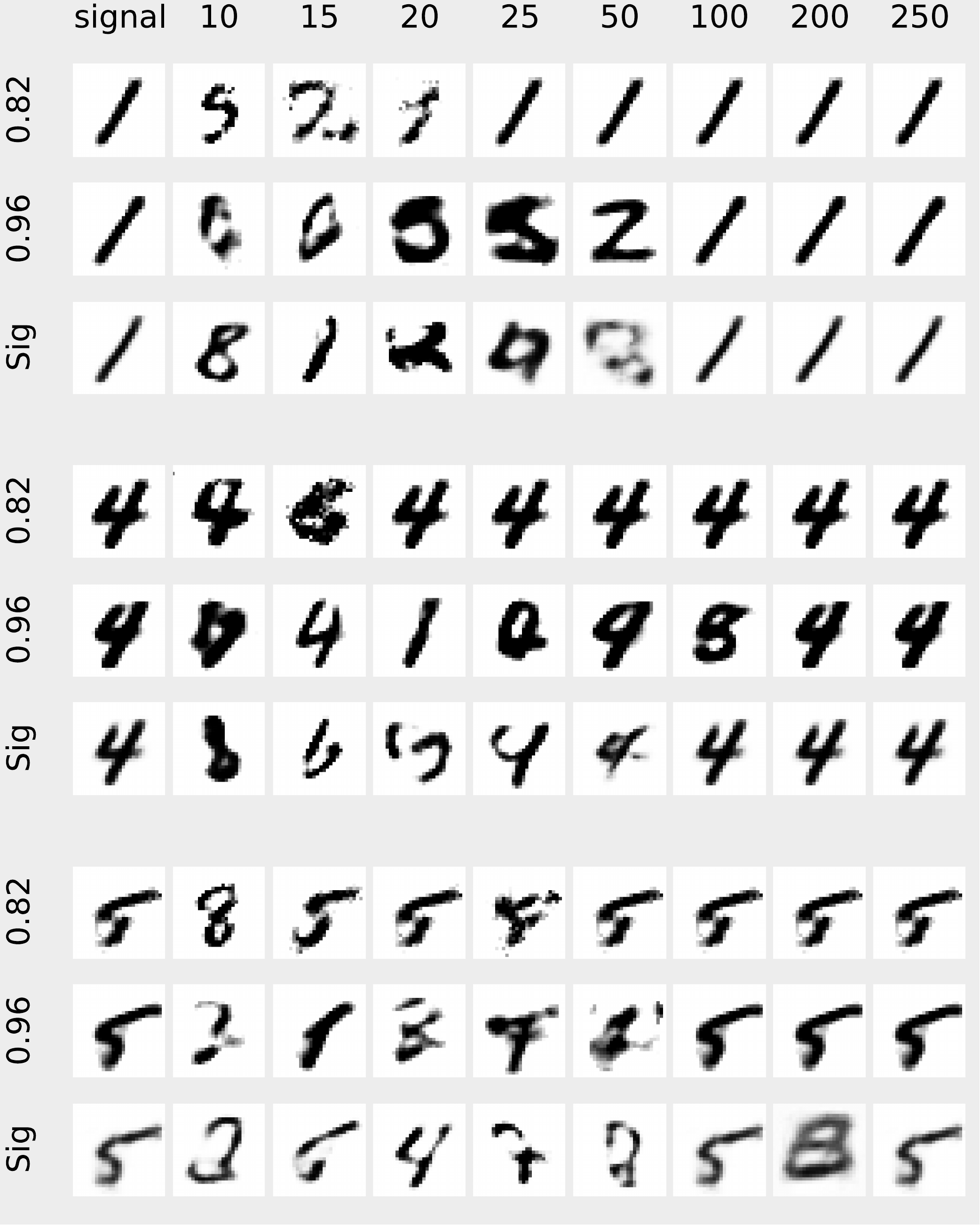}
  \caption{Recovery comparison of MNIST images for various measurement sizes $m$
    (denoted by column heading) for a low coherence network, high-coherence
    network and network with final sigmoid activation. In each block: the top
    row corresponds to $G^{(3)}$ ($\alpha=0.82$); middle row $G^{(2)}$
    ($\alpha = 0.96$); bottom row $G^{(1)}$ (labelled Sig). The leftmost column,
    signal, corresponds to the target image $x_0^{(i)} \in
    \range(G^{(i)})$. \label{fig:MNIST_recovery}}
\end{figure}
\begin{figure}[t]
  \centering
  % PEERREVIEW CHANGE WIDTH .45 <-> .8
  \includegraphics[width=.9\linewidth]{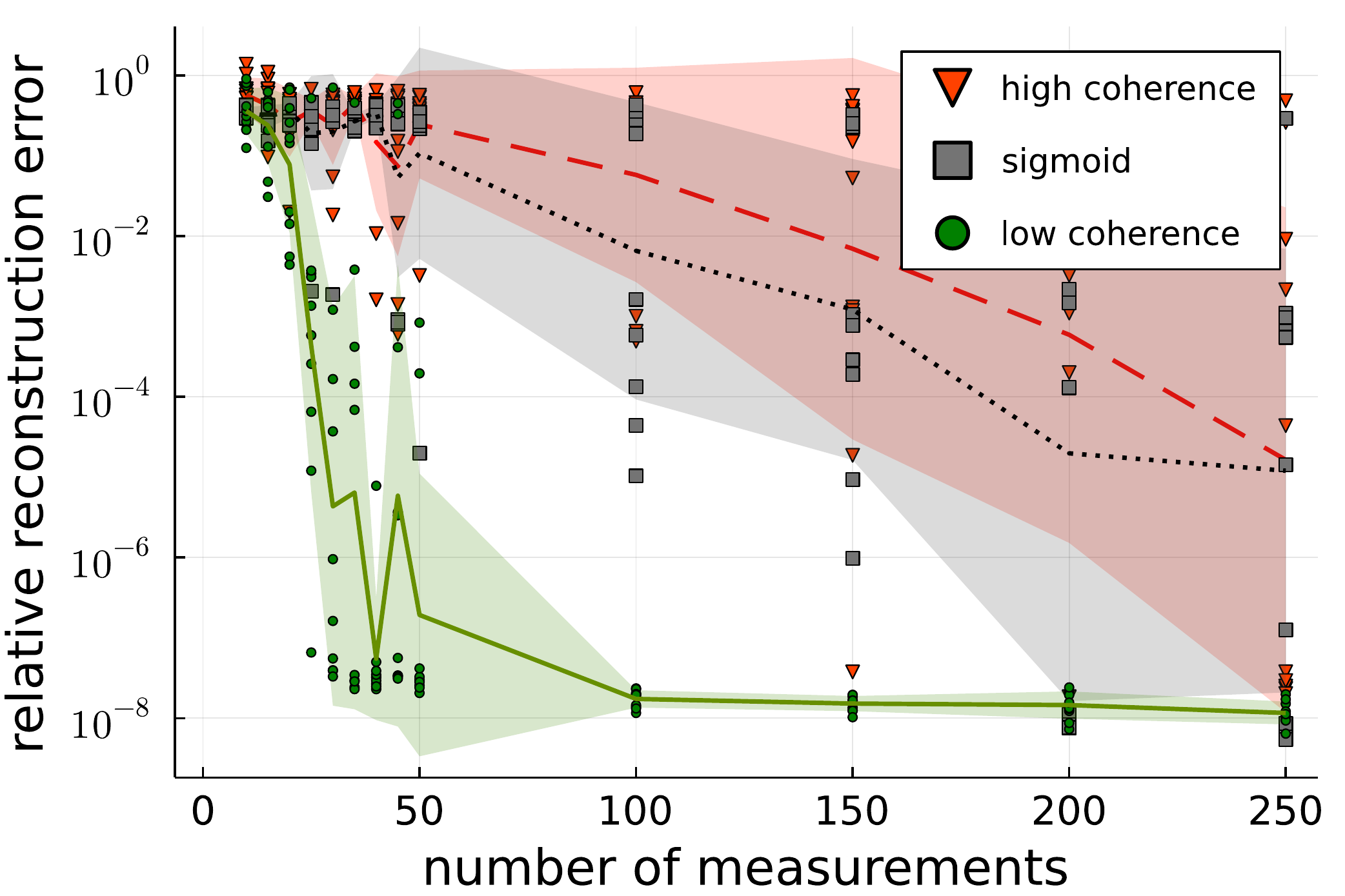}
  \caption{Performance comparison for three GNNs trained on the MNIST dataset
    --- one with low coherence, another with high coherence, and the last with
    sigmoid activation on the last layer. Plotted against number of measurements
    $m$ is $\rre$. For each value of $m$, each dot corresponds to one of 10
    trials. In each trial, a random image drawn from the MNIST test partition
    was used as the target signal. Lines depict the empirical geometric mean rre
    as a function of $m$; shaded regions correspond to $1$ geometric standard
    deviation. The solid line corresponds to $G^{(3)}$; the dashed line to
    $G^{(2)}$; the dotted line to $G^{(1)}$.\label{fig:MNIST_random_comp}}
\end{figure}
For each block of $3 \times 9$ images, the top row corresponds with the low
coherence $G^{(3)}$ ($\alpha = 0.82$); the middle row, the high coherence
$G^{(2)}$ ($\alpha = 0.96$); and the bottom row, $G^{(1)}$, which uses sigmoid
activation. The left-most column is the target image belonging to
$\range(G^{(i)})$, labelled \emph{signal}. All images were \emph{clamped} to the
range $[0,1]$. The figure shows that a GNN with low coherence can effectively
recover the target signal with much fewer measurements compared to a network
with high coherence, even when that network uses a final sigmoid activation
function (which is a realistic choice in practical settings). Remarkably, in
some cases we observed that images could be recovered with fewer than $k$
measurements. This highlights the importance of regularizing for networks with
low coherence during training. In~\autoref{fig:MNIST_random_comp}, we further
provide empirical evidence of the benefit of low coherence for recovery. For
each measurement, we show the results of 10 independent trials for $G^{(1)}$
(squares), $G^{(2)}$ (triangles) and $G^{(3)}$ (circles), respectively. The
lines correspond to the empirical geometric mean rre for each network: the
dotted line is associated with $G^{(1)}$; the dashed line, the high coherence
$G^{(2)}$ ($\alpha = 0.96$); and the solid line, the low coherence $G^{(3)}$
($\alpha=0.82$). The data are plotted on a $\log$-$y$ scale. Each shaded
envelope corresponds to $1$ geometric standard deviation about the respective
geometric means. This figure empirically supports that high probability
successful recovery is achieved with markedly lower sample complexity for the
lower coherence network $G^{(3)}$, as compared with either $G^{(2)}$ or
$G^{(1)}$. This finding corroborates our theoretical results.

\section{Proofs}
\label{sec:proofs}

\subsection[Proofs for main results]{{Proofs for~\nameref{sec:main-results}}}
\label{sec:proofs-results}

We proceed by proving~\autoref{thm:G-RIP}, then~\autoref{thm:subIso-GCS}. Note
that~\autoref{coro:G-G-RIP}, needed for~\autoref{thm:subIso-GCS}, follows
immediately from~\autoref{thm:G-RIP}
using~\autoref{rmk:extend-to-range-difference}.

\begin{proof}[Proof of~{\autoref{thm:G-RIP}}]
  By construction, $\range(G) \subseteq \Delta(\range(G))$; the latter set is a
  union of linear subspaces (see~\autoref{def:minimal-subspace-covering}
  and~\autoref{rmk:Delta-properties}). By~\autoref{lem:upper-bound-number-of-cones},
  $\range(G)$ is a union of no more than $N$ polyhedral cones of dimension at
  most $k$, with $N$ satisfying
  \begin{align*}
    \log N \leq k \sum_{i = 1}^{d-1} \log \left( \frac{2ek_{i}}{k} \right)
  \end{align*}
  via~\autoref{rmk:log-subspace-number-expression}. In particular,
  $\Delta(\range(G))$ is a collection of at most $N$ subspaces. For any linear
  subspace $\mathcal{L} \in \Delta(\range(G))$, observe that $\mathcal{L}$ is
  $\alpha$-coherent with respect to $\|\cdot\|_{U}$ by assumption. Consequently,
  by a union bound and application
  of~\autoref{lem:subgaussian-control-on-subspace},
  \begin{align*}
    &\pr \left\{ \sup_{x \in \range(G) \cap \sphn} \left|  \|Ax\|_{2} - 1 \right| \geq \delta \right\} %
    \\
    & \leq \sum_{\mathcal{L} \in \Delta(\range(G))} \pr \left\{ \sup_{x \in \mathcal{L} \cap \sphn} \left| \|Ax\|_{2} - 1 \right| \geq \delta \right\} %
    \\
    & \leq 2 N  k \exp \left( - \frac{C m \delta^{2}}{\alpha^{2}n} \right).
  \end{align*}
  The latter quantity is bounded above by $\varepsilon$ if
  \begin{align*}
    m \gtrsim \frac{\alpha^{2} n}{\delta^{2}} \left( \log N + \log \frac{2k}{\varepsilon} \right),
  \end{align*}
  whence, by substituting the bound for $\log N$, it suffices to take
  \begin{align*}
    m \gtrsim \frac{\alpha^{2} n}{\delta^{2}} \left( k \sum_{i = 1}^{d-1} \log \left( \frac{2ek_{i}}{k} \right) + \log \frac{2k}{\varepsilon} \right). 
  \end{align*}
\end{proof}

\begin{proof}[{Proof of~\autoref{thm:subIso-GCS}}]
  Recall $x^\perp := x_{0} - \proj_{\range(G)} x_{0}$. By triangle inequality
  and the observation that $\proj_{\range(G)}(x_0) \in \range(G)$,
  \begin{align*}
    \|A\hat x - b\|_2 %
    &\leq \min_{x\in\range(G)} \|Ax - b\|_2 + \hat\varepsilon %
    \\
    &\leq \|A\proj_{\range(G)}(x_0) - b\|_2 + \hat \varepsilon
    \\
    & = \|Ax^\perp + \eta\|_2 + \hat \varepsilon %
    \\
    & \leq \|Ax^\perp\|_2 + \|\eta\|_2 + \hat\varepsilon.
  \end{align*}
  Moreover, with probability at least $1 - \varepsilon$ on the realization of
  $A$, $A$ satisfies a restricted isometry condition on the difference set
  $\range(G) - \range(G)$ by~\autoref{coro:G-G-RIP}. Therefore, since
  $\hat x, \proj_{\range(G)}(x_0) \in \range(G)$,
  \begin{align*}
    & \|A \hat x - b\|_2 %
    \\
    & = \left\|A \left(\hat x - \proj_{\range(G)}(x_0)\right) - A\left(x_0 - \proj_{\range(G)}(x_0)\right) - \eta \right\|_2 %
    \\
    & \geq \left\|A\left(\hat x - \proj_{\range(G)}(x_0)\right)\right\|_2 - \|Ax^\perp\|_2 - \|\eta\|_2 %
    \\
    & \geq (1 - \delta) \left\| \hat x - \proj_{\range(G)}(x_0) \right\|_2 - \|Ax^\perp\|_2 - \|\eta \|_2.
  \end{align*}
  Assembling the two inequalities gives 
  \begin{align*}
      \left\| \hat x - \proj_{\range(G)}(x_0)\right\|_2 \leq \frac{1}{1 - \delta} \left[2 \|Ax^\perp\|_2 + 2\|\eta\|_2 + \hat\varepsilon\right]. 
  \end{align*}
  Finally, apply triangle inequality and choose $\delta = \frac{1}{3}$ to get
  \begin{align*}
    \|\hat x - x_0\|_2 %
    & \leq \|x_0 - \proj_{\range(G)}(x_0)\|_2 + \|\hat x - \proj_{\range(G)}(x_0)\|_2
    \\
    & \leq \|x^\perp\|_2 + 3 \|Ax^\perp\|_2 + 3\|\eta\|_2 + \frac{3}{2}\hat\varepsilon.
  \end{align*}
\end{proof}

\begin{proof}[Proof of~{\autoref{lem:subgaussian-control-on-subspace}}]
  Observe that $I = \sum_{i=1}^{n} U_{i}U_{i}^{*}$ since $U$ is a unitary
  matrix. Thus, since $\proj_{\mathcal{L}}^{2} = \proj_{\mathcal{L}}$,
  \begin{align*}
    & \sup_{x \in \mathcal{L} \cap \sphn} \left| \|Ax\|_{2}^{2} - 1 \right| %
    \\
    & = \sup_{x \in\mathcal{L}\cap\sphn} \left| x^{\top} (A^{*}A - I) x \right| %
    \\
    & = \sup_{x \in\mathcal{L}\cap\sphn} \left| x^{\top} \proj_{\mathcal{L}}^{\top} \left( \frac{n}{m} \sum_{i = 1}^{n} \theta_{i} U_{i}U_{i}^{*} - \sum_{i = 1}^{n} U_{i}U_{i}^{*} \right) \proj_{\mathcal{L}} x \right|.
  \end{align*}
  Define
  $\tilde U_{i} := \proj_{\mathcal{L}} U_{i} = \proj_{\mathcal{L}}^{\top}
  U_{i}$, using that $\proj_{\mathcal{L}}$ is an orthogonal projection, hence
  symmetric. Then,
  \begin{align*}
    & \sup_{x \in \mathcal{L} \cap \sphn} \left| \|Ax\|_{2}^{2} - 1 \right| %
    \\
    & = \sup_{x \in \mathcal{L} \cap \sphn} \left| x^{\top} \frac{n}{m} \left( \sum_{i = 1}^{n} \left(\theta_{i} - \frac{m}{n}\right) \tilde U_{i} \tilde U_{i}^{*} \right)x \right|. 
  \end{align*}
  Since $x$ and $\tilde{U}_{i}$ belong to a $k$-dimensional subspace, there
  exists a linear isometric embedding
  $E : \reals^k \to \mathcal{L} \subseteq{\reals^n}$ such that $x = E\check{x}$
  and $\tilde{U}_i = E\check{U}_{i}$ with
  $\check{x}, \check{U}_{i} \in \reals^k$. Thus,
  \begin{align*}
    & \sup_{x \in \mathcal{L} \cap \sphn} \left| x^{\top} \frac{n}{m} \left( \sum_{i = 1}^{n} \left(\theta_{i} - \frac{m}{n}\right) \tilde U_{i} \tilde U_{i}^{*} \right)x \right| 
    \\
    & = \sup_{x \in \sphn[k]} \left| x^{\top} E^{\top} \frac{n}{m} \left( \sum_{i = 1}^{n} \left(\theta_{i} - \frac{m}{n}\right) E \check{U}_{i} \check{U}_{i}^{*} E^{\top} \right)Ex \right| 
    \\
    & = \sup_{x \in \sphn[k]} \left| x^{\top} \frac{n}{m} \left( \sum_{i = 1}^{n} \left(\theta_{i} - \frac{m}{n}\right) \check{U}_{i} \check{U}_{i}^{*} \right)x \right|
    \\
    & = \left\| \frac{n}{m} \sum_{i = 1}^{n} \left(\theta_{i} - \frac{m}{n}\right) \check{U}_{i} \check{U}_{i}^{*} \right\|,
  \end{align*}
  where, in this case, $\|\cdot\|$ is the operator norm on Hermitian matrices
  over $\reals^k$ induced by the Euclidean norm. We will apply the matrix
  Bernstein inequality (\autoref{lem:matrix-bernstein}) to achieve concentration
  of the operator norm of the sum of mean-zero random matrices above. By the
  $\alpha$-coherence assumption,
  $\|\check U_{i}\|_2^2 = \|\tilde U_{i}\|_{2}^{2} %
  = \sup_{x \in \mathcal{L} \cap \sphn} \left| \ip{x, U_{i}} \right|^{2} %
  \leq \alpha^{2}$. Consequently, the operator norm of each constituent matrix
  is bounded almost surely: for each $i \in [n]$,
  \begin{align*}
    \left\| \frac{n}{m}\check U_{i} \check U_{i}^{*} \left(\theta_{i} - \frac{m}{n}\right) \right\| %
    \leq \frac{n}{m} \left\| \check U_{i} \check U_{i}^{*} \right\| %
    = \frac{n}{m}\|\tilde U_{i}\|_{2}^{2} %
    \leq \frac{n}{m} \alpha^{2}
  \end{align*}
  and the operator norm of the covariance matrix is bounded as
  \begin{align*}
    & \left\| \frac{n^2}{m^{2}} \sum_{i = 1}^{n} \|\check U_{i}\|_{2}^{2} \check U_{i} \check U_{i}^{*} \E \left(\theta_{i} - \frac{m}{n}\right)^{2} \right\| %
    \\
    & = \left\| \frac{n^2}{m^{2}}\sum_{i = 1}^{n} \|\check U_{i}\|_{2}^{2} \check{U}_{i} \check{U}_{i}^{*} \frac{m}{n}\left(1 - \frac{m}{n}\right) \right\| %
    \leq \alpha^{2} \left( \frac{n}{m} - 1 \right).
  \end{align*}
  Therefore, by~\autoref{lem:matrix-bernstein} it follows that
  \begin{align*}
    &\pr \left\{ \left\| \frac{n}{m}\sum_{i = 1}^{n}     \check{U}_{i} \check{U}_{i}^{*} \left(\theta_{i} - \frac{m}{n}\right) \right\| \geq \delta \right\} %
    \\
    & \leq 2k \exp \left( - \frac{m\delta^{2}/2}{n\alpha^{2}\left( 1 - \frac{m}{n} + \frac{\delta}{3} \right)} \right) %
    \\
    & \leq 2k \exp \left( - C\cdot \min \left\{ \frac{m\delta^{2}}{n\alpha^{2}(1 - m/n)}, \frac{m\delta}{n\alpha^{2}} \right\} \right). % 
  \end{align*}

  To complete the proof, we adapt the argument from the proof
  of~\cite[Theorem~3.1.1]{vershynin2018high}. Indeed, for $\delta, z \geq 0$
  note that $|1 - z| > \delta \implies |z^{2} - 1| > \max(\delta, \delta^{2})$
  yields the implication
  $\max_{i} |1 - z_{i}| > \delta \implies \max_{i}|z_{i}^{2} - 1| > \max(\delta,
  \delta^{2})$. Consequently,
  \begin{align*}
    &\pr \left\{ \sup_{x \in \mathcal{L} \cap \sphn} \left| \|Ax\|_{2} - 1 \right| \geq \delta \right\} %
    \\
    & \leq \pr \left\{ \sup_{x \in \mathcal{L} \cap \sphn} \left| \|Ax\|_{2}^{2} - 1 \right| \geq \max(\delta, \delta^{2}) \right\} %
    \\
    & \leq 2k \exp \left( - \frac{C\delta^{2}m}{\alpha^{2}n } \right).
  \end{align*}
\end{proof}

\subsection[Proofs for typical coherence]{{Proofs for~\nameref{sec:typical-coherence}}}
\label{sec:proofs-coher}

The proof of~\autoref{prop:dist} requires the following lemma.

\begin{lemma}\label{lem:rows_big_max}
  Let $A \in \complex^{n \times k}$ be a matrix with $\ell_{2}$-normalized
  columns and let $A_i$ denote the $i$th row of $A$. Then
  \begin{align*}
    \max_{i \in [n]} \|A_i\|_2 \geq \sqrt{\frac{k}{n}}.
  \end{align*}
\end{lemma}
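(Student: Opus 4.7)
The plan is to exploit the two expressions for the squared Frobenius norm of $A$: as a sum over columns and as a sum over rows. Since the columns of $A$ are $\ell_2$-normalized, summing $\|A^{(j)}\|_2^2 = 1$ over the $k$ columns yields $\|A\|_F^2 = k$. On the other hand, as noted in the Notation section, $\|A\|_F^2 = \sum_{i=1}^n \|A_i\|_2^2$. Combining these gives the identity $\sum_{i=1}^n \|A_i\|_2^2 = k$.

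From here, I would invoke a pigeonhole/averaging argument: the maximum of $n$ nonnegative numbers is at least their average, so
\begin{align*}
\max_{i \in [n]} \|A_i\|_2^2 \geq \frac{1}{n} \sum_{i=1}^n \|A_i\|_2^2 = \frac{k}{n}.
\end{align*}
Taking square roots gives the claim. There is no real obstacle; the only subtlety is ensuring the two Frobenius norm expansions are valid in the complex setting, which follows directly from the definitions $\|A\|_F^2 = \sum_{i,j} |A_{ij}|^2$ and the fact that both row and column $\ell_2$ norms on $\complex$ are computed using the same moduli $|A_{ij}|$.
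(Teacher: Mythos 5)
Your proof is correct and matches the paper's argument exactly: both compute $\|A\|_F^2 = k$ from the unit-norm columns and bound $\max_{i \in [n]} \|A_i\|_2^2$ below by the average $\frac{1}{n}\|A\|_F^2 = \frac{k}{n}$, then take square roots. No gaps.
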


\begin{proof}[Proof of~\autoref{lem:rows_big_max}]
  Computing directly, using that each of the $k$ columns has unit norm,
  \begin{align*}
    \max_{i \in [n]} \|A_i\|_2^2 &\geq
    \operatornamewithlimits{mean}_{i \in [n]} \|A_i\|_2^2 
    = \frac{1}{n} \|A\|_F^2 = \frac{k}{n}.
  \end{align*}
  Taking square roots completes the proof.
\end{proof}

We now prove the proposition using the lemma.

\begin{proof}[{Proof of~\autoref{prop:dist}}]
  Take the set $\mathcal{T}$ of all subspaces of dimension $k$ in
  $\complex^n$. By rotational invariance of $\mathcal{T}$, it suffices to show
  the result with respect to $U = I$. Hence, let $\{e_i\}_{i \in [n]}$ be the
  canonical basis. Any fixed $T \in \mathcal{T}$ has coherence
  \begin{align*}
    \alpha = \sup_{v \in T \cap \sphn} \| v \|_\infty.
  \end{align*}
  We will show a sharp lower bound on the coherence of all $k$-dimensional
  subspaces, namely
  \begin{align*}
    \inf_{T \in \mathcal{T}} \sup_{v \in T \cap \sphn} \|v\|_\infty = \sqrt{\frac{k}{n}}.
  \end{align*}
  Take the set $\mathcal{A}\subseteq \complex^{n \times k}$ of orthonormal
  matrices. Since $\mathcal{T} = \{\range(A) : A \in \mathcal{A}\}$, it follows
  that
  \begin{align*}
    \inf_{T \in \mathcal{T}} \sup_{v \in T \cap \sphn} \|v\|_\infty   
    &= \inf_{A \in \mathcal{A}} \sup_{\nu \in \sph^{k-1}} \|A \nu\|_\infty
    \\
    &= \inf_{A \in \mathcal{A}} \max_{i \in [n]} \sup_{\nu \in \sph^{k-1}} |A_i \nu|
    \\
    &= \inf_{A \in \mathcal{A}} \max_{i \in [n]} \|A_i\|_2.
  \end{align*}
  Apply~\autoref{lem:rows_big_max} to lower bound the latter
  quantity. As~\autoref{lem:rows_big_max} applies to any matrix in
  $\mathcal{A}$,
  \begin{align*}
    \inf_{A \in \mathcal{A}} \sup_{i \in [n]} \|A_i\|_2 \geq \sqrt{\frac{k}{n}}.
  \end{align*}

  We next show that there exists a subspace such that equality holds. Take
  $F \in \complex^{n \times k}$ whose columns are the first $k$ columns of the
  DFT matrix, as defined in~\autoref{rmk:dft}. The columns of $F$ are
  orthonormal, so $F \in \mathcal{A}$. Furthermore, each row of $F$ has $\ell_2$
  norm $\sqrt{\frac{k}{n}}$. It follows that
  \begin{align*}
  \inf_{A \in \mathcal{A}} \max_{i \in [n]} \|A_i\|_2 = \sqrt{\frac{k}{n}}.
  \end{align*}
\end{proof}

The proof of~\autoref{thm:typical-coherence-gnn} uses~\autoref{lem:max_vars} and
the following lemma, which bounds the coherence of a random subspace sampled
from the \emph{Grassmannian}. The Grassmanian $\grassmannian_{n,k}$ consists of
all $k$-dimensional subspaces of $\reals^n$~\cite[Ch.~5.2.6]{vershynin2018high}.

\begin{lemma}
  \label{lem:coherence_rand_subspace}
  Let $U \in \complex^{n \times n}$ be a unitary matrix and denote by
  $\mathcal{L} \in \grassmannian_{n,k}$ a subspace distributed uniformly at
  random over $\grassmannian_{n,k}$. With probability at least
  $1 - 2\exp(-\gamma^2)$, $\mathcal{L}$ is $\alpha$-coherent with respect to
  $\|\cdot\|_U$ with
  \begin{align*}
  \alpha \lesssim \sqrt{\frac{k}{n}} + \sqrt \frac{\log(n)}{n} + \frac{\gamma}{\sqrt{ n }}.
  \end{align*}
\end{lemma}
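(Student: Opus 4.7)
The plan is to reduce the coherence of a random subspace to the maximum of $2n$ projection lengths and then apply measure concentration on the Grassmannian. First, I would decompose each row of $U$ into real and imaginary parts: $U_i = a_i + \i b_i$ with $a_i, b_i \in \reals^n$. For any real $x$,
\begin{align*}
  |\ip{U_i, x}|^2 = \ip{a_i, x}^2 + \ip{b_i, x}^2,
\end{align*}
so maximizing over $x \in \mathcal{L}\cap\sphn$ and using $\ip{a_i,x} = \ip{\proj_\mathcal{L} a_i, x}$ (since $x \in \mathcal{L}$) together with Cauchy--Schwarz gives
\begin{align*}
  \sup_{x \in \mathcal{L}\cap\sphn} |\ip{U_i,x}|^2
  \leq \|\proj_\mathcal{L} a_i\|_2^2 + \|\proj_\mathcal{L} b_i\|_2^2.
\end{align*}
Since $U$ is unitary, $\|a_i\|_2^2 + \|b_i\|_2^2 = \|U_i\|_2^2 = 1$, so it suffices to bound the max of $2n$ projection lengths $f_u(\mathcal{L}) := \|\proj_\mathcal{L} u\|_2$ for $u \in \{a_i, b_i : i \in [n]\}$.

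Next, I would control each $f_u$ using measure concentration on $\grassmannian_{n,k}$. For fixed $u \in \reals^n$ with $\|u\|_2 > 0$, $f_u$ is $\|u\|_2$-Lipschitz (because $|f_u(\mathcal{L}_1) - f_u(\mathcal{L}_2)| \leq \|\proj_{\mathcal{L}_1} - \proj_{\mathcal{L}_2}\|\,\|u\|_2$), and rotational invariance of the Haar measure on $\grassmannian_{n,k}$ combined with $\trace \proj_\mathcal{L} = k$ yields $\E f_u(\mathcal{L})^2 = (k/n)\|u\|_2^2$, so $\E f_u(\mathcal{L}) \leq \sqrt{k/n}\,\|u\|_2$ by Jensen. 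Standard Lipschitz concentration on the Grassmannian then gives, for all $s \geq 0$,
\begin{align*}
  \pr\left\{ f_u(\mathcal{L}) \geq \sqrt{k/n}\,\|u\|_2 + s \right\}
  \leq 2\exp\bigl(-Cns^2/\|u\|_2^2\bigr),
\end{align*}
which says that $f_u(\mathcal{L})/\|u\|_2$ is sub-gaussian about $\sqrt{k/n}$ with variance proxy of order $1/n$.

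Now I would invoke \autoref{lem:max_vars} on the collection of $2n$ normalized variables $\{f_{a_i}(\mathcal{L})/\|a_i\|_2, f_{b_i}(\mathcal{L})/\|b_i\|_2 : i \in [n]\}$ to obtain, with probability at least $1 - 2\exp(-\gamma^2)$,
\begin{align*}
  \max_{u \in \{a_i, b_i : i \in [n]\}} \frac{f_u(\mathcal{L})}{\|u\|_2}
  \lesssim \sqrt{\frac{k}{n}} + \sqrt{\frac{\log n}{n}} + \frac{\gamma}{\sqrt n},
\end{align*}
the $\sqrt{(\log n)/n}$ term coming from the union bound over $2n$ events and the $\gamma/\sqrt n$ term from the tail parameter. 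Plugging this back and using $\|a_i\|_2^2 + \|b_i\|_2^2 = 1$ yields $\alpha \lesssim \sqrt{k/n} + \sqrt{(\log n)/n} + \gamma/\sqrt n$, as claimed.

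The main obstacle is the measure-concentration step: one must cite (or verify) the sub-gaussian Lipschitz concentration inequality on the Grassmannian with the correct $n$-dependence in the variance proxy. Everything else---the real/imaginary decomposition, the expectation computation by symmetry, and the union bound---is routine once that ingredient is in hand.
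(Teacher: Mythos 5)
Your argument is correct, but it follows a genuinely different route from the paper's. The paper represents $\mathcal{L}=\range(A)$ for a Gaussian matrix $A\in\reals^{n\times k}$, invokes rotation invariance to reduce to $U=I$, and then bounds each coordinate coherence $\alpha_i=\sup_{y\in\sph^{k-1}}|A_iy|/\|Ay\|_2$ by conditioning on a smallest-singular-value event for the denominator (via~\cite[Theorem~4.6.1]{vershynin2018high}) and using norm concentration of the Gaussian row $A_i$ (via~\cite[Theorem~3.1.1]{vershynin2018high}) for the numerator, before finishing with~\autoref{lem:max_vars} exactly as you do. You instead keep the complex unitary $U$ throughout, split each row as $U_i=a_i+\i b_i$, bound $\sup_{x\in\mathcal{L}\cap\sphn}|\ip{U_i,x}|^2\leq\|\proj_{\mathcal{L}}a_i\|_2^2+\|\proj_{\mathcal{L}}b_i\|_2^2$ with $\|a_i\|_2^2+\|b_i\|_2^2=1$, and control the $2n$ projection norms by Lipschitz concentration on $\grassmannian_{n,k}$ (mean at most $\sqrt{k/n}\,\|u\|_2$, sub-gaussian fluctuations of order $\|u\|_2/\sqrt{n}$), which is indeed standard (see~\cite[Sec.~5.2.2 and the Grassmannian exercise]{vershynin2018high}); your Lipschitz bound, the computation $\E\|\proj_{\mathcal{L}}u\|_2^2=(k/n)\|u\|_2^2$ by rotational invariance of Haar measure, and the application of~\autoref{lem:max_vars} to $2n$ variables (costing only $\log(2n)\lesssim\log n$) are all sound, with the degenerate case $u=0$ handled trivially. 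What your route buys is a cleaner treatment of the complex case --- the paper's reduction to $U=I$ tacitly mixes a real Grassmannian with a complex rotation, whereas your real/imaginary decomposition sidesteps this --- and it avoids the conditioning on the singular-value event; what the paper's route buys is that it needs only concrete Gaussian facts rather than importing abstract concentration of measure on the Grassmannian. Both arguments yield the same bound with the same probability $1-2\exp(-\gamma^2)$.
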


%We now prove the theorem using the lemmas.

\begin{proof}[Proof of~{~\autoref{thm:typical-coherence-gnn}}]
  Let 
  \begin{align*}
    \tilde{G}(x) = \sigma(W^{(d-1)} \cdots \sigma(W^{(1)}x))
  \end{align*}
  so that $G(x) = W^{(d)} \tilde{G}(x)$.
  Then,
  \begin{align*}
    \Delta(\range(G) - \range(G))
    &= \Delta \left(W^{(d)} (\range(\tilde G) - \range(\tilde G))\right)
    \\
    &= W^{(d)} \Delta \left( \range(\tilde G) - \range(\tilde G)\right).
  \end{align*}
  By~\autoref{lem:upper-bound-number-of-cones}
  and~\autoref{rmk:Delta-properties},
  $\Delta(\range(\tilde{G}) -\range(\tilde{G}))$ is the union of $M$ at-most
  $2k$-dimensional linear subspaces with
  \begin{align*}
    \log M \leq 2k \sum_{i = 1}^{d-1} \log \left( \frac{2ek_{i}}{k} \right).
  \end{align*}
  Each subspace $\mathcal{L}$ is uniformly distributed on
  $\grassmannian_{n,\dim\mathcal{L}}$, where $\dim\mathcal{L} \leq 2k$, because
  the final weight matrix has iid Gaussian entries independent of the other
  weight matrices (\eg see~\cite[Ch.~3.3.2]{vershynin2018high}). Enumerate the
  subspaces from $1$ to $M$ (independently of $W^{(d)}$). Then,
  applying~\autoref{lem:coherence_rand_subspace}, we see that the coherence of
  subspace $i$ is a random variable $\alpha_i$ such that
  \begin{align*}
    \alpha_i %
    \lesssim \sqrt{\frac{k}{n}} + \sqrt \frac{\log(n)}{n} + \frac{\gamma}{\sqrt{ n }}
  \end{align*}
  with probability at least $1-2\exp(- \gamma^2)$.
  
  Let $\alpha$ be the coherence of $\Delta(\range(G) - \range(G))$ and observe
  that $\alpha \leq \max_{i \in [M]} \alpha_i$. Applying~\autoref{lem:max_vars},
  \begin{align*}
    \alpha %
    \leq \max_{i \in [M]}\alpha_i \lesssim \sqrt{\frac{k}{n}} + \sqrt \frac{\log(n)}{n} + \sqrt{\frac{\log M}{n}} + \frac{\gamma}{\sqrt{n}}
  \end{align*}
  with probability at least $1-2\exp(-\gamma^2)$.
 \end{proof} 
  
We next prove~\autoref{lem:coherence_rand_subspace}.

\begin{proof}[Proof of~{~\autoref{lem:coherence_rand_subspace}}]
  Let $A \in \reals^{n\times k}$ with $A_{ij} \iid \mathcal{N}(0,1)$. Then,
  $\mathcal{L} := \range(A)$ is a random subspace uniformly distributed over
  $\grassmannian_{n,k}$. By rotation invariance of the Grassmannian, it suffices
  to show the result for $U = I$. Let $\{e_i\}_{i \in [n]}$ denote the canonical
  basis. Define
  \begin{align*}
    \alpha := \sup_{v \in \mathcal{L} \cap \sphn} \max_{i \in [n]} \left| \ip{e_i, v}\right|,
  \end{align*}
  and note that $\mathcal{L}$ is $\alpha$-coherent with respect to
  $\| \cdot \|_I = \| \cdot \|_\infty$. For each $i\in[n]$, we next analyze
  \begin{align}
    \label{eq:single_vec_coherence}
    \alpha_i %
    := \sup_{v \in \mathcal{L} \cap \sphn } |\langle e_i, v \rangle| %
    = \sup_{y \in \sph^{k-1}} \left|\frac{A_{i}y}{\|Ay\|_2}\right|.
  \end{align}
  We will show, with probability at least $1-4\exp(-s^2)$,
  \begin{align*}
    \alpha_i \lesssim \sqrt{ \frac{k}{n} }  + \frac{s}{\sqrt{ n }}.
  \end{align*}

  To see why this result should hold, we focus our attention on the right hand
  side of~\eqref{eq:single_vec_coherence}. The denominator concentrates around
  $\sqrt{ n }$ and the numerator is bounded by $\|A_{i}\|_2$, which concentrates
  around $\sqrt{ k }$ with subgaussian tails.

  We first obtain a lower bound on the smallest singular value of $A$
  via~\cite[Theorem~4.6.1]{vershynin2018high}, which guarantees with probability
  at least $1-2\exp(-t^2)$ that
  \begin{align*}
    \inf_{y \in \sph^{k-1}} \|Ay\|_2 \geq \sqrt{n} - C \sqrt{k} - Ct.
  \end{align*}
  By fixing $t=\frac{\sqrt{ n }}{2C}$ we define the event
  \begin{align*}
    B %
    := \left\{ \inf_{y \in \sph^{k-1}}\|Ay\|_2 \leq \frac{\sqrt{ n }}{2} - C\sqrt{ k } \right\}
  \end{align*}
  satisfying $\pr \left\{B\right\} \leq 2\exp(-c n)$. We first limit $s$ so that
  $\sqrt{ k }+s \leq \frac{\sqrt{ n }}{2} - C\sqrt{ k }$, which implies that
  $s < C\sqrt{ n }$. Then
  \begin{align*}
    & \pr\left\{\sup_{y \in \sph^{k-1}}  \frac{A_{i}y}{\|Ay\|_2} > \frac{\sqrt{ k }+s}{\frac{\sqrt{ n }}{2} - C\sqrt{ k }}\right\}	
    \\
    & = \pr\left\{\sup_{y \in \sph^{k-1}}  \frac{A_{i}y}{\|Ay\|_2} > \frac{\sqrt{ k }+s}{\frac{\sqrt{ n }}{2} - C\sqrt{ k }} \Big| B \right\}\pr\left\{B\right\}
    \\ 
    &+ \pr\left\{\sup_{y \in \sph^{k-1}}  \frac{A_{i}y}{\|Ay\|_2} > \frac{\sqrt{ k }+s}{\frac{\sqrt{ n }}{2} - C\sqrt{ k }} \Big| B^c\right\}\pr \left\{B^c \right\}
    \\
    & \leq\pr\{B\} + \pr\left\{\sup_{y \in \sph^{k-1}}  A_{i}y  > \sqrt{ k } + s \right\}
    \\
    & \leq 2\exp(-cn) + \pr\left\{\|A_{i}\|_{2}  > \sqrt{ k } + s \right\}
    \\
    & \leq 2\exp(-cn) + 2\exp(-cs^2).
  \end{align*}
  Above, we used the concentration of the norm of Gaussian
  vectors~\cite[Theorem~3.1.1]{vershynin2018high}. Since $s \leq C \sqrt{ n }$,
  $\exp(-cn) \leq \exp(-cs^2)$. From this we find the desired subgaussian tail
  bound:
  \begin{align*}
    \pr\left\{\sup_{y \in \sph^{k-1}}  %
    \frac{A_{i}y}{\|Ay\|_2} %
    > \frac{\sqrt{ k }+s}{\frac{\sqrt{ n }}{2} - C\sqrt{ k }}\right\}	
    \leq 4\exp(-cs^2).
  \end{align*}
  The remaining values of $s$ satisfy
  $\sqrt{ k }+s > \frac{\sqrt{ n }}{2} - C\sqrt{ k }$. Therefore, since
  $\sup_{y \in \sph^{k-1}}\frac{A_{i}y}{\|Ay\|_2} \leq 1$,
  \begin{align*}
    \pr\left\{%
    \sup_{y \in \sph^{k-1}}  \frac{A_{i}y}{\|Ay\|_2} %
    > \frac{\sqrt{ k }+s}{\frac{\sqrt{ n }}{2} - C\sqrt{ k }}\right\}	%
    = 0 %
    \leq 4\exp(-cs^2).
  \end{align*}
  Therefore, the subgaussian bound applies for all values of $s$.

  We now scale $s$ by an absolute constant with $\gamma = c s$. Then
  \begin{align*}
    \alpha_i %
    = \sup_{y \in \sph^{k-1}}  \left |\frac{A_{i}y}{\|Ay\|_2}\right | %
    > \frac{\sqrt{ k }+s}{\frac{\sqrt{ n }}{2} - C\sqrt{ k }} %
    \gtrsim \sqrt{\frac{k}{n}} + \frac{\gamma}{\sqrt{n}}
  \end{align*}
  with probability less than $2\exp(-\gamma^2)$. Changing the constant from 4 to
  2 in the probability bound is achieved by suitable choice of $c$. Remembering
  that $\alpha$ is the coherence with the canonical basis, we
  apply~\autoref{lem:max_vars} to find,
  \begin{align*}
    \alpha %
    = \max_{i \in [n]} \alpha_i %
    \lesssim \sqrt{\frac{k}{n}}+\sqrt{\frac{\log n}{n}} + \frac{\gamma}{\sqrt{n}}
  \end{align*}
  with probability at least $1 - 2\exp(-\gamma^2)$.
\end{proof}

\section{Conclusion}
\label{sec:conclusion}

In this work, we have proved a restricted isometry property for a subsampled
isometry with GNN structural proxy, \autoref{thm:G-RIP}. We used this to prove
sample complexity and recovery bounds, \autoref{thm:subIso-GCS}. The recovery
bound stated in~\autoref{thm:subIso-GCS} is uniform over ground truth signals,
and permits a more finely tuned nonuniform control as discussed
in~\autoref{rmk:modelling-error}. To our knowledge, this provides the first
theory for generative compressed sensing with subsampled isometries and
non-random weights.

Our results rely on the notion of $\alpha$-coherence with respect to the
measurement norm, introduced in~\autoref{def:coherence}
and~\autoref{def:measurement-norm}, respectively. Closely related to the notion
of incoherent bases~\cite[p.~373]{foucart2017mathematical} and the $X$-norm
of~\cite{rudelson2008sparse}, we argue that $\alpha$-coherence is a natural
quantity to measure the interplay between a GNN and the measurement
operator. Indeed, in~\autoref{sec:numerics} we propose a regularization strategy
for promoting favourable coherence of GNNs during training, and connect this
strategy with favourable recovery efficacy. Specifically, we show that our
regularization strategy yields low coherence GNNs with improved sample
complexity for recovery~(\autoref{fig:MNIST_random_comp}). Moreover, our
numerics support that low coherence GNNs achieve better sample complexity than
high coherence GNNs~(\autoref{fig:MNIST-recovery-dependence}).

We suspect the $\Omega(k^2d^2)$ dependence in the sample complexity of our
analysis is sub-optimal, and a consequence of our coherence-based approach.
Ignoring logarithmic factors, it is an open question to prove recovery
guarantees with $\Omega(kd)$ Fourier measurements and non-random weights, which
would match the number of (sub-)Gaussian measurements
needed~\cite{bora2017compressed} and would also match the known worst-case lower
bound~\cite{liu2020information}. In addition, it is an open problem to improve
the regularization strategy for lowering coherence, possibly including middle
layers. Finally, it is open to determine a notion of coherence for networks that
have a final nonlinear activation function, and to characterize how this impacts
recovery efficacy for such networks.

\section*{Acknowledgement}

The authors would like to thank Ben Adcock for finding an error in an early
version of this manuscript.

\bibliographystyle{ieeetr}
\bibliography{gcs-subIso}

\begin{thebibliography}{10}

\bibitem{bora2017compressed}
A.~Bora, A.~Jalal, E.~Price, and A.~G. Dimakis, ``Compressed sensing using
  generative models,'' in {\em International Conference on Machine Learning},
  pp.~537--546, 2017.

\bibitem{scarlett2022theoretical}
J.~Scarlett, R.~Heckel, M.~R. Rodrigues, P.~Hand, and Y.~C. Eldar,
  ``Theoretical perspectives on deep learning methods in inverse problems,''
  {\em arXiv preprint arXiv:2206.14373}, 2022.

\bibitem{kumar2015source}
R.~Kumar, H.~Wason, and F.~J. Herrmann, ``Source separation for simultaneous
  towed-streamer marine acquisition - a compressed sensing approach,'' {\em
  Geophysics}, vol.~80, no.~6, pp.~WD73--WD88, 2015.

\bibitem{herrmann2012fighting}
F.~J. Herrmann, M.~P. Friedlander, and {\"O}.~Yilmaz, ``Fighting the curse of
  dimensionality: Compressive sensing in exploration seismology,'' {\em IEEE
  Signal Processing Magazine}, vol.~29, no.~3, pp.~88--100, 2012.

\bibitem{adcock2021compressive}
B.~Adcock and A.~C. Hansen, {\em Compressive Imaging: Structure, Sampling,
  Learning}.
\newblock Cambridge University Press, Cambridge, UK, 2021.

\bibitem{lustig2008compressed}
M.~Lustig, D.~L. Donoho, J.~M. Santos, and J.~M. Pauly, ``Compressed sensing
  mri,'' {\em IEEE Signal Processing Magazine}, vol.~25, no.~2, pp.~72--82,
  2008.

\bibitem{foucart2017mathematical}
S.~Foucart and H.~Rauhut, {\em A Mathematical Introduction to Compressive
  Sensing}.
\newblock Applied and Numerical Harmonic Analysis, Birkhäuser, New York, NY,
  2013.

\bibitem{jalal2021robust}
A.~Jalal, M.~Arvinte, G.~Daras, E.~Price, A.~G. Dimakis, and J.~Tamir, ``Robust
  compressed sensing {MRI} with deep generative priors,'' {\em Advances in
  Neural Information Processing Systems}, vol.~34, pp.~14938--14954, 2021.

\bibitem{kingma2013auto}
D.~P. Kingma and M.~Welling, ``Auto-encoding variational {Bayes},'' {\em arXiv
  preprint arXiv:1312.6114}, 2013.

\bibitem{goodfellow2014generative}
I.~Goodfellow, J.~Pouget-Abadie, M.~Mirza, B.~Xu, D.~Warde-Farley, S.~Ozair,
  A.~Courville, and Y.~Bengio, ``Generative adversarial nets,'' in {\em
  Advances in Neural Information Processing Systems}, pp.~2672--2680, 2014.

\bibitem{radford2015unsupervised}
A.~Radford, L.~Metz, and S.~Chintala, ``Unsupervised representation learning
  with deep convolutional generative adversarial networks,'' {\em arXiv
  preprint arXiv:1511.06434}, 2015.

\bibitem{dirksen2015tail}
S.~Dirksen, ``Tail bounds via generic chaining,'' {\em Electronic Journal of
  Probability}, vol.~20, 2015.

\bibitem{rudelson2008sparse}
M.~Rudelson and R.~Vershynin, ``On sparse reconstruction from {Fourier} and
  {Gaussian} measurements,'' {\em Communications on Pure and Applied
  Mathematics: A Journal Issued by the Courant Institute of Mathematical
  Sciences}, vol.~61, no.~8, pp.~1025--1045, 2008.

\bibitem{donoho2003optimally}
D.~L. Donoho and M.~Elad, ``Optimally sparse representation in general
  (nonorthogonal) dictionaries via $\ell^1$ minimization,'' {\em Proceedings of
  the National Academy of Sciences}, vol.~100, no.~5, pp.~2197--2202, 2003.

\bibitem{candes2011probabilistic}
E.~J. Cand{\`e}s and Y.~Plan, ``A probabilistic and {RIPless} theory of
  compressed sensing,'' {\em IEEE Transactions on Information Theory}, vol.~57,
  no.~11, pp.~7235--7254, 2011.

\bibitem{berk2021deep}
A.~Berk, ``Deep generative demixing: Error bounds for demixing subgaussian
  mixtures of {Lipschitz} signals,'' in {\em ICASSP 2021-2021 IEEE
  International Conference on Acoustics, Speech and Signal Processing
  (ICASSP)}, pp.~4010--4014, IEEE, 2021.

\bibitem{liu2022non}
J.~Liu and Z.~Liu, ``Non-iterative recovery from nonlinear observations using
  generative models,'' {\em arXiv preprint arXiv:2205.15749}, 2022.

\bibitem{hand2019global}
P.~Hand and V.~Voroninski, ``Global guarantees for enforcing deep generative
  priors by empirical risk,'' {\em IEEE Transactions on Information Theory},
  vol.~66, no.~1, pp.~401--418, 2019.

\bibitem{joshi2021plugin}
B.~Joshi, X.~Li, Y.~Plan, and {\"O}.~Yilmaz, ``{PLUGIn}: A simple algorithm for
  inverting generative models with recovery guarantees,'' {\em Advances in
  Neural Information Processing Systems}, vol.~34, 2021.

\bibitem{joshi2019global}
P.~Hand and B.~Joshi, ``Global guarantees for blind demodulation with
  generative priors,'' in {\em Advances in Neural Information Processing
  Systems}, vol.~32, pp.~11535--11545, 2019.

\bibitem{hand2018phase}
P.~Hand, O.~Leong, and V.~Voroninski, ``Phase retrieval under a generative
  prior,'' {\em Advances in Neural Information Processing Systems}, vol.~31,
  2018.

\bibitem{deora2020structure}
P.~Deora, B.~Vasudeva, S.~Bhattacharya, and P.~M. Pradhan, ``Structure
  preserving compressive sensing {MRI} reconstruction using generative
  adversarial networks,'' in {\em Proceedings of the IEEE/CVF Conference on
  Computer Vision and Pattern Recognition Workshops}, pp.~522--523, 2020.

\bibitem{mardani2018deep}
M.~Mardani, E.~Gong, J.~Y. Cheng, S.~S. Vasanawala, G.~Zaharchuk, L.~Xing, and
  J.~M. Pauly, ``Deep generative adversarial neural networks for compressive
  sensing {MRI},'' {\em IEEE Transactions on Medical Imaging}, vol.~38, no.~1,
  pp.~167--179, 2018.

\bibitem{li2022fast}
W.~Li, A.~Zhu, Y.~Xu, H.~Yin, and G.~Hua, ``A fast multi-scale generative
  adversarial network for image compressed sensing,'' {\em Entropy}, vol.~24,
  no.~6, p.~775, 2022.

\bibitem{wentz2022genmod}
J.~Wentz and A.~Doostan, ``{GenMod}: A generative modeling approach for
  spectral representation of {PDEs} with random inputs,'' {\em arXiv preprint
  arXiv:2201.12973}, 2022.

\bibitem{ulyanov2018deep}
D.~Ulyanov, A.~Vedaldi, and V.~Lempitsky, ``Deep image prior,'' in {\em
  Proceedings of the IEEE Conference on Computer Vision and Pattern
  Recognition}, pp.~9446--9454, 2018.

\bibitem{heckel2019deep}
R.~Heckel and P.~Hand, ``{Deep Decoder}: Concise image representations from
  untrained non-convolutional networks,'' in {\em International Conference on
  Learning Representations}, 2019.

\bibitem{darestani2021accelerated}
M.~Z. Darestani and R.~Heckel, ``Accelerated {MRI} with un-trained neural
  networks,'' {\em IEEE Transactions on Computational Imaging}, vol.~7,
  pp.~724--733, 2021.

\bibitem{candes2006robust}
E.~J. Cand{\`e}s, J.~Romberg, and T.~Tao, ``Robust uncertainty principles:
  Exact signal reconstruction from highly incomplete frequency information,''
  {\em IEEE Transactions on Information Theory}, vol.~52, no.~2, pp.~489--509,
  2006.

\bibitem{donoho2006compressed}
D.~L. Donoho, ``Compressed sensing,'' {\em IEEE Transactions on Information
  Theory}, vol.~52, no.~4, pp.~1289--1306, 2006.

\bibitem{rauhut2010compressive}
H.~Rauhut, ``Compressive sensing and structured random matrices,'' in {\em
  Theoretical foundations and numerical methods for sparse recovery},
  pp.~1--92, de Gruyter, 2010.

\bibitem{bourgain2014improved}
J.~Bourgain, ``An improved estimate in the restricted isometry problem,'' in
  {\em Geometric aspects of functional analysis}, pp.~65--70, Springer, 2014.

\bibitem{haviv2017restricted}
I.~Haviv and O.~Regev, ``The restricted isometry property of subsampled
  {Fourier} matrices,'' in {\em Geometric aspects of functional analysis},
  pp.~163--179, Springer, 2017.

\bibitem{chkifa2018polynomial}
A.~Chkifa, N.~Dexter, H.~Tran, and C.~G. Webster, ``Polynomial approximation
  via compressed sensing of high-dimensional functions on lower sets,'' {\em
  Mathematics of Computation}, vol.~87, no.~311, pp.~1415--1450, 2018.

\bibitem{brugiapaglia2021sparse}
S.~Brugiapaglia, S.~Dirksen, H.~C. Jung, and H.~Rauhut, ``Sparse recovery in
  bounded {R}iesz systems with applications to numerical methods for {PDEs},''
  {\em Applied and Computational Harmonic Analysis}, vol.~53, pp.~231--269,
  2021.

\bibitem{naderiPlanSparsityFree}
A.~Naderi and Y.~Plan, ``Sparsity-free compressed sensing with generative
  priors as special case,'' 2022.
\newblock Unpublished manuscript.

\bibitem{candes2007sparsity}
E.~J. Cand{\`e}s and J.~Romberg, ``Sparsity and incoherence in compressive
  sampling,'' {\em Inverse problems}, vol.~23, no.~3, p.~969, 2007.

\bibitem{cape2019two}
J.~Cape, M.~Tang, and C.~E. Priebe, ``The two-to-infinity norm and singular
  subspace geometry with applications to high-dimensional statistics,'' {\em
  The Annals of Statistics}, vol.~47, no.~5, pp.~2405--2439, 2019.

\bibitem{liaw2017simple}
C.~Liaw, A.~Mehrabian, Y.~Plan, and R.~Vershynin, ``A simple tool for bounding
  the deviation of random matrices on geometric sets,'' in {\em Geometric
  aspects of functional analysis}, pp.~277--299, Springer, 2017.

\bibitem{jeong2020sub}
H.~Jeong, X.~Li, Y.~Plan, and {\"O}.~Yilmaz, ``Sub-gaussian matrices on sets:
  Optimal tail dependence and applications,'' {\em arXiv preprint
  arXiv:2001.10631}, 2020.

\bibitem{vershynin2018high}
R.~Vershynin, {\em High-dimensional Probability: An Introduction with
  Applications in Data Science}.
\newblock Cambridge University Press, Cambridge, UK, 2018.

\bibitem{naderi2021beyond}
A.~Naderi and Y.~Plan, ``Beyond independent measurements: General compressed
  sensing with {GNN} application,'' {\em arXiv preprint arXiv:2111.00327},
  2021.

\bibitem{serra2018bounding}
T.~Serra, C.~Tjandraatmadja, and S.~Ramalingam, ``Bounding and counting linear
  regions of deep neural networks,'' in {\em International Conference on
  Machine Learning}, pp.~4558--4566, PMLR, 2018.

\bibitem{novak2018sensitivity}
R.~Novak, Y.~Bahri, D.~A. Abolafia, J.~Pennington, and J.~Sohl-Dickstein,
  ``Sensitivity and generalization in neural networks: An empirical study,''
  {\em arXiv preprint arXiv:1802.08760}, 2018.

\bibitem{deng2012mnist}
L.~Deng, ``The {MNIST} database of handwritten digit images for machine
  learning research,'' {\em IEEE Signal Processing Magazine}, vol.~29, no.~6,
  pp.~141--142, 2012.

\bibitem{kingma2014adam}
D.~P. Kingma and J.~Ba, ``Adam: A method for stochastic optimization,'' {\em
  arXiv preprint arXiv:1412.6980}, 2014.

\bibitem{innes2018flux}
M.~Innes, ``Flux: Elegant machine learning with julia,'' {\em Journal of Open
  Source Software}, vol.~3, no.~25, p.~602, 2018.

\bibitem{code-repo}
A.~Berk, S.~Brugiapaglia, B.~Joshi, Y.~Plan, M.~Scott, and O.~Yilmaz, ``{subIso
  GCS},'' {\em {GitHub repository}}, 2022.
\newblock \url{https://github.com/babhrujoshi/GNN-with-sub-Fourier-paper}.

\bibitem{kingma2019introduction}
D.~P. Kingma and M.~Welling, ``An introduction to variational autoencoders,''
  {\em Foundations and Trends in Machine Learning}, vol.~12, no.~4,
  pp.~307--392, 2019.

\bibitem{virtanen2020scipy}
P.~Virtanen, R.~Gommers, T.~E. Oliphant, M.~Haberland, T.~Reddy, D.~Cournapeau,
  E.~Burovski, P.~Peterson, W.~Weckesser, J.~Bright, S.~J. {van der Walt},
  M.~Brett, J.~Wilson, K.~J. Millman, N.~Mayorov, A.~R.~J. Nelson, E.~Jones,
  R.~Kern, E.~Larson, C.~J. Carey, {\.I}.~Polat, Y.~Feng, E.~W. Moore,
  J.~{VanderPlas}, D.~Laxalde, J.~Perktold, R.~Cimrman, I.~Henriksen, E.~A.
  Quintero, C.~R. Harris, A.~M. Archibald, A.~H. Ribeiro, F.~Pedregosa, P.~{van
  Mulbregt}, and {SciPy 1.0 Contributors}, ``{SciPy 1.0}: Fundamental
  algorithms for scientific computing in {Python},'' {\em Nature Methods},
  vol.~17, no.~3, pp.~261--272, 2020.

\bibitem{adcock2022sparse}
B.~Adcock, S.~Brugiapaglia, and C.~G. Webster, {\em Sparse Polynomial
  Approximation of High-Dimensional Functions}.
\newblock SIAM, Philadelphia, PA, 2022.

\bibitem{liu2020information}
Z.~Liu and J.~Scarlett, ``Information-theoretic lower bounds for compressive
  sensing with generative models,'' {\em IEEE Journal on Selected Areas in
  Information Theory}, 2020.

\bibitem{tropp2012user}
J.~A. Tropp, ``User-friendly tail bounds for sums of random matrices,'' {\em
  Foundations of Computational Mathematics}, vol.~12, no.~4, pp.~389--434,
  2012.

\bibitem{boucheron2013concentration}
S.~Boucheron, G.~Lugosi, and P.~Massart, {\em Concentration Inequalities: A
  Nonasymptotic Theory of Independence}.
\newblock Oxford University Press, Oxford, UK, 2013.

\bibitem{jacques2013robust}
L.~Jacques, J.~N. Laska, P.~T. Boufounos, and R.~G. Baraniuk, ``Robust 1-bit
  compressive sensing via binary stable embeddings of sparse vectors,'' {\em
  IEEE Transactions on Information Theory}, vol.~59, no.~4, pp.~2082--2102,
  2013.

\bibitem{cover1965geometrical}
T.~M. Cover, ``Geometrical and statistical properties of systems of linear
  inequalities with applications in pattern recognition,'' {\em IEEE
  Transactions on Electronic Computers}, vol.~EC-14, no.~3, pp.~326--334, 1965.

\bibitem{flatto1970new}
L.~Flatto, ``A new proof of the transposition theorem,'' {\em Proceedings of
  the American Mathematical Society}, vol.~24, no.~1, pp.~29--31, 1970.

\bibitem{bauschke2011convex}
H.~H. Bauschke and P.~L. Combettes, {\em Convex analysis and monotone operator
  theory in Hilbert spaces}, vol.~408.
\newblock Springer, 2011.

\end{thebibliography}

\clearpage
\begin{center}
  {\Large\textbf{Supplementary material}}
\end{center}
\setcounter{section}{0}

\makeatletter{}
\renewcommand\thesection{S\@arabic\c@section}
\makeatother{}

This supplementary material contains auxiliary results from high-dimensional
probability (\autoref{sec:results-from-hdp}); a characterization of the range of
a generative network and an elaboration on properties of the operator $\Delta$
(\autoref{sec:characterize-range-G} and~\autoref{sec:cvxcone-and-Delta}
respectively); a discussion concerning control of the approximation error
(\autoref{sec:control-approx-error}) as it pertains to
\autoref{rmk:modelling-error}; and an epilogue (\autoref{sec:epilogue}), whose
purpose is to clarify the role of our mathematical argument in its determination
of the sample complexity.

\section{Results from high-dimensional probability}
\label{sec:results-from-hdp}

\subsection{Matrix Bernstein inequality} 

For this result, see~\cite[Theorem~6.1]{tropp2012user}.

\begin{lemma}[Matrix Bernstein inequality]
  \label{lem:matrix-bernstein}
  Let $X_{1}, \ldots, X_{N} \in \complex^{n\times n}$ be independent, mean-zero,
  self-adjoint random matrices, such that $\|X_{i}\| \leq K$ almost surely for
  all $i$. Then, for every $\gamma \geq 0$ we have
  \begin{align*}
    \pr \left\{ \left\| \sum_{i = 1}^{N} X_{i} \right\| \geq \gamma \right\} %
    & \leq 2n \exp \left( - \frac{\gamma^{2}/2}{\tau^{2} + K\gamma /3} \right)
    \\
    & \leq 2n \exp \left( - c \cdot \min \left( \frac{\gamma^{2}}{\tau^{2}}, \frac{\gamma}{K} \right) \right).
  \end{align*}
  Here, $\tau^{2} := \left\| \sum_{i = 1}^{N} \E X_{i}^2 \right\|$. 
\end{lemma}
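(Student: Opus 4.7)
The statement is the classical Matrix Bernstein inequality, a non-commutative analogue of the scalar Bernstein inequality. Since the paper cites Tropp's survey, the expected ``proof'' is really a pointer. Nevertheless, my plan would be to follow the Ahlswede--Winter / Tropp Laplace transform strategy, which is by now the standard route. The overall scheme mirrors the scalar Cram\'er--Chernoff argument, but replaces the usual factorization of the MGF (which fails for non-commuting matrices) with a deep concavity theorem of Lieb.

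First I would reduce the operator-norm bound to a one-sided eigenvalue bound. Since each $X_i$ is self-adjoint, so is $S := \sum_i X_i$, and $\|S\| = \max(\lambda_{\max}(S), \lambda_{\max}(-S))$. A union bound over the two signs reduces the problem to controlling $\pr\{\lambda_{\max}(S) \geq \gamma\}$. Then I would invoke the matrix Laplace transform bound: for any $\theta > 0$,
\begin{align*}
\pr\{\lambda_{\max}(S) \geq \gamma\}
\leq e^{-\theta \gamma}\, \E\, \tr \exp(\theta S),
\end{align*}
which follows from Markov plus the trivial inequality $e^{\theta \lambda_{\max}(S)} \leq \tr e^{\theta S}$.

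The next, and crucial, step is to split the matrix MGF across the sum. This is where Lieb's concavity theorem enters: $A \mapsto \tr \exp(H + \log A)$ is concave on the positive-definite cone for any fixed Hermitian $H$. Iterating Jensen's inequality based on Lieb's theorem gives the subadditivity
\begin{align*}
\E\, \tr \exp(\theta S)
\leq \tr \exp\!\left( \textstyle\sum_{i=1}^N \log \E e^{\theta X_i} \right).
\end{align*}
This is the main obstacle if one wishes to prove the inequality from scratch, since Lieb's theorem is nontrivial. I would treat it as a black box.

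With subadditivity in hand, I would bound each matrix cumulant $\log \E e^{\theta X_i}$. Using the Taylor expansion $e^{\theta X} = I + \theta X + \sum_{p \geq 2} (\theta X)^p/p!$, together with $\|X_i\| \leq K$ and the inequality $X_i^p \preceq K^{p-2} X_i^2$ for $p \geq 2$, one obtains $\E e^{\theta X_i} \preceq \exp\!\left( g(\theta)\, \E X_i^2 \right)$ with $g(\theta) = \theta^2/(2(1 - K\theta/3))$, valid for $0 < \theta < 3/K$. Summing, using operator monotonicity of $\log$, and recalling $\tau^2 = \|\sum_i \E X_i^2\|$, I would conclude
\begin{align*}
\E\, \tr \exp(\theta S) \leq n \exp\!\left( g(\theta)\, \tau^2 \right).
\end{align*}
Substituting into the Laplace bound gives the exponent $\theta \gamma - g(\theta)\tau^2$, which I would optimize by the standard Bernstein choice $\theta = \gamma/(\tau^2 + K\gamma/3)$. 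Plugging back yields the first inequality in the statement, and the second follows by splitting according to which of $\tau^2$ or $K\gamma/3$ dominates the denominator.
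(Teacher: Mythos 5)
Your outline is correct and is exactly the argument behind the result the paper cites (Tropp's Theorem 6.1): the paper gives no proof of its own, only the pointer, and your Laplace-transform-plus-Lieb sketch, with the Bernstein MGF bound and the choice $\theta = \gamma/(\tau^{2} + K\gamma/3)$, is the standard route that the reference follows. Nothing further is needed.
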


\subsection{Cram\'er-Chernoff bound}
\label{sec:cram-chern-bound}

For a reference to this material,
see~\cite[Ch.~2.2]{boucheron2013concentration}. Let $Z$ be a real-valued random
variable. Then,
\begin{align*}
  \pr\left\{ Z \geq t \right\} %
  &\leq \exp ( - \psi_{Z}^{*} (t)),
  & %
    \psi_{Z}^{*}(t) %
  &:= \sup_{ \lambda \geq 0} \lambda t - \psi_{Z}(\lambda).
\end{align*}
The latter quantity, $\psi_{Z}^{*}(t)$, is the Cram\'er transform of $Z$, with
\begin{align*}
  \psi_{Z}(\lambda) := \log \E \exp(\lambda Z), \quad \lambda \geq 0
\end{align*}
being the logarithm of the moment generating function of $Z$. When $Z$ is
centered, $\psi_Z$ is continuously differentiable on an interval of the form
$[0,b)$ and $\psi_Z(0) = \psi'_Z(0) = 0$. Thus,
\begin{align*}
\psi^*_Z(t) = \lambda_t t - \psi_Z(\lambda_t),
\end{align*}
where $\lambda_t$ is such that $\psi'_Z(\lambda_t) = t$.

For a centered binomial random variable $Z := Y - np$ where
$Y \sim \DistBinom(n, p)$, the Cram\'er transform of $Z$ is given by
\begin{align*}
  \psi_{Z}^{*}(t) %
  &:= n h_{p} \left(  \frac{t}{n} + p \right), \quad \forall 0 < t < n ( 1- p), 
\end{align*}
where
\begin{align*}
  h_{p}(a) := (1 - a) \log \frac{1-a}{1-p} + a \log \frac{a}{p}
\end{align*}
is the Kullback-Leibler divergence $\kldiv{P_{a}}{P_{p}}$ between Bernoulli
distributions with parameters $a$ and $p$. One may thus
establish~\cite[Ch.~2.2]{boucheron2013concentration} the following concentration
inequality for $Y$ when $0 < t < \frac{n}{m}$:
\begin{align*}
  \pr\left\{ Y \geq tm \right\} %
  = \pr\left\{ Z \geq (t - 1)m \right\} %
  \leq \exp(-n h_{p}(tp)).
\end{align*}

\subsection{Auxiliary union bound}
\label{sec:aux-union-bound}

\begin{lemma}
  \label{lem:max_vars}
  Let $f:\reals\to \reals$ be an increasing function. Let $\{X_i\}_{i \in [n]}$
  be a collection of random variables such that for each $i \in [n]$ and for any
  $\gamma > 0$,
  \begin{align*}
    X_{i} \leq f(\gamma)
  \end{align*}
  with probability at least $1 - 2\exp(-\gamma^2)$. Then, for all $\gamma > 0$,
  \begin{align*}
    \max_{i \in [n]} X_{i}
    \leq f\left(c\left(\gamma + \sqrt{ \log n }\right)\right)
  \end{align*}
  with probability at least $1 - 2\exp(-\gamma^2)$.
\end{lemma}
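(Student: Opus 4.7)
The plan is to apply a standard union bound, inflating the deviation parameter passed to $f$ just enough to absorb the resulting $\log n$ factor. Since each $X_i$ individually satisfies a tail bound of the form $\Pr\{X_i > f(\gamma')\} \leq 2\exp(-(\gamma')^2)$, summing these over $i \in [n]$ produces a factor of $n$ that must be compensated by choosing a larger argument $\gamma'$ of $f$. Monotonicity of $f$ lets us convert a slightly larger argument into a valid upper bound.

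Concretely, the steps would be: first, for a parameter $\gamma' > 0$ to be chosen, apply the hypothesis to each $X_i$ and take a union bound to obtain
\begin{align*}
\pr\left\{ \max_{i \in [n]} X_i > f(\gamma') \right\} \leq \sum_{i=1}^{n} \pr\{X_i > f(\gamma')\} \leq 2n \exp(-(\gamma')^2).
\end{align*}
Second, set $\gamma' := \sqrt{\gamma^2 + \log n}$, so that the right-hand side simplifies exactly to $2\exp(-\gamma^2)$. Third, use the elementary inequality $\sqrt{\gamma^2 + \log n} \leq \gamma + \sqrt{\log n}$ together with monotonicity of $f$ to conclude
\begin{align*}
f(\gamma') \leq f\bigl(\gamma + \sqrt{\log n}\bigr) \leq f\bigl(c(\gamma + \sqrt{\log n})\bigr)
\end{align*}
for any $c \geq 1$. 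Combining these three observations gives the claim.

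There is no real obstacle here; the argument is a textbook union bound. The only minor subtlety is the conversion $\sqrt{\gamma^2 + \log n} \leq \gamma + \sqrt{\log n}$, which is immediate upon squaring and invoking nonnegativity of the cross term $2\gamma\sqrt{\log n}$. The constant $c$ appearing in the statement is cosmetic and may be taken to equal $1$; it is included presumably to allow flexibility when the lemma is composed with bounds stated only up to absolute constants, as it is in the proofs of \autoref{thm:typical-coherence-gnn} and \autoref{lem:coherence_rand_subspace}.
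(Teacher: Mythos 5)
Your proof is correct and follows essentially the same route as the paper's: a union bound over the $n$ variables followed by a reparametrization of the tail parameter and monotonicity of $f$. Your forward substitution $\gamma' = \sqrt{\gamma^2 + \log n}$ together with $\sqrt{\gamma^2+\log n} \leq \gamma + \sqrt{\log n}$ is in fact slightly cleaner than the paper's backward substitution $t = \sqrt{\gamma^2 - \log n}$ (which forces the bound $\gamma \leq 2(t+\sqrt{\log n})$ and hence $c=2$), and it shows one may indeed take $c=1$.
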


\begin{proof}[{Proof of~\autoref{lem:max_vars}}]
  By assumption,
  \begin{align*}
    \pr\left\{\max_{i \leq [n]} X_{i} > f(\gamma)\right\}
    &= \pr\left\{ \bigcup_{i \in [n]} \{X_i > f(\gamma) \}\right\}
    \\
    &\leq n 2  \exp(-\gamma^2) 
    \\
    &= 2\exp(-\gamma^2 + \log n).
  \end{align*}
  Let $t := \sqrt{\gamma^2 - \log n}$. We substitute $\gamma^2 \to t^2 + \log n$
  in the right hand side of the equation above which yields
  \begin{align*}
    \pr\left\{\max_{i \leq [n]} X_{i} > f(\gamma)\right\}
    \leq 2\exp(-t^2).
  \end{align*}
  To substitute the remaining $\gamma$ on the left hand side, first notice
  \begin{align*} 
    % \gamma^2 &= t^2 + \log n \leq 2 \max\left(t^2, \log n\right)
    % \\
    % \implies 
    \gamma &\leq 2\max \left(t, \sqrt{\log n}\right) \leq 2 \left(t + \sqrt{\log n}\right).
  \end{align*}
  Consequently, $f(\gamma) \leq f\left(2\left(t + \sqrt{\log n}\right)\right)$ and
  \begin{align*}
    \pr \left\{ \max_{i \in [n]} X_i > f\left(2\left(t + \sqrt{\log n}\right)\right) \right\} 
    & \leq \pr \left\{ \max_{i \in [n]} X_i > f(\gamma) \right\} 
    \\
    & \leq 2 \exp(-t^2)
  \end{align*}
  Re-labelling $t \to \gamma$ yields the result.
  % $$
  % \max_{i \leq [n]} X_{i} \leq f(\gamma) \leq f(\gamma' + \sqrt{ \log n })
  % $$
  % with probability at least $1-2\exp(-c\gamma’^2).$
\end{proof}

\section{\texorpdfstring{Characterizing $\range(G)$}{Characterizing range(G)}}
\label{sec:characterize-range-G}

For completeness, we include a characterization of the geometry of the range of
generative neural networks with ReLU activation. This material is not novel; for
instance, see~\cite{naderi2021beyond}. It requires the notion of a
low-dimensional cone.

\begin{definition}
  A convex set $\mathcal{C} \subseteq \reals^{n}$ is \emph{at-most
    $k$-dimensional} for some $k \in [n]$ if there exists a linear subspace
  $E \subseteq \reals^{n}$ with $\dim (E) \leq k$ and $\mathcal{C} \subseteq E$.
\end{definition}

\begin{remark}
  A cone $\mathcal{C} \subseteq \reals^{n}$ is at-most $k$-dimensional if its
  linear hull, $\Span(\mathcal{C})$, is no greater than $k$-dimensional.
\end{remark}

A key idea in our recovery guarantees will be to cover $\range(G)$ using at-most
$k$-dimensional cones. To this end, it will be important to bound the number $N$
of at-most $k$-dimensional cones in the covering set. This comes down to
quantifying the number of different ways that ReLU can act on a given subspace,
which reduces to counting the number of orthants that subspace intersects. The
following result may be found in~\cite[App.~A~Lemma~1]{jacques2013robust}.

\begin{lemma}[orthant-crossings]
  \label{lem:orthant-crossings}
  Let $S \subseteq \reals^{n}$ be a $k$-dimensional subspace. Then $S$
  intersects at most $I(n, k)$ different orthants where
  \begin{align}
    \label{eq:orthants}
    I(n, k) \leq 2^{k} \binom{n}{k} \leq 2^{k} \left(  \frac{en}{k}\right)^{k}. 
  \end{align}
\end{lemma}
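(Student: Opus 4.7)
The plan is to translate orthant-counting into chamber-counting for a central hyperplane arrangement in $\reals^k$, and then invoke a classical combinatorial bound. Fix any matrix $A \in \reals^{n \times k}$ whose columns form a basis of $S$, so every $x \in S$ is uniquely represented as $x = A\lambda$ for some $\lambda \in \reals^k$. The point $x$ lies in the open orthant indexed by $\epsilon \in \{-1,+1\}^n$ precisely when $\epsilon_i \ip{a_i, \lambda} > 0$ for all $i \in [n]$, where $a_i^\top$ denotes the $i$-th row of $A$. Hence the open orthants met by $S$ are in bijection with the full-dimensional chambers of the central hyperplane arrangement
\begin{align*}
  \mathcal{A} := \{H_i : i \in [n]\}, \qquad H_i := \{\lambda \in \reals^k : \ip{a_i, \lambda} = 0\}.
\end{align*}

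Next, I would apply Schl\"afli's chamber-counting bound: any arrangement of $n$ hyperplanes in $\reals^k$ has at most $\sum_{j=0}^{k} \binom{n}{j}$ full-dimensional chambers. This is proved by induction on $n$ via the recursion $r(n,k) \leq r(n-1,k) + r(n-1,k-1)$, reflecting the fact that a newly added hyperplane splits only those chambers it crosses, and the split chambers are in bijection with the chambers of the induced $(k-1)$-dimensional arrangement on the new hyperplane.

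The final step is to dominate the Schl\"afli sum by $2^k \binom{n}{k}$. For $k \leq n/2$, each summand satisfies $\binom{n}{j} \leq \binom{n}{k}$ for $0 \leq j \leq k$, giving $\sum_{j=0}^k \binom{n}{j} \leq (k+1) \binom{n}{k} \leq 2^k \binom{n}{k}$. For $k > n/2$, the trivial bound $I(n,k) \leq 2^n$ together with the elementary inequality $2^n \leq 2^k \binom{n}{k}$ in that regime finishes the job. The second claimed inequality $\binom{n}{k} \leq (en/k)^k$ is the standard Stirling-type estimate.

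No step presents a serious obstacle, but the mildest technical point is the possible degeneracy $a_i = 0$ for some $i$ (equivalently $S \subseteq \{x_i = 0\}$): then $S$ meets no open orthant and the bound holds vacuously, so one may reduce to the case in which every $H_i$ is a genuine hyperplane through the origin in $\reals^k$, where the above reduction applies directly.
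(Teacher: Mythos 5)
Your argument is correct, but note that the paper does not prove this lemma at all: it imports it from \cite[App.~A~Lemma~1]{jacques2013robust} and only remarks that the tight count $2\sum_{\ell=0}^{k-1}\binom{n-1}{\ell}$ goes back to Cover and Flatto. What you supply is a self-contained proof along the classical lines underlying those references: parametrize $S$ by $A\in\reals^{n\times k}$, identify the open orthants met by $S$ with the realizable strict sign patterns of the linear forms $\lambda\mapsto\ip{a_i,\lambda}$, i.e.\ with the chambers of the central arrangement $\{H_i\}$ in $\reals^k$, and then invoke the Schl\"afli/Buck bound $\sum_{j=0}^{k}\binom{n}{j}$, which you correctly dominate by $2^k\binom{n}{k}$ in both regimes ($(k+1)\le 2^k$ when $k\le n/2$; $2^{n-k}\le\binom{n}{n-k}$ when $k>n/2$). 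Two small points are worth making explicit. First, the bijection with chambers uses that each strict sign cell $\{\lambda:\epsilon_i\ip{a_i,\lambda}>0\ \forall i\}$ is convex, hence connected, hence a single chamber; in fact for the inequality you only need the cheaper direction that distinct orthants met give distinct (disjoint, nonempty) unions of chambers. Second, your reading of ``orthants'' as open orthants (equivalently, orthants whose interiors are met) is the right one and is the reading under which the lemma is true: with closed orthants counted by nontrivial intersection the statement fails already for a coordinate axis in $\reals^n$, which touches all $2^n$ closed orthants while $2^k\binom{n}{k}=2n$. Compared with using the sharper central-arrangement count (which would reproduce Cover's tight bound $2\sum_{\ell\le k-1}\binom{n-1}{\ell}$), your use of the general affine Schl\"afli bound is slightly lossier but entirely sufficient for \eqref{eq:orthants}, and it keeps the induction elementary.
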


Note that previous work~\cite{cover1965geometrical, flatto1970new} has
established the bound
\begin{align*}
  I(n, k) \leq 2 \sum_{\ell = 0}^{k-1} \binom{n-1}{\ell}, 
\end{align*}
which is tight~\cite{cover1965geometrical}. Accordingly, the extent of
non-tightness of the bound~\eqref{eq:orthants} may be evaluated directly for
specific choices of $n$ and $k$. The upper bound on $N$ is an immediate
consequence. We also require the notion of a polyhedral cone, used
in~\autoref{lem:upper-bound-number-of-cones} below. Note that a version of the
lemma may be found in~\cite{naderi2021beyond}.

\begin{definition}[Polyhedral cone]
  \label{def:polyhedral-cone}
  A cone $\mathcal{C} \subseteq \reals^n$ is called polyhedral if it is the
  conic combination of finitely many vectors. Equivalently, $\mathcal{C}$ is
  polyhedral if it is the intersection of a finite number of half-spaces that
  have the origin on their boundary.
\end{definition}

\begin{lemma}
  \label{lem:upper-bound-number-of-cones}
  Let $G$ be a $(k,d,n)$-generative network with layer widths
  $k=k_0\leq k_1, \ldots, k_d$ where $k_d = n$. Then $\range(G)$ is a union of
  no more than $N$ at-most $k$-dimensional polyhedral cones where
  \begin{align*}
    N %
    &:= \prod_{\ell = 1}^{d-1} I(k_{\ell}, k) %
    \leq \left( \frac{2e \bar k}{k}  \right)^{k(d-1)}, %
    %\leq \left(\frac{2 e n}{k} \right)^{k(d-1)}, %
    \\
    \bar k &:= \left( \prod_{\ell = 1}^{d-1} k_{\ell}\right)^{1/(d-1)}. 
  \end{align*}
\end{lemma}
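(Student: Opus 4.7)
The plan is to proceed by induction on the depth $\ell \in \{0, 1, \ldots, d\}$, tracking how $\range(G_\ell)$ evolves as successive layers are applied, where $G_\ell$ denotes the network truncated after the $\ell$-th layer (with $G_0 = \id_{\reals^k}$ and $G_d = G$). The inductive hypothesis will be that, after $\ell$ ReLU layers have been applied, $\range(G_\ell)$ can be written as a union of at most $\prod_{j=1}^{\ell} I(k_j, k)$ polyhedral cones, each of dimension at most $k$. The base case $\ell = 0$ is trivial since $\reals^k$ itself is a single $k$-dimensional polyhedral cone.

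For the inductive step, suppose $\range(G_{\ell-1}) = \bigcup_{i} \mathcal{C}_i$ is such a union. I would first observe that applying the linear map $W^{(\ell)}$ sends each polyhedral cone $\mathcal{C}_i$ to $W^{(\ell)}\mathcal{C}_i$, which is again a polyhedral cone of dimension at most $k$ (a linear image of a finite conic combination is a finite conic combination, and the ambient dimension of the image is bounded by $\dim \mathcal{C}_i \leq k$). Next, applying $\sigma$: since each such image $W^{(\ell)}\mathcal{C}_i$ lies in a linear subspace $\Span(W^{(\ell)}\mathcal{C}_i)$ of $\reals^{k_\ell}$ with dimension at most $k$, \autoref{lem:orthant-crossings} guarantees this subspace crosses at most $I(k_\ell, k)$ orthants of $\reals^{k_\ell}$. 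On each orthant, $\sigma$ is a fixed coordinate projection (hence linear), so $\sigma(W^{(\ell)}\mathcal{C}_i)$ is a union of at most $I(k_\ell, k)$ polyhedral cones, each still of dimension at most $k$ since $\sigma$ cannot increase dimension. The total count therefore multiplies by at most $I(k_\ell, k)$, completing the induction.

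Since the final layer $W^{(d)}$ is only linear (no ReLU follows in~\autoref{def:generative-network}), it does not further split the cones; it only maps each to a polyhedral cone of dimension at most $k$ in $\reals^n$. Thus $\range(G) = W^{(d)} \range(G_{d-1})$ is a union of at most $N := \prod_{\ell=1}^{d-1} I(k_\ell, k)$ at-most-$k$-dimensional polyhedral cones, which is the first claim.

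For the closed-form bound, I would substitute the second estimate from \eqref{eq:orthants}, $I(k_\ell, k) \leq (2e k_\ell / k)^k$, to obtain
\begin{align*}
N \leq \prod_{\ell=1}^{d-1} \left(\frac{2e k_\ell}{k}\right)^k = \left(\frac{2e}{k}\right)^{k(d-1)} \prod_{\ell=1}^{d-1} k_\ell^k = \left(\frac{2e \bar k}{k}\right)^{k(d-1)},
\end{align*}
by the definition of the geometric mean $\bar k$. The main conceptual step is the per-layer argument combining the linear-image and orthant-crossing facts, which is routine once framed correctly; the likeliest pitfall is simply bookkeeping — ensuring that ``at-most $k$-dimensional'' is preserved through both the linear map and the piecewise-linear ReLU action, and that the product is indexed over $\ell = 1, \ldots, d-1$ rather than $\ell = 1, \ldots, d$, reflecting the absence of a ReLU after the final weight matrix.
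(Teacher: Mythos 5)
Your proof is correct and follows essentially the same route as the paper's: both arguments track the cones layer by layer, use that linear images and orthant intersections of polyhedral cones are polyhedral cones of dimension at most $k$, invoke \autoref{lem:orthant-crossings} once per ReLU layer (giving the factor $I(k_\ell,k)$ for $\ell=1,\dots,d-1$), and note that the final linear map $W^{(d)}$ does not increase the count. Your explicit induction merely spells out what the paper's proof states more tersely.
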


\begin{proof}[Proof of~{\autoref{lem:upper-bound-number-of-cones}}]
  The proof is similar to the one given in~\cite{naderi2021beyond}, repeated
  here for completeness.

  First, if $\mathcal{C} \subseteq \reals^{\tilde n}$ is a polyhedral cone of
  dimension $k \leq \tilde n$ and $L : \reals^{\tilde n} \to \reals^{n'}$ is a
  linear map with $n' \in \nats$ then $L\mathcal{C}$ is a polyhedral cone with
  dimension at most $\min\{k, n'\}$. Likewise, the intersection of a collection
  of polyhedral cones is a polyhedral cone. Now, if
  $(Q_{i})_{i \in [2^{\tilde n}]}$ are the orthants of $\reals^{\tilde n}$ with
  $Q_{1}$ being the nonnegative orthant, observe that $\proj_{Q_{1}}$ is linear
  on each orthant $Q_i$, $i \in [2^{\tilde n}]$. Consequently,
  \begin{align*}
    \sigma(\mathcal{C}) %
    = \bigcup\limits_{i=1}^{2^{\tilde n}} \proj_{Q_{1}} \left( \mathcal{C} \cap Q_{i} \right)
  \end{align*}
  is a union of polyhedral cones. Since the domain $\reals^{k}$ of $G$ is a
  polyhedral cone, it follows that $\sigma\left(W^{(1)} \reals^k\right)$ is a
  union of polyhedral cones, and continuing by induction, $\range(G)$ is a union
  of polyhedral cones. By the rank-nullity theorem each component cone has
  dimension at most $k$.

  We next argue for the bound $N$ on the number of cones comprising the
  range. Consider that the final mapping $W^{(d)}$ cannot increase the number of
  cones. Thus, it suffices to consider the map
  $\sigma(W^{(d-1)}\cdots \sigma(W^{(1)} u))$. We need only count the number of
  new intersected subspaces that could be generated from each ReLU operation. As
  there are $d-1$ in total, we obtain
  \begin{align*}
    \prod_{\ell = 1}^{d-1} I(k_{\ell}, k) %
    \leq 2^{(d-1) k} \prod_{\ell = 1}^{d-1}  \left( \frac{e k_{\ell}}{k} \right)^{k} %
    = \left( \frac{2e \bar k}{k}  \right)^{k(d-1)}, %
  \end{align*}
  where $\bar k$ is the geometric mean
  $\bar k := \left( \prod_{\ell = 1}^{d-1} k_{\ell}\right)^{1/(d-1)}$.%
\end{proof}

\begin{remark}
  \label{rmk:log-subspace-number-expression}
  A simple calculation shows that
  \begin{align*}
    \log N \leq k \sum_{i = 1}^{d-1} \log \left( \frac{2ek_{i}}{k} \right). 
  \end{align*}
  \vskip-12pt
\end{remark}

\begin{remark}
  \label{rmk:extend-to-range-difference}
  If $G$ is a $d$ layer neural network with ReLU activation then $G(x) - G(y)$
  can be written as $\bar G(x, y) := G(x) - G(y)$ where
  $\bar G : \reals^{2k} \to \reals^{n}$ is a $d$-layer neural network whose
  $i$th weight matrix is
  \begin{align*}
    \bmat{ll}{%
    W^{(i)} & \mathbf{0}_{k_{i} \times k_{i-1}} \\
    \mathbf{0}_{k_{i} \times k_{i-1}}  & W^{(i)}
    }
  \end{align*}
  for $i = 1, \ldots, d-1$ where $\mathbf{0}_{m, n}$ is the $m\times n$ zero
  matrix. The $d$th weight matrix of $\bar G$ is
  $\bmat{rr}{W^{(d)} & -W^{(d)}}$. In particular, the layer widths of $\bar G$
  are $2k_{0} = 2k, 2k_{1}, \ldots, 2k_{d-1}, k_{d} = n$, so
  applying~\autoref{lem:upper-bound-number-of-cones} to $\bar G$ gives
  \begin{align*}
    N(\bar G) %
    &\leq \left( \frac{2e\bar k}{k} \right)^{2k(d-1)}, %
    & %
      \log N(\bar G) %
    & \leq 2k \sum_{i=1}^{d-1} \log \left( \frac{2ek_{i}}{k} \right),
  \end{align*}
  where $\bar k$ is the geometric mean of the layer widths of \emph{the original
    network} $G$.
\end{remark}

\begin{remark}
  \label{rmk:typical-coherence-random-weights-with-bias}
  We argue that~\autoref{thm:typical-coherence-gnn} extends to GNNs with
  biases. That the result holds for GNNs with arbitrary fixed biases in all but
  the last layer follows from the representation of a biased GNN with augmented
  matrices, see~\autoref{rmk:networks-with-biases}. Allowing for an arbitrary
  fixed (or random) bias in the last layer then follows, because
  $\range(G)- \range(G)$ is invariant to changes in the bias of the last layer,
  since it is affine.
\end{remark}

\section{\texorpdfstring{Convex cones and the operator $\Delta$}{Convex cones
    and the operator Delta}}
\label{sec:cvxcone-and-Delta}

For the following proposition, see~\cite[Proposition~6.4]{bauschke2011convex}.

\begin{proposition}[Cone difference is subspace]
  \label{prop:cone-difference-is-subspace}
  Let $\mathcal{K} \subseteq \reals^{n}$ be a convex cone. Then
  $\mathcal{K} - \mathcal{K} = \Span \mathcal{K}$.
\end{proposition}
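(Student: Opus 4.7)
The plan is to establish the two set inclusions $\mathcal{K} - \mathcal{K} \subseteq \Span \mathcal{K}$ and $\Span \mathcal{K} \subseteq \mathcal{K} - \mathcal{K}$ separately. The forward inclusion is immediate: any difference $x - y$ with $x, y \in \mathcal{K}$ is a real linear combination of two elements of $\mathcal{K}$, hence lies in the linear span.

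For the reverse inclusion I would first record the auxiliary fact that any finite conic combination $\sum_{i=1}^{N} \lambda_i x_i$ with $\lambda_i \geq 0$ and $x_i \in \mathcal{K}$ again belongs to $\mathcal{K}$. This follows by induction on $N$ via the decomposition $\lambda x + \mu y = (\lambda + \mu)\bigl(\tfrac{\lambda}{\lambda+\mu} x + \tfrac{\mu}{\lambda+\mu} y\bigr)$ when $\lambda + \mu > 0$: the inner convex combination lies in $\mathcal{K}$ by convexity, and multiplying by the nonnegative scalar $\lambda + \mu$ keeps it in $\mathcal{K}$ by the cone property. The degenerate case $\lambda + \mu = 0$ reduces to checking that $0 \in \mathcal{K}$, which holds since $0 = 0 \cdot x$ for any $x \in \mathcal{K}$ by the cone property.

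Granted this, take an arbitrary $z \in \Span \mathcal{K}$ and write $z = \sum_{i=1}^{N} \alpha_i x_i$ with $\alpha_i \in \reals$ and $x_i \in \mathcal{K}$. Split the indices into $I_+ := \{i : \alpha_i \geq 0\}$ and $I_- := \{i : \alpha_i < 0\}$, and set
\begin{align*}
  u := \sum_{i \in I_+} \alpha_i x_i, \qquad v := \sum_{i \in I_-} (-\alpha_i) x_i.
\end{align*}
Both $u$ and $v$ are conic combinations of elements of $\mathcal{K}$, so by the auxiliary fact both lie in $\mathcal{K}$, and $z = u - v \in \mathcal{K} - \mathcal{K}$, which completes the inclusion.

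There is no real obstacle here; the entire argument is a routine unpacking of definitions. The only subtlety worth being explicit about is handling the degenerate inductive step $\lambda + \mu = 0$, which is why it is worth noting up front that $0 \in \mathcal{K}$.
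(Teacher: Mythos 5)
Your argument is correct. Note, however, that the paper does not actually prove this statement: it simply cites \cite[Proposition~6.4]{bauschke2011convex}, so your self-contained elementary verification is a legitimate replacement rather than a variant of an in-paper argument. The two inclusions are handled exactly as one would expect: $\mathcal{K}-\mathcal{K}\subseteq\Span\mathcal{K}$ is definitional, and the reverse inclusion follows from the (correctly proved) fact that a convex cone is closed under finite conic combinations, after splitting a spanning combination into its nonnegative and negative parts. The only point worth flagging is your reliance on $0\in\mathcal{K}$ (via $0=0\cdot x$ and via empty sums when $I_+$ or $I_-$ is empty): this is automatic under the convention that a cone is closed under multiplication by \emph{nonnegative} scalars, but under the alternative convention (closure only under strictly positive scalars) one would instead absorb the issue by writing, e.g., $z = x - (v+x)$ for some $x\in\mathcal{K}$, using closure of a convex cone under addition. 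For the cones actually used in the paper --- polyhedral cones arising as images of orthant pieces under linear maps, which contain the origin by construction --- your argument applies verbatim, and the empty-cone degeneracy is irrelevant.
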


\begin{remark}
  \label{rmk:Delta-properties}
  Below we list several properties about $\Delta$. Let
  $\mathcal{C} = \bigcup\limits_{i=1}^N\mathcal{C}_i$ be the union of
  $N\in\nats$ convex cones $\mathcal{C}_{i}$.
  \begin{enumerate}[leftmargin=12pt]
  \item The latter equality in~\autoref{def:minimal-subspace-covering} follows
    from~\autoref{prop:cone-difference-is-subspace}.
  \item The set $\Delta(\mathcal{C})$ is uniquely defined. In particular, it is
    independent of the (finite) decomposition of $\mathcal{C}$ into convex
    cones.
  \item If $\max_{i \in [N]}\dim \mathcal{C}_{i} \leq k$, then
    $\Delta(\mathcal{C})$ is a union of no more than $N$ at-most $k$-dimensional
    linear subspaces.
  \item The set $\Delta(\mathcal{C})$ satisfies
    $\mathcal{C} \subseteq \Delta(\mathcal{C}) \subseteq
    \mathcal{C}-\mathcal{C}$.
  \item There are choices of $\mathcal{C}$ for which
    $\mathcal{C} \subsetneq \Delta(\mathcal{C})$ (for instance, refer to the
    example at the end of this section).
\end{enumerate}
\vspace{-12pt}
\end{remark}

\begin{proof}[Proof of uniqueness of $\Delta(\mathcal{C})$]
  We begin with the following lemma.
  \begin{lemma}
    \label{lem:span-component}
    Let $N$, $M$ be positive integers, and let
    $\mathcal{C}_1, \mathcal{C}_2, \ldots, \mathcal{C}_N$ be convex cones in
    $\reals^M$ such that their union $\mathcal{C}$ is also convex. Then, there
    exists $i \in [N]$ such that $\Span(\mathcal{C}) = \Span(\mathcal{C}_i)$.
  \end{lemma}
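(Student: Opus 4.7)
The plan is to argue by contradiction: if no single $\mathcal{C}_i$ spans the same subspace as $\mathcal{C}$, then $\mathcal{C}$ is contained in a finite union of proper subspaces of $V := \Span(\mathcal{C})$, which would contradict the fact that $\mathcal{C}$ has nonempty (relative) interior inside $V$.

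First, I would observe that $\mathcal{C}$ is itself a convex cone: it is convex by hypothesis, and it is a cone as the union of cones (if $x \in \mathcal{C}_i$ and $\lambda \geq 0$ then $\lambda x \in \mathcal{C}_i \subseteq \mathcal{C}$). In particular $0 \in \mathcal{C}$, so the affine hull of $\mathcal{C}$ coincides with its linear span $V$. If $V = \{0\}$, then every $\mathcal{C}_i = \{0\}$ and the claim is trivial, so I would assume $\dim V \geq 1$. Set $V_i := \Span(\mathcal{C}_i) \subseteq V$ for each $i \in [N]$.

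Next, I would invoke the standard fact that a nonempty convex subset of its affine hull has nonempty relative interior (\eg \cite[Theorem~2.3.8]{bauschke2011convex}). Applied to $\mathcal{C}$ inside $V$, this gives $\text{relint}(\mathcal{C}) \neq \emptyset$ in $V$, so $\mathcal{C}$ contains a relatively open subset $O \subseteq V$. Now suppose for contradiction that $V_i \subsetneq V$ for every $i \in [N]$. Each proper linear subspace $V_i$ of $V$ has empty interior in $V$; since a finite union of closed sets with empty interior still has empty interior (for instance, by Baire category, or directly since any nonempty open set in $V$ contains points outside each $V_i$), the inclusion
\begin{align*}
 O \;\subseteq\; \mathcal{C} \;=\; \bigcup_{i=1}^N \mathcal{C}_i \;\subseteq\; \bigcup_{i=1}^N V_i
\end{align*}
would force $O$ to be empty, a contradiction.

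Therefore some $i \in [N]$ satisfies $V_i = V$, that is, $\Span(\mathcal{C}_i) = \Span(\mathcal{C})$. The main (minor) obstacle is simply citing or verifying the two standard facts used: a convex set has nonempty relative interior in its affine hull, and a finite union of proper subspaces of $V$ has empty interior in $V$. Both are classical, so I do not anticipate any real technical difficulty beyond assembling them as above.
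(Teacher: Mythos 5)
Your proof is correct, but it takes a different route from the paper's. The paper argues directly with a measure: it notes that a convex cone $\mathcal{B}$ has positive Gaussian measure if and only if $\Span(\mathcal{B})$ is the whole space, so the convex union $\mathcal{C}$ has positive measure within its span, and by finite subadditivity some $\mathcal{C}_i$ must also have positive measure, hence full span; the general case is handled by restricting the Gaussian measure to the subspace $\mathcal{S}=\Span(\mathcal{C})$. You instead argue topologically and by contradiction: $\mathcal{C}$ has nonempty relative interior in $V=\Span(\mathcal{C})$, while a finite union of proper subspaces $V_i\subsetneq V$ has empty interior in $V$, so $\mathcal{C}\subseteq\bigcup_i V_i$ is impossible unless some $V_i=V$. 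The two arguments exploit the same underlying fact --- a convex set is ``thick'' in its span, whereas each non-spanning piece is ``thin'' --- but quantify it differently (positive Gaussian measure versus nonempty relative interior). Your version avoids any probabilistic machinery and treats the non-full-span case uniformly by working relative to $V$ from the outset, at the cost of invoking the relative-interior fact and the empty-interior property of finite unions of proper subspaces; the paper's version is a direct (non-contradiction) argument but must introduce the restricted Gaussian measure for the general case, and its key equivalence ($\mu(\mathcal{B})>0$ iff full span) is itself essentially the relative-interior fact in disguise. One cosmetic remark: double-check the theorem number in your citation of \cite{bauschke2011convex} for the nonempty-relative-interior fact (it is standard in finite dimensions, e.g.\ Rockafellar's Theorem~6.2, but the numbering you give may not match that reference), and note that your parenthetical ``$0\in\mathcal{C}$'' is not needed --- for a nonempty cone the affine hull equals the linear span regardless of whether $0$ is included.
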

  \begin{proof}[Proof of~\autoref{lem:span-component}]
    First, assume $\Span(\mathcal{C}) = \reals^{M}$. Let $\mu$ be the Gaussian
    measure, so that for a (measurable) set $\mathcal{A} \subseteq \reals^M$
    $\mu(\mathcal{A}) = \pr \{z \in \mathcal{A}\}$, where $z \subset \reals^M$
    has independent standard normal entries. One can show that for any convex
    cone $\mathcal{B}$, $\mu(\mathcal{B}) > 0 $ if and only if
    $\Span(\mathcal{B}) = \reals^M$. Then, since $\mathcal{C}$ is convex,
    \begin{align*}
      0 %
      < \mu(\mathcal{C}) %
      = \mu\left(\bigcup_{i=1}^N \mathcal{C}_i\right) %
      \leq \sum_{i=1}^N \mu(\mathcal{C}_i).
    \end{align*}
    Thus, there exists $i \in [N]$ such that $\mu(\mathcal{C}_i) > 0$, whence
    $\Span(\mathcal{C}_i) = \reals^M$. This proves the lemma when
    $\Span(\mathcal{C}) = \reals^M$.
    
    For the general case, where $\Span(\mathcal{C}) = \mathcal{S}$ for some
    subspace $\mathcal{S}$, replace the Gaussian measure on $\reals^M$ by the
    Gaussian measure restricted to that subspace, $\mu_{\mathcal{S}}$, the
    unique measure satisfying
    $\mu_{\mathcal{S}}(\mathcal{A}) = P( \Pi_{\mathcal{S}} (z) \in
    \mathcal{A})$. As above, for any convex cone $\mathcal{B}$, clearly
    $\mu_\mathcal{S}(\mathcal{B}) > 0$ if and only if
    $\Span(\mathcal{B}) = \mathcal{S}$ and the proof proceeds as above.
  \end{proof}
  
  We now proceed with the proof of uniqueness. Given two collections of convex
  cones
  $\mathcal{F}=\{\mathcal{F}_i\}_{i \in [N]}, \mathcal{D} = \{\mathcal{D}_i\}_{i
    \in [N]}$ such that
  $\mathcal{C} = \bigcup_{i \in [N]} \mathcal{F}_{i} = \bigcup_{i \in [N]}
  \mathcal{D}_{i}$, let
  $\Delta_{\mathcal{F}}(C) := \bigcup_{i \in [N]} \Span(\mathcal{F}_{i})$ and
  define $\Delta_{\mathcal{D}}(C)$ similarly. It is sufficient to show that
  $\Delta_{\mathcal{F}}(\mathcal{C}) = \Delta_{\mathcal{D}}(\mathcal{C})$. Note
  that, without loss of generality, we assume that $\mathcal{F}$ and
  $\mathcal{D}$ have the same number of elements since we can take some cones as
  the empty set.

  Fix $i \in [N]$. Note for all
  $j \in [N],\ \mathcal{F}_{i} \cap \mathcal{D}_{j}$ is a convex cone. Then, we
  have
  \begin{align*}
    \mathcal{F}_i %
    = \mathcal{F}_i\cap \mathcal{C} %
    = \mathcal{F}_i \cap \bigcup_{j \in [N]} \mathcal{D}_{j} %
    = \bigcup_{j \in [N]} (\mathcal{F}_i \cap \mathcal{D}_{j}).
  \end{align*}
  It then follows from~\autoref{lem:span-component}, that there is a $j^*$ such
  that $\Span(\mathcal{F}_{i}) = \Span(\mathcal{F}_{i} \cap \mathcal{D}_{j^*})$.
  Further,
  \begin{align*}
    \Span(\mathcal{F}_{i} \cap \mathcal{D}_{j^*}) %
    \subseteq \Span(\mathcal{D}_{j^*}) %
    \subseteq \Delta(\mathcal{D}).
  \end{align*}
  Since this is true for every $i$,
  $\Delta_{\mathcal{F}}(\mathcal{C}) \subseteq
  \Delta_{\mathcal{D}}(\mathcal{C})$. By symmetry,
  $\Delta_{\mathcal{D}}(\mathcal{C}) \subseteq
  \Delta_{\mathcal{F}}(\mathcal{C})$, so
  $\Delta_{\mathcal{D}}(\mathcal{C}) = \Delta_{\mathcal{F}}(\mathcal{C})$.
\end{proof}

We conclude this section by briefly illuminating the effect of the piecewise
linear expansion $\Delta$ using a simple example.
\begin{myexample}
  Define the $(3,2,3)$-generative network:
  \begin{align*}
    G(z) := \sigma \left(W^{(1)} z\right), %
    \qquad %
    W^{(1)} := \bmat{rrr}{%
    1 & 0 & 0 \\
    -1 & 0 & 0 \\
    0 & 0 & 0}. 
  \end{align*}
  Note the second weight matrix for $G$ is the identity matrix. Observe that
  $\range(W^{(1)}) = \{(x,-x,0) : x \in \reals\}$, so that
  \begin{align*}
    \range(G) = \{\gamma e_1 : \gamma \geq 0\} \cup \{\gamma e_2 : \gamma \geq 0\}.
  \end{align*}
  Consequently, it is straightforward to show that
  \begin{align*}
    \Delta(\range(G)) &= \Span \{e_1\} \cup \Span \{e_2\}
    \\
    \range(G) - \range(G) &= \{(x,y,0) : xy \leq 0\}
    \\
    \Delta(\range(G) - \range(G)) &= \Span\{e_1, e_2\}.
  \end{align*}
  These sets clearly satisfy the inclusion chain
  $\range(G) \subset \Delta(\range(G)) \subset \range(G)-\range(G) \subset
  \Delta(\range(G) - \range(G))$. Moreover, note that $\Delta(\range(G))$ is a
  particular subset of $1$-sparse vectors in $\reals^3$ while
  $\Delta(\range(G) - \range(G))$ is a particular subset of $2$-sparse vectors
  in $\reals^3$.
\end{myexample}

\section{Controlling approximation error}
\label{sec:control-approx-error}

Suppose that $A \in \complex^{\tilde m \times n}$ is an
$(m, \theta, U)$-subsampled isometry with associated unitary matrix
$U \in \complex^{n \times n}$. Recall that $U^{*} U = I$ where $I$ is the
identity matrix, and that $\theta_{i} \iid \DistBer\left( \frac{m}{n}
\right)$. Let $x \in \complex^{n}$ be a fixed vector such that $\|x\|_{2} =
1$. We would like to estimate the tail
\begin{align*}
  \pr\left\{ \|Ax\|_{2}^{2} \geq t \right\}
\end{align*}
for $t \geq 1$, so as to imply a high probability bound of the form
$\|Ax\|_2^2 \leq t\|x\|_2^2$ for any fixed $x \in \complex^n$. For technical
reasons yet to be clarified, we also assume $t < n/m$.

\subsection{Reformulation}

Observe that $A^*A = \frac{n}{m} U^*\diag(\theta)U $, implying
\begin{align*}
  \|Ax\|_2^2 %
  = x^*A^*Ax %
  = \frac{n}{m}x^*U \diag(\theta) Ux %
  = \frac{n}{m}\sum_{i=1}^n \theta_i |z_i|^2,
\end{align*}
where $z := Ux$ and $\|z\|_2 = \|Ux\|_2 = 1$. Hence, the problem can be recast
as bounding the following tail:
\begin{align*}
  \pr\left\{\frac{n}{m}\sum_{i=1}^n \theta_i |z_i|^2 \geq t\right\},
\end{align*}
for $t > 0$ and where $z \in \complex^n$ is a fixed vector with $\|z\|_2 = 1$.

\subsection{Special cases}
\subsubsection{Spike}

Assume that $z := e_{1}$ is the first standard basis vector. Then
\begin{align*}
  \pr\left\{ \frac{n}{m}\sum_{i=1}^n \theta_i |z_i|^2 \geq t \right\} %
  = \pr\left\{ \theta_1 \geq t \frac{m}{n} \right\} %
  = \frac{m}{n} \cdot \1\{t \leq n/m\}
\end{align*}
for any $t > 0$. In particular, one cannot expect good concentration of
``spiky'' vectors.

\subsubsection{Flat vector}
\label{sec:flat-vector}

Assume that $z := \mathbf{1} / \sqrt n \in \complex^{n}$ is the scaled all-ones
vector. In this case, we have
\begin{align*}
  \pr\left\{ \frac{n}{m} \sum_{i=1}^{n} \theta_{i} |z_{i}|^{2} \geq t \right\} %
  = \pr\left\{ \sum_{i=1}^{n} \theta_{i} \geq tm \right\}
\end{align*}
which is the tail of a $\DistBinom(n, p)$ random variable where $p := m/n$.

We use the Cram\'er-Chernoff bound (see~\autoref{sec:cram-chern-bound}
and~\cite[Ch.~2.2]{boucheron2013concentration}). For a centered binomial random
variable $Z := Y - np$ where $Y \sim \DistBinom(n, p)$, the Cram\'er transform
of $Z$ is given by, for all $0 < t < n ( 1- p)$,
\begin{align*}
  \psi_{Z}^{*}(t) := n h_{p} \left(  \frac{t}{n} + p \right), 
\end{align*}
where
\begin{align*}
  h_{p}(a) := (1 - a) \log \frac{1-a}{1-p} + a \log \frac{a}{p} 
\end{align*}
is the Kullback-Leibler divergence $\kldiv{P_{a}}{P_{p}}$ between Bernoulli
distributions with parameters $a$ and $p$. Thus, as presented
in~\cite[Ch.~2.2]{boucheron2013concentration},
\begin{align}
  \label{eq:cramer-chernoff}
  \pr\left\{ Y \geq tm \right\} %
  = \pr\left\{ Z \geq (t - 1)m \right\} %
  \leq \exp(-n h_{p}(tp))
\end{align}
where $p = m/ n$. Note that $(t-1)m < n (1-p)$ since $t < n/m$ as assumed at the
very beginning. Now, assuming $t > 1$, we compute
\begin{align*}
  h_{p}(tp) %
  &= (1 - tp) \log \frac{1-tp}{1-p} + tp \log t
  \\
  & = tp \log t - (1 - tp) \log \left( 1+ \frac{(t-1)p}{1 - tp} \right) %
  \\
  & \geq tp \log t - (t-1)p
\end{align*}
using that $\log (1 + x) \leq x$. Simplifying the above expression gives, using
$p = m/n$,
\begin{align}
  \label{eq:kl-lower-bound}
  h_{p}(tp) \geq \frac{m}{n} \left( t \log t -t +1 \right).
\end{align}
We remark, as an aside, that this lower bound is
nonnegative. Combining~\eqref{eq:cramer-chernoff} and~\eqref{eq:kl-lower-bound}
gives
\begin{align*}
  \pr\left\{ \frac{n}{m} \sum_{i=1}^{n} \theta_{i} |z_{i}|^{2} \geq t \right\} %
  \leq \exp \left( - m \left( t \log t -t +1 \right)  \right). 
\end{align*}
The above inequality gives an explicit quantification of how concentration
occurs in the case of a perfectly flat vector.

\subsubsection{``Flattish'' vector}

The inequality extends as follows.

\begin{proposition}
  \label{prop:cramer-chernoff-concentration}
  Let $\xi \in \complex^{n}$ be a fixed vector satisfying $\|\xi\|_{2} = 1$ and
  let $A \in \complex^{\tilde m \times n}$ be a subsampled isometry associated
  to a unitary matrix $U \in \complex^{n \times n}$. Define
  $R := n\cdot \|\xi\|_{U}^{2}$. Then, for all $1 < t < \frac{n}{m}$,
  \begin{align*}
    \pr\left\{ \left| \|A\xi\|_{2}^{2} - 1 \right| \geq t \right\} %
    \leq \exp \left(  - m \left( \frac{t}{R} \log \frac{t}{R} - \frac{t}{R} + 1 \right) \right).
  \end{align*}
\end{proposition}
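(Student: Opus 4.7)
The plan is to adapt the Cram\'er-Chernoff derivation of the preceding ``flat vector'' subsection, replacing the Binomial-specific computation by a general moment generating function (MGF) bound that exploits the coherence constraint $\|\xi\|_U^2 = R/n$. First I would apply the reformulation from the preceding subsection: $\|A\xi\|_2^2 = \tfrac{n}{m}\sum_{i=1}^n \theta_i |z_i|^2$ where $z := U\xi$. Unitarity gives $\|z\|_2 = 1$, and by definition of $R$ we have $\|z\|_\infty^2 = R/n$. The problem thus reduces to an upper tail bound on a weighted sum of independent Bernoulli variables with weights bounded by $R/n$.

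Next I would bound the MGF of $S := \|A\xi\|_2^2$. Writing out the Bernoulli expectation, using the pointwise bound $\exp(\mu n |z_i|^2/m) \leq \exp(\mu R/m)$, taking the product over $i \in [n]$, and applying $1 + x \leq e^x$ gives
\begin{align*}
\E \exp(\mu S) \leq \left[1 + \tfrac{m}{n}\left(e^{\mu R/m} - 1\right)\right]^n \leq \exp\left(m\left(e^{\mu R/m} - 1\right)\right).
\end{align*}
This is a Poisson-type MGF bound with effective rate $m$ and increment $R/m$.

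Then I would apply Markov's inequality and optimize over $\mu \geq 0$: the stationarity condition $s = R\, e^{\mu R/m}$ gives $\mu^* = (m/R)\log(s/R)$ and a Chernoff exponent of $-m[(s/R)\log(s/R) - (s/R) + 1]$, which is the Kullback-Leibler-type expression appearing in the statement. Specializing to the event in the proposition --- observing that for $t > 1$ and $S \geq 0$ the two-sided event $|S - 1| \geq t$ collapses to an upper tail, since $\{S \leq 1 - t\}$ is vacuous --- yields the claimed bound.

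The only nontrivial modelling step is the MGF bound: the crude pointwise inequality $e^{\mu n|z_i|^2/m} \leq e^{\mu R/m}$ discards the constraint $\sum_i|z_i|^2 = 1$, yet it is precisely this looseness which yields the exact algebraic form in the proposition and reduces correctly to the flat-vector computation when $R = 1$. A tighter convexity-based interpolation (using $e^{\lambda x} - 1 \leq (nx/R)(e^{\lambda R/n} - 1)$ for $x \in [0, R/n]$) would redistribute $R$ in the exponent and so would not match the claimed form; recognising which relaxation to use is the main obstacle.
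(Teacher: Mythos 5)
Your proposal is correct and is essentially the paper's own argument: the same reformulation $\|A\xi\|_{2}^{2} = \frac{n}{m}\sum_{i}\theta_{i}|z_{i}|^{2}$ with $z = U\xi$, the same exploitation of the coherence bound $|z_{i}|^{2} \le R/n$, and a Cram\'er--Chernoff computation producing the identical exponent; the only difference is cosmetic, in that you Poissonize the moment generating function directly and optimize over $\mu$, whereas the paper dominates the sum pathwise by $\frac{R}{m}Y$ with $Y \sim \DistBinom(n, m/n)$ and then lower-bounds the binomial rate $h_{p}(tp)$ by $\frac{m}{n}(t\log t - t + 1)$ via $\log(1+x)\le x$ --- two routes to the same Poisson-type bound. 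The one caveat, shared equally by the paper's proof and hence not a gap attributable to you, is that the optimization implicitly requires $t \ge R$ (otherwise $\mu^{*} = \frac{m}{R}\log\frac{t}{R} < 0$ and the restriction to $\mu \ge 0$ yields only the trivial bound), just as the paper silently applies its flat-vector estimate at threshold $t/R$, which was derived under the assumption that this threshold exceeds $1$.
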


The proof of the result follows readily from the discussion for a flat vector
in~\autoref{sec:flat-vector}. Assume instead that $\|z\|_{2} = 1$ and
$\|z\|_{\infty}^{2} \leq R/n$. It is straightforward to show that, for
$t < n /m$,
\begin{align*}
  &\pr\left\{ \frac{n}{m} \sum_{i=1}^{n} \theta_{i} |z_{i}|^{2} \geq t \right\} %
  \\
  & \leq \pr\left\{ \frac{n}{m} \|z\|_{\infty}^{2} \sum_{i=1}^{n} \theta_{i} \geq t \right\} %
  \\
  & = \pr\left\{ \sum_{i=1}^{n}\theta_{i} \geq \frac{tm}{R} \right\} %
  \\
  & \leq \exp \left(  - m \left( \frac{t}{R} \log \frac{t}{R} - \frac{t}{R} + 1 \right) \right).
\end{align*}

\section{Epilogue}
\label{sec:epilogue}

Using a different line of analysis in the vein of~\cite{rudelson2008sparse}
and~\cite[Ch.~4]{dirksen2015tail}, one may proceed under the $\alpha$-coherence
assumption to obtain another set of results that establish restricted isometry
for generative compressed sensing with subsampled isometries, as well as sample
complexity and recovery bounds. Specifically, this line of analysis proceeds via
generic chaining and a careful application of Dudley's inequality, still
leveraging the idea of $\alpha$-coherence. Though this avenue permits one to
weaken the notion of coherence slightly, its proof is significantly more
involved and yields results that are essentially equivalent to the current
work's up to constants. In particular, we conjecture that moving beyond the
$k^{2}$ ``bottleneck'' that we have commented on above
(see~\autoref{sec:conclusion}) is likely to require tools beyond the notion of
$\alpha$-coherence.

%%% Local Variables:
%%% mode: latex
%%% TeX-master: "supp.tex"
%%% reftex-default-bibliography: ("gcs-subIso.bib")
%%% End:

\end{document}